\newcommand{\winIgnore}[0]{}
\newcommand{\twocc}{2-crossing-critical graph\xspace}
\newcommand{\ltwocc}{large \twocc}
\newcommand{\Ltwocc}{\expandafter\MakeUppercase \ltwocc}
\newcommand{\Bracic}{Bračič\xspace}
\newcommand{\Dvorak}{Dvořák\xspace}
\newcommand{\Hlineny}{Hliněný\xspace}
\newcommand{\Siran}{Širáň\xspace}
\newcommand{\Zerak}{Žerak\xspace}
\newcommand{\tw}{\mathrm{tw}}
\newcommand{\crn}{\mathrm{cr}}
\newcommand{\scr}{\mathrm{cr}^\times}
\newcommand{\hg}{\raisebox{.4mm}{\rotatebox[origin=c]{90}{$\bowtie$}}\xspace}
\renewcommand{\emptyset}{\varnothing}
\newcommand{\sig}{\ensuremath{{\it sig}}}
\newcommand{\beautify}[1]{\ensuremath{\emph{#1}\xspace}}
\newcommand{\fL}{\beautify{L}}
\newcommand{\fdL}{\beautify{dL}}
\newcommand{\fd}{\beautify{d}}
\newcommand{\pA}{\beautify{A}}
\newcommand{\pV}{\beautify{V}}
\newcommand{\pB}{\beautify{B}}
\newcommand{\pH}{\beautify{H}}
\newcommand{\pI}{\beautify{I}}
\newcommand{\pD}{\beautify{D}}
\newcommand{\cC}{{\cal C}}
\newcommand{\cT}{{\cal T}}
\newcommand{\cS}{{\mathcal S}}
\newcommand{\bcirc}{\ocircle}
\newcommand{\change}[1]{#1}
\newcommand{\removed}[1]{}
\newcommand{\removedS}[1]{}
\newtheorem{definition}[theorem]{Definition}
\newtheorem{obs}[theorem]{Observation}
\newtheorem{corollary}[theorem]{Corollary}
\newtheorem{question}[theorem]{Question}
\numberwithin{theorem}{section}
\newtheorem{lem}[theorem]{Lemma}
\renewenvironment{lemma}{\begin{lem}}{\end{lem}}
\begin{document}

	\doi{}
	\Issue{0}{0}{0}{0}{0} %

	\title{Properties of Large \removed{$2$}\change{2}-Crossing-Critical Graphs}

	\HeadingAuthor{Bokal et al.} %
	\HeadingTitle{Properties of Large \removed{$2$}\change{2}-Crossing-Critical Graphs} %

	\author[UM]{Drago~Bokal}{drago.bokal@um.si}
	\author[UOS]{Markus~Chimani}{chimani@uos.de}
	\author[UOS]{Alexander~Nover}{anover@uos.de}
	\author[UOS]{Jöran~Schierbaum}{jschierbaum@uos.de}
	\author[UOS]{Tobias~Stolzmann}{tstolzmann@uos.de}
	\author[UOS]{Mirko~H.~Wagner}{mirwagner@uos.de}
	\author[UOS]{Tilo~Wiedera}{twiedera@uos.de}

	\affiliation[UM]{Dep.\ of Mathematics and Computer Science, University of Maribor, Slovenia}
	\affiliation[UOS]{Theoretical Computer Science, Osnabrück University, Germany}

	\submitted{}%
	\reviewed{}%
	\revised{}%
	\accepted{}%
	\final{}%
	\published{}%
	\type{Regular Paper}%
	\editor{}%

	\maketitle

	\begin{abstract}
		A $c$-crossing-critical graph is one that has crossing number at least~$c$ but each of its proper subgraphs has crossing number less than~$c$.
		Recently, a set of explicit construction rules was identified by Bokal, Oporowski, Richter, and Salazar to generate all \emph{large} $2$-crossing-critical graphs (i.e., all apart from a finite set of small sporadic graphs). They share the property of containing a generalized Wagner graph $V_{10}$ as a subdivision.

		In this paper, we study these graphs and establish their order, simple crossing number, edge cover number, clique number, maximum degree, chromatic number, chromatic index, and treewidth.
		We also show that the graphs are linear-time recognizable and that all our proofs lead to efficient algorithms for the above measures.
		\paragraph{Keywords.}
		Crossing number,
		crossing-critical graph,
		chromatic number,
		chromatic index,
		treewidth.
	\end{abstract}

	\Body

	\section{Introduction}
	The first characterization of \emph{planar} graphs is due to Kuratowski in 1930: A graph\change{\footnote{\change{Multiple edges and loops arise naturally in the context of graph embeddings and graph drawings. Hence, in such context, a graph can have multiple edges and loops, and the term simple graph is employed whereever we emphasize that these features are not present. We follow this convention throughout this paper.}}} is planar if and only if it neither contains a subgraph isomorphic to a subdivision of the $K_{3,3}$ nor the $K_5$~\cite{Kur30}.
	This result inspired several characterizations of graphs by forbidden subgraphs, which paved paths into significantly different areas of graph theory.
	Extremal graph theory is concerned with forbidding any subgraph isomorphic to a given graph~\cite{Bol86} and maximizing the number of edges under this constraint.
	Significant structural theory was developed when forbidden \emph{induced} subgraphs were considered instead, for instance several characterizations of Trotter~and~Moore~\cite{TM76} and the remarkable weak and strong perfect graph theorems~\cite{Ree87, Lov72}.
	Wagner coined graph minor theory as another means of characterizing planar graphs~\cite{Wag37}.
	It was later used to extend Kuratowski's theorem to higher surfaces: A seminal result by Robertson and Seymour states that all graphs embeddable into any prescribed surface are characterized by a finite set of forbidden minors~\cite{RS90}.
	While these minors are known for the projective plane~\cite{Arc81}, already on the torus, the number of forbidden minors reaches into tens of thousands and is as of now unknown~\cite{Gag05}.
	Still, Mohar devised an algorithm to embed graphs on surfaces in linear time~\cite{Moh99}, that was later improved by Kawarabayashi, Mohar, and Reed~\cite{Kaw08}.
	Characterizations of graph classes by subdivisions received somewhat less renowned attention.
	Early on the above path, Chartrand, Geller, and Hedetniemi pointed at some common generalizations of forbidding a small complete graph and a corresponding complete bipartite subgraph as a subdivision, resulting in trees, outerplanar, and planar graphs~\cite{CGH71}.
	More recently, \Dvorak achieved a characterization of several other graph classes using forbidden subdivisions~\cite{Dvo08}.

	Another direction to generalize Kuratowski's theorem is the notion of \emph{$c$-crossing critical} graphs, i.e., graphs that require at least $c\in\mathbb{N}$ crossings when drawn in the plane, but each of their subgraphs requires strictly less than~$c$ crossings.
	\change{Allowing crossings in order to increase the degree of freedom rather than adding handles to the surface exhibits a richer
	structure compared to forbidden minors for embeddability on surfaces.}
	Unlike the latter, it allows infinite families of topologically-minimal obstruction graphs,
	as first demonstrated by \Siran~\cite{Sir84}, who constructed an infinite family of
	$3$-connected $c$-crossing-critical graphs for each $c>2$.
	Kochol extended this result to simple, $3$-connected graphs~\cite{Koc87}, for each $c>1$, thus producing the first family of (simple) large $3$-connected $2$-crossing-critical graphs.
	Most importantly for our research is
	Bokal, Oporowski, Salazar, and Richter's \cite{Bok+16} characterization of the \emph{complete} list of minimal forbidden subdivisions for a graph to be realizable in the plane with only one crossing;
	that is, precisely the \emph{$2$-crossing-critical} graphs.
	Bokal, \Bracic, Dernar, and \Hlineny characterized average degrees for infinite families of $2$-crossing-critical graphs w.r.t.\ constraining the vertex-degrees that appear arbitrarily often \cite{Bok+19b}.
	For each restriction, the resulting average degrees form an interval.
	\Hlineny and Korbela showed that if \emph{all} degrees are prescribed, instead of just the frequent ones, the attainable average degrees are no longer intervals, but dense subsets of intervals \cite{HK19}.
	Based upon \cite{Bok+16},
	 Bokal, Vegi-Kalamar, and \Zerak defined a simple regular grammar describing \emph{large} 2-crossing-critical graphs---i.e., all 3-connected 2-crossing-critical graphs
	 except for a finite set of (small) sporadic graphs---and used it for counting Hamiltonian cycles in these graphs \cite{BKZ21}.
	We build upon this grammar to study the graph theoretic properties of \ltwocc{}s.

	We \removedS{may}also \change{briefly} discuss recognizing $c$-crossing-critical graphs. $1$-crossing critical graphs are precisely subdivisions of a $K_5$ or a $K_{3,3}$; they are thus trivial to recognize. For general $c\geq 2$, the problem is fixed-parameter tractable (FPT) w.r.t.~$c$: Grohe~\cite{Gro04} first showed that there is an algorithm to recognize graphs with $\crn(G)\leq c$ in time $\mathcal{O}(\mathit{poly}(|V(G)|)\cdot f(c))$ for some (at least doubly exponential but computable) function $f$. Kawarabayashi and Reed~\cite{KR07} improved this FPT-algorithm to an only linear dependency on $|V(G)|$. Despite the fact that these algorithms are infeasible in practice, they can theoretically be used as a building block to verify $\crn(G)\geq c$ and $\crn(G-e)<c$, for each $e\in E(G)$. Thus $c$-crossing-critical graphs can be recognized in FPT-time $\mathcal{O}(|V(G)| \cdot |E(G)| \cdot f(c))$, for some computable function $f$.
	We do not know of any further algorithmic results regarding the recognition problem.

	\paragraph{Our Contribution.}
	In \cref{sc:tiles}, we recall the formal definition of \ltwocc{}s, their construction, and the recently proposed grammar to chiefly describe them.
	In \cref{sc:elementary}, we proceed to determine some of their elementary properties, such as order, maximum degree, clique, and matching number.
	We also show that \ltwocc{s} are linear-time recognizable.

	In \cref{sc:simple-cr}, we establish that their \emph{simple} crossing number is indeed also~$2$.
	We propose sufficient sets of \emph{color propagations} (defined later) to find their chromatic number and index in \cref{sc:chromaticnum,sc:chromaticin}, respectively.
	Finally, in \cref{sc:treewidth}, we characterize the graphs' treewidth via the appearance of a single minor.
	Further, in each section, we propose natural linear time algorithms to compute the respective measures on any given \ltwocc{}.

	Although the graphs under consideration form a structurally rich, yet countable infinite family, our results underline their structural cohesiveness: all investigated measures reside in a small range, some are even constant over all such graphs.
	\Cref{tab:overview} summarizes all considered properties and our results.

	\begin{table}
		\caption{Overview on the properties of \ltwocc{}s studied in this paper. }
		\label{tab:overview}\centering
		\begin{tabular}{l@{\ }c@{\ }c@{}c}\hline
			property & characterization & values & see\\
			\hline\hline
			graph size & complete & -- & Observation~\ref{obs:order} \\
			maximum degree & complete & $4, 5, 6$ & Observation~\ref{obs:maxdeg} \\
			clique number & complete & $2, 3 $ & Corollary~\ref{cor:clique} \\
			edge cover number & complete & $\left\lceil \vert V(G) \vert/2 \right\rceil$ & Observation~\ref{obs:edgecover} \\
			simple crossing number & complete & $2$ & Theorem~\ref{thm:simple} \\
			chromatic number & partial & $2, 3, 4$ & Theorems~\ref{thm:CharacterizationBipartite} \& \ref{thm:4Colorability} \\
			chromatic index & complete$^*$ & $\Delta(G)$ & Theorem~\ref{thm:chromindex} \\
			treewidth & complete$^*$ & $4$, $5$  & Corollary~\ref{cor:bounded_treewidth}\\\hline
			\multicolumn{4}{m{.9\textwidth}}{\footnotesize $^*$ few (finitely many) graphs on only $3$ elementary tiles can attain smaller (resp. larger) values for treewidth (resp. chromatic index). See corresponding sections.}
		\end{tabular}
	\end{table}

	\section{Large 2-Crossing-Critical Graphs}
	\label{sc:tiles}

	For standard graph theory terminology, such as (induced) subgraphs and graph minors, we refer to \cite{Die06, Sch17}.
	\removedS{Recall that}\change{A} \emph{drawing} of a graph $G$ \emph{in the plane} consists of two injective maps:
	One assigning each vertex~$v \in V\change{(G)}$ to a point in~$\mathbb{R}^2$, the other each edge~$uv \in E\change{(G)}$ to a Jordan curve from~$u$ to~$v$ in $\mathbb{R}^2$ such that no curve has a vertex in its interior.
	In the context of crossing numbers, we typically restrict ourselves to \emph{good} drawings:
	Each pair of curves has at most one interior point in common (if it exists, it is the \emph{crossing} of this pair), adjacent curves have no \removedS{such}\change{common} crossing, and the intersection of any three non-adjacent curves is empty.

	\begin{definition}
		The \emph{crossing number~$\crn(G)$} of a graph~$G$ is the smallest number of crossings over all of its \removedS{plane}drawings \change{in the plane}.
		Further, $G$ is \emph{$c$-crossing-critical} \change{for some $c \in \mathbb{N}$,} if $\crn(G) \geq c$, but every proper subgraph~$H \subset G$ has~$cr(H)~<~c$.
	\end{definition}

	Following this definition, we feel that some intuitive explanation of the context is in place before we formalize the details in the rest of this section.
	Note that the above definition defines $2$-crossing-critical graphs, but does not say anything about how they actually look like.
	For $1$-crossing-critical graphs, this was resolved by Kuratowski's theorem, which exposed $K_5$ and $K_{3,3}$ as the only two $3$-connected
	$1$-crossing-critical graphs, and all other 1-crossing-critical graphs as their subdivisions. As mentioned in the introduction, already $2$-crossing-critical
	graphs---the next step beyond Kuratowski's Theorem---exhibit a significantly richer structure, and allow for an infinite family of $3$-connected $2$-crossing-critical graphs \change{\cite{Sir84,Koc87}}. However, despite being infinite, this family has a tightly defined structure. The purpose of this section is to
	describe this structure, i.e., to use the \change{characterization} results of \cite{Bok+16} to explain how (almost all) $3$-connected $2$-crossing-critical graphs actually look like.
	All $3$-connected $2$-crossing-critical graphs \change{with sufficiently many vertices} exhibit this structure;
	only finitely many do not (\change{the} Petersen graph being the most
	prominent example). Thus we call the graphs having this structure \emph{large $2$-crossing-critical graphs}.
	Let us formally define the construction rules that generate the set of \twocc{}s and give a brief overview of their history.
	The concept of \emph{tiles} (to be defined in this section) was introduced by Pinontoan and Richter
	to answer a question of Salazar about average degrees of large families of $c$-crossing-critical graphs~\cite{PR04, Sal03}.
	Over a series of papers, it turned out to be a tool that gives surprisingly precise lower bounds on crossing numbers of
	several ``tiled'' graphs, see \cite{Bok+19a}.
	\Dvorak, \Hlineny, and Mohar showed that tiles form an essential ingredient of large
	$c$-crossing-critical graphs for every $c\ge 2$~\cite{DHM18}.
	In general, further structures (so-called \emph{belts} and \emph{wedges}) may also appear arbitrarily often, together with a bounded small graph that connects them \cite{Bok+19a}.
	For $c=2$, however, Bokal, Oporowski, Richter, and Salazar proved that tiles are sufficient to describe almost all (i.e., all but finitely many)
	\twocc{}s \cite{Bok+16}.
	In fact, belts appear if and only if $c\ge 3$ and wedges if and only if $c\ge 13$ \cite{Hli01, Bok+19a}.

	Intuitively, \emph{tiles} are prespecified small graphs with vertex subsets at which we can glue (pairs of) tiles together. A \emph{tiled graph} is a graph arising from cyclically glueing tiles together. Formally, we adopt the following notation from \cite{Sal03}\change{, which is illustrated in Figure 3}:

	\begin{definition}
		\label{def:kTile}
		A \emph{tile} is a triple $T=(G,x,y)$, consisting of a graph $G$ and two non-empty
		sequences $x = \langle x_1, x_2, \ldots, x_k \rangle$ and $y = \langle y_1, y_2, \ldots, y_l \rangle$ of distinct vertices of $G$,
		with no vertex appearing in both $x$ and $y$.
		The sequence~$x$ (sequence~$y$) is $T$'s \emph{left wall} (\emph{right wall}, resp.).
		If $|x| = |y| = k$, $T$ is a \emph{$k$-tile}.
	\end{definition}

	\begin{definition}
		\label{def:kTiledGraph}
		\emph{Tiled graphs} are joins of cyclic sequences of tiles. We formalize this as follows:
		\begin{enumerate}
			\item
			A tile~$T=(G,x,y)$ is \emph{compatible} with a tile~$T'=(G',x',y')$  if $|y| = |x'|$.
			Their \emph{join}~$T\otimes T'\coloneqq(G^*,x,y')$ is a new
			tile, where $G^*$ is obtained from the disjoint union of $G$ and $G'$
			by identifying $y_i$ with $x'_i$ for each $i=1,\ldots,|y|$.
			\item
			A sequence $\cT = \langle T_0, T_1,\ldots, T_m \rangle$ of tiles is \emph{compatible} if
			$T_{i-1}$ is compatible with $T_{i}$ for each $i = 1,2,\ldots,m$.
			The \emph{join}~$\otimes \cT$ of a compatible sequence~$\cT$ is
			$T_0 \otimes T_1 \otimes \cdots \otimes T_m$.
			\item
			For a $k$-tile $T=(G,x,y)$,
			the \emph{cyclization} of $T$ is the graph~$\bcirc T$
			obtained from $G$ by identifying  $x_i$ with $y_i$ for each $i=1,\ldots,k$. (Observe that in general, $T$ may itself have arisen from a join of a compatible sequence.)
		\end{enumerate}
	\end{definition}

	With these tools, we are now ready to recall the constructive characterization of \twocc{}s by tiles \cite{Bok+16}\change{. T}hereby, we focus only on the graphs that
	belong to the theoretically relevant infinite family of these graphs\change{. W}e disregard some finite set of special cases as well as graphs that are not $3$-connected\change{, as they add no relevant structural information}.
	The latter ones can be trivially obtained from the $3$-connected ones, and 3-connectivity is
	a typical restriction when studying graphs from a topological perspective, such as crossing numbers.
	Put chiefly, we provide the characterization of all (except for finitely many) $3$-connected \twocc{}s, called \ltwocc{}s.
	In the course of this, we will also describe these graphs' alphabetic description \cite{BKZ21}, which associates unique \removedS{u}\change{a}nd coherent names to each such graph.

	Again, before we formally define the set $\mathcal{C}$ of \removedS{all}\ltwocc{}s, we may give an intuitive definition. There are 42 planar 2-tiles $\mathcal{S}$ to choose from (to be described later).
	Each graph in $\mathcal{C}$ is a cyclization of a sequence of an odd number of these tiles. But thereby, every second tile will be used flipped top-to-bottom (which is not so important right now), and we reverse the order
	of the final right wall vertices prior to the cyclization. Without this final twist, the resulting graph would resemble a cyclic planar strip of tiles; due to the twist, the resulting graph becomes non-planar but can be embedded on the Möbius strip.

	\begin{figure}[tb]
		\centering
		\begin{subfigure}[b]{0.55\textwidth}
			\centering
			\winIgnore{\includegraphics[scale=0.5]{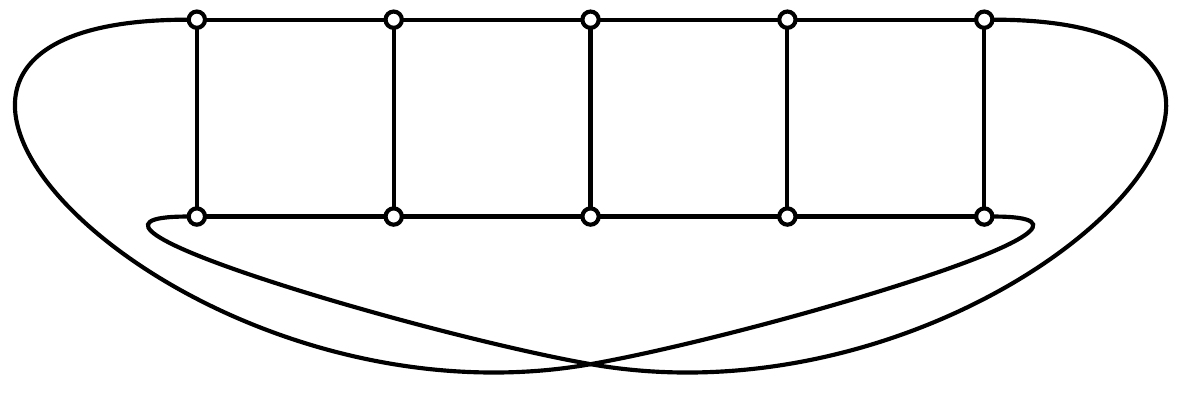}}
			\caption{$V_{10}$ drawn as Möbius ladder.
			}
			\label{fig:ML}
		\end{subfigure}
		\begin{subfigure}[b]{0.44\textwidth}
			\centering
			\winIgnore{\includegraphics[scale=0.5]{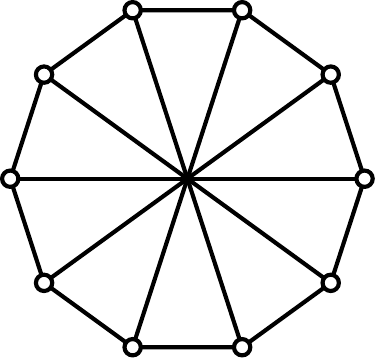}}
			\caption{$V_{10}$ drawn according to definition.}
			\label{fig:WG}
		\end{subfigure}
		\caption{The generalized Wagner graph $V_{10}$.}
		\label{fig:MoebiusWagner}
	\end{figure}

	Figure~\ref{fig:MoebiusWagner} shows the general graph structure exhibited by this process: assume each tile is drawn within a square region, then \cref{fig:ML} represents the resulting Möbius strip, where one of the squares is twisted. The graph that is depicted is in fact a generalized Möbius ladder, also known as generalized Wagner graph, and it is instrumental in understanding 2-crossing-critical graphs. Formally, it is defined as the graph $V_{2n}$, $3 \le n \in\mathbb{N}$, obtained from the cycle $C_{2n}$ in which each pair of antipodal vertices is connected via an additional edge (a \emph{spoke} of $V_{2n}$), see \cref{fig:WG}. The smallest Wagner graph $V_6$ is isomorphic to $K_{3,3}$.

	Based on this structure, assuming each tile $T_i$ in the sequence has some unique string $s_i$ as its name, it is straight-forward to use the concatenation $s_1s_2\ldots$ to describe the resulting graph. We call these strings \emph{signatures}. The join of our tiles can also be understood such that we cyclically join tiles (without vertical flipping) by \emph{always} reversing the order of the right wall vertices. While this understanding is not very helpful in terms of drawings with low crossing number, it shows that the graph-defining tile sequence is intrinsically cylic; consequently each graph's signature can be cyclically rotated as well, and for a graph with $k$ tiles we obtain $k$ potentially different signatures.

	It remains to discuss the fundamental 42 planar 2-tiles themselves, as they are highly structured. Each tile can be understood to be composed of a \emph{frame} and a \emph{picture} within that frame. Formally, these are \removed{multi}graphs, enriched with vertex markings. There are two different frames (\cref{fig:tileFrames}), and 21 different pictures (\cref{fig:tilePictures}). We will hence compose the signature of a tile as the concatenation of signatures of its picture and its frame. The names of the pictures arise from the graph structures along the top and bottom border of the tile (\emph{top path} and \emph{bottom path}, respectively) and their rough similarity to letters; see \cref{fig:tilePathsTop}.

	The example graph on five tiles in \cref{fig:2cc_example} completes the informal definition of the construction of \ltwocc{}s. We will revisit this example graph in later sections to showcase the investigated properties. We are now ready to formally define our graph class.

	\begin{figure}[p]
		\centering
		\begin{subfigure}[c]{\textwidth}
			\centering
			\winIgnore{\includegraphics[width=0.45\textwidth]{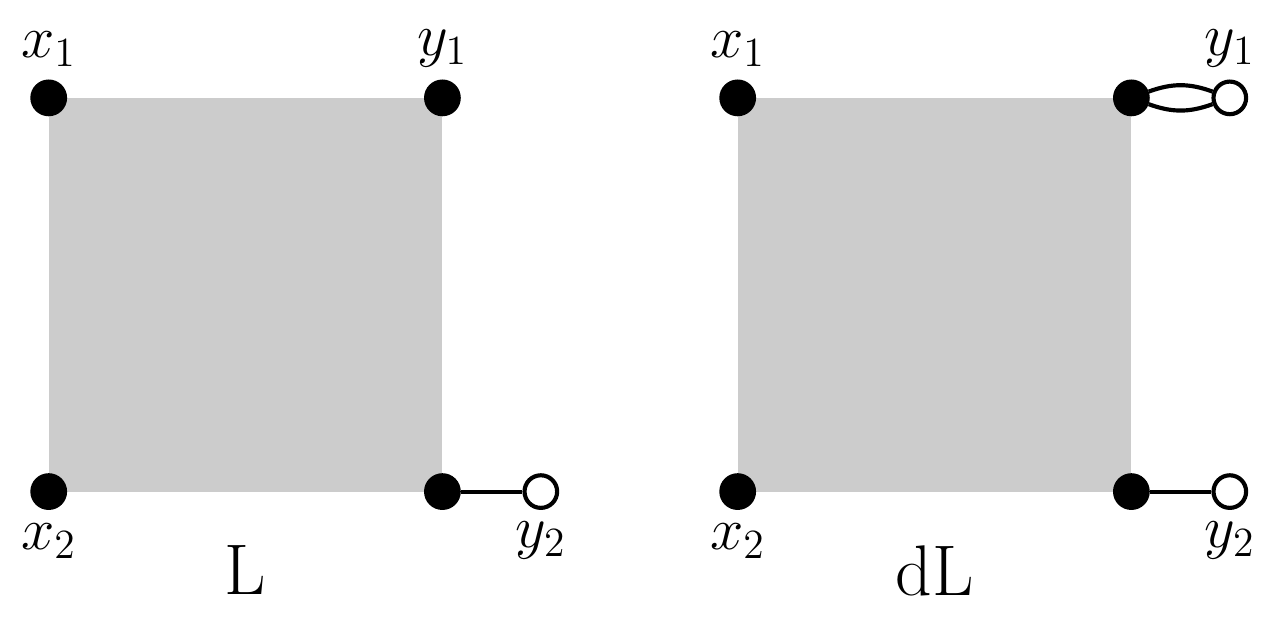}}
			\caption{The two frames and their names. Left wall vertices of the tiles obtained from these frames are $x_1,x_2$ and right wall vertices are $y_1,y_2$.}
			\label{fig:tileFrames}
		\end{subfigure}

		\bigskip

		\begin{subfigure}[c]{\textwidth}
			\centering
			\winIgnore{\includegraphics[width=0.8\textwidth]{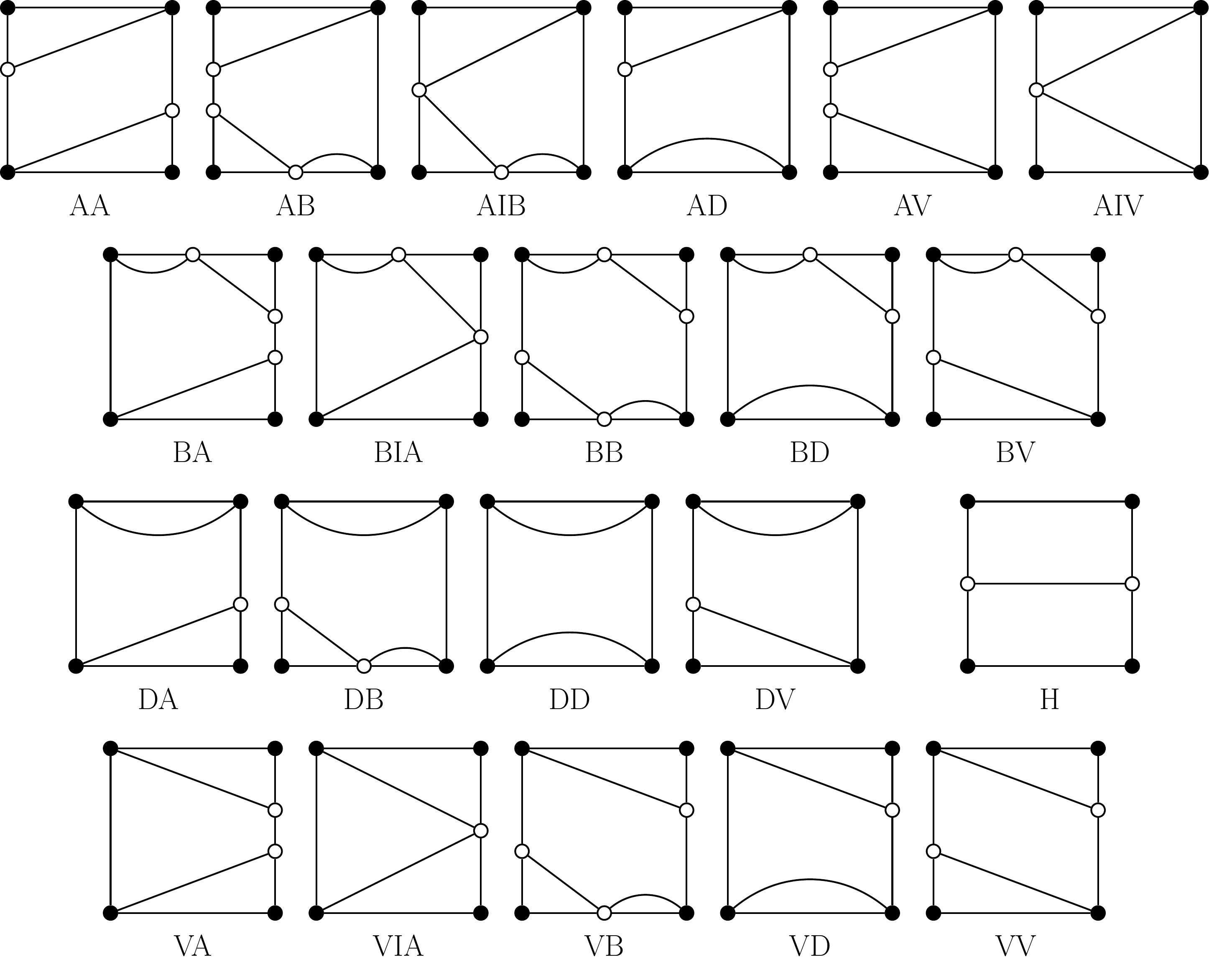}}
			\caption{The $21$ pictures and their names. The black vertices of the pictures are identified with black vertices of the frames (without additional rotation) to yield tiles in the set~$\mathcal S$.}
			\label{fig:tilePictures}
		\end{subfigure}

		\bigskip

		\bigskip

		\begin{subfigure}[c]{\textwidth}
			\centering
			\winIgnore{\includegraphics[width=0.8\textwidth]{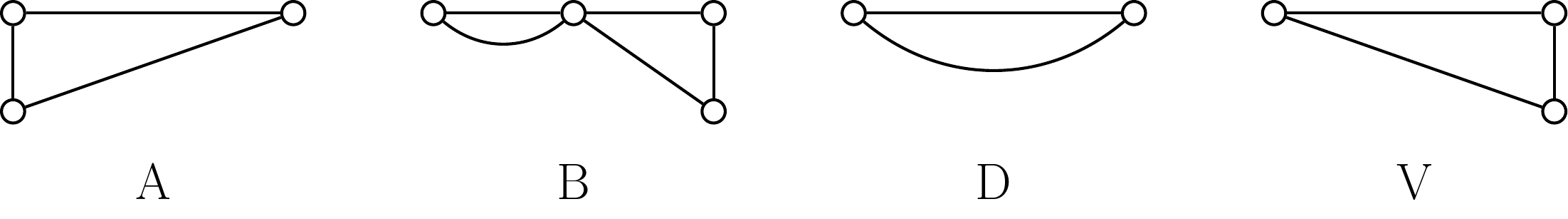}}
			\caption{Explanation for the names of the pictures (except for~\pH\ where the basis for the name is evident): We show the \emph{top path} of the pictures, together with their names.
				\emph{Bottom paths} are referred to equivalently but rotated by $180\degree$. The amalgamation of these names yield the signatures of the pictures, using the additional letter \pI, if a vertex of the top path becomes identified with one of the bottom part.
				\label{fig:tilePathsTop}}
		\end{subfigure}

		\bigskip

		\caption{Composition of tiles by pasting pictures into frames. The black vertices are identified when inserting a picture into a frame at the gray square. The tile's wall vertices are labeled.
		}
		\label{fig:tileFrameContent}
	\end{figure}

	\begin{figure}[t]
		\begin{subfigure}{\textwidth}
			\centering
			\winIgnore{\includegraphics[width=\textwidth]{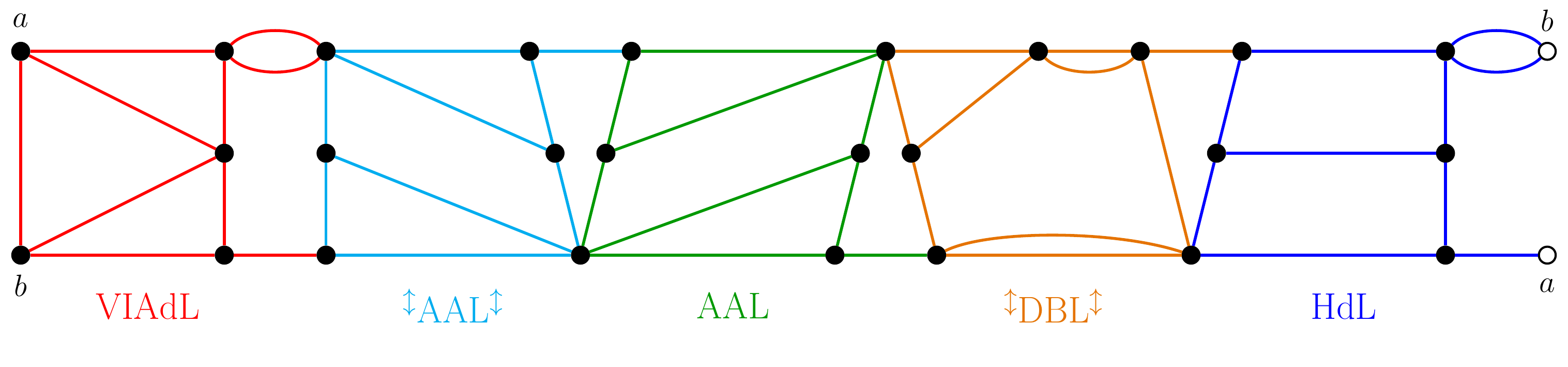}}
		\end{subfigure}
		\begin{subfigure}{\textwidth}
			\centering
			\winIgnore{\includegraphics[scale=.7]{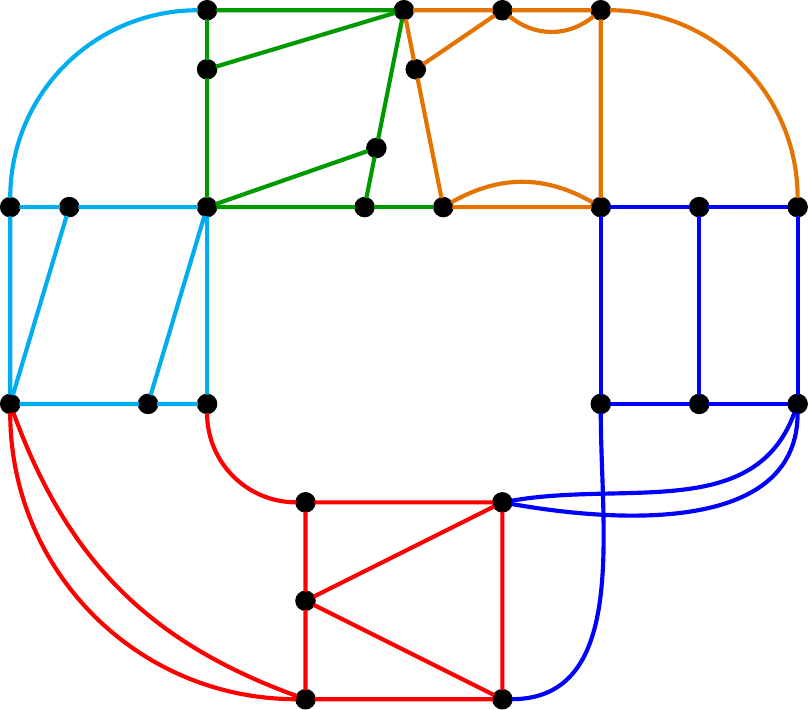}}
		\end{subfigure}

		\caption{An example of a \ltwocc; its signature is \mbox{\emph{VIAdL\,AAL\,AAL\,DBL\,HdL}}; colors represent the individual elementary tiles. On the top, the graph is drawn on the projective plane, where the labeled vertices are identified according to their names. On the bottom, the same graph is drawn equivalently but in the plane, resulting in two explicit crossings when twisting the dark blue HdL-tile.}
		\label{fig:2cc_example}
	\end{figure}

	\begin{definition}[based on \cite{Bok+16}]
		\label{def:constructionLarge2ccGraphs}
		Large $2$-crossing-critical graphs are defined as the set $\cC$ in the following way:
		\begin{enumerate}
			\item For a sequence $x$, let $\overline x$ denote the reversed sequence.
			The \emph{right-inverted} (\emph{left-inverted}) tile of a tile $T = (G, x, y)$ is the tile $T^{\updownarrow} \coloneqq (G, x, \overline{y})$ (and $^{\updownarrow}T \coloneqq (G, \overline{x}, y)$, respectively).
			\item \label{it:t3}
			Let $\cS$ be the set of tiles obtained as combinations of one of the two frames and one of the 21 pictures, shown in \cref{fig:tileFrameContent}, in such a way that a picture is inserted into a frame by identifying the gray area with it; the picture may not be rotated.
			While disregarding whether any wall order is reversed, we may call the tiles of $\cS$ \emph{elementary tiles}.
			\item \label{it:t4}
			Let $\mathcal C$ denote the set of all graphs of the form
			$\bcirc (T_0^{\updownarrow} \otimes T_1^{\updownarrow} \otimes \ldots \otimes T_{2m}^{\updownarrow})$
			with $m \geq 1$ and each $T_i \in \cS$.
			\item The \emph{signature~$\sig(T)$ of a tile~$T\in\cS$} is the concatenation of the names of its picture and its frame.
			A \emph{signature of a graph~$G$} is based on its tile construction: $\sig(G)\coloneqq\sig(T_0)\sig(T_1)\ldots\sig(T_{2m-1})\sig(T_{2m})$.
		\end{enumerate}
	\end{definition}

	Observe that, by cyclic symmetry, the signature of a graph in $\mathcal C$ is not unique.
	Given two tiles $T_a,T_b$, also observe that $T_a^{\updownarrow}\otimes T_b$ is isomorphic to $T_a\otimes {}^{\updownarrow}T_b$. Thus we can rewrite
	$\bcirc (T_0^{\updownarrow} \otimes T_1^{\updownarrow} \otimes \ldots \otimes T_{2m}^{\updownarrow})=
	 \bcirc((T_0 \otimes {}^{\updownarrow}T_1^{\updownarrow} \otimes T_2 \otimes \ldots \otimes {}^{\updownarrow}T_{2m-1}^{\updownarrow} \otimes T_{2m} )^{\updownarrow})$.
	While the former is formally more appealing and highlights the intrinsic symmetry, the latter implicitly tells us how to draw the graph with only 2 crossings: vertically flip every second tile to avoid all crossings until the last tile, where we require a simple twist.

Note that Definition \ref{def:constructionLarge2ccGraphs}
		does not imply that these graphs are actually $2$-crossing-critical, but the following theorem does:
	\begin{theorem}[{{Characterization by tiles \cite[Theorems 2.18 \& 2.19]{Bok+16}}}]
		\label{thm:3con2cc}
		Each element of~$\mathcal C$ is $3$-connected and $2$-crossing-critical.
		Furthermore, all but finitely many $3$-connected $2$-crossing-critical
		graphs are contained in $\mathcal C$, and the set $\mathcal C$ contains all the $2$-crossing-critical graphs that contain a $V_{10}$ subdivision.
	\end{theorem}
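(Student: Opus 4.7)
The plan is to decompose the theorem into three claims and handle each in turn: (a) every $G \in \mathcal{C}$ is $3$-connected and $2$-crossing-critical; (b) every $2$-crossing-critical graph containing a subdivision of $V_{10}$ belongs to $\mathcal{C}$; (c) only finitely many $3$-connected $2$-crossing-critical graphs fail to contain such a subdivision. Clause (c) together with (b) yields the ``all but finitely many'' statement. Since the result is cited from \cite{Bok+16}, a fully self-contained proof would amount to reproducing a sizeable portion of that paper; what follows is a sketch of the main ideas I would organize the argument around.

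For (a), I would first verify $3$-connectivity by structural induction on the number $2m+1$ of tiles. Each elementary tile of $\mathcal{S}$ is $2$-connected and admits three internally disjoint paths between its two walls; the join $\otimes$ preserves this property, and the cyclization with the final twist supplies the third globally vertex-disjoint path needed to separate any pair of vertices. The lower bound $\crn(G) \geq 2$ follows because the cyclic tile construction, unrolled on the Möbius strip, contains a $V_{10}$-subdivision, and any graph containing $V_{10}$ as a subdivision has crossing number at least $2$. Criticality is then shown constructively: for each edge $e$ of each elementary tile, one exhibits a drawing of $G - e$ with a single crossing by invoking the rewriting $\bcirc (T_0 \otimes {}^{\updownarrow}T_1^{\updownarrow} \otimes \cdots \otimes T_{2m})^{\updownarrow}$ observed after \cref{def:constructionLarge2ccGraphs} to concentrate all crossings inside the tile containing $e$, and then re-routing around the deleted edge to save one crossing. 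This reduces to a finite case analysis over the $21$ pictures and $2$ frames.

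For (b), I would adopt the classical strategy used for crossing-critical characterizations. Fix an optimal drawing of a $2$-crossing-critical graph $H$ that contains a $V_{10}$-subdivision and has exactly two crossings, and use the rigidity of $3$-connected planar pieces (Whitney-type arguments) together with the criticality constraint to analyze the subgraphs of $H$ attached to each rung of the Möbius ladder. Each such piece must be planar (else a third crossing would be forced), must attach via a bounded number of vertices (else some edge could be removed without decreasing $\crn$), and must be drawable inside a disk or Möbius-band region bounded by the $V_{10}$-subdivision; a case analysis then matches each piece with exactly one picture from \cref{fig:tilePictures} inserted into one frame from \cref{fig:tileFrames}. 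Clause (c) is addressed by enumerating, via the same structural machinery, the finitely many $3$-connected $2$-crossing-critical graphs that arise when the $V_{10}$-subdivision hypothesis fails (the Petersen graph being a canonical example).

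The principal obstacle is unambiguously clause (b): translating a topological hypothesis (optimal drawing plus criticality) into the fully combinatorial alphabet of frames and pictures requires delicate surgery on drawings, combined with global cyclic bookkeeping to ensure that the extracted tile sequence is consistent and that no alternative local configuration is overlooked. In a stand-alone treatment the bulk of the effort would go here, and I would almost certainly need to import substantial lemmas from \cite{Bok+16}; by contrast, clause (a) is self-contained and tractable once the tile formalism of \cref{def:kTile,def:kTiledGraph,def:constructionLarge2ccGraphs} is in hand.
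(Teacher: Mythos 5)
The first thing to note is that the paper does not prove this theorem at all: it is imported verbatim from Theorems 2.18 and 2.19 of \cite{Bok+16}, so there is no in-paper argument to compare yours against. Judged on its own terms, your decomposition into clauses (a)--(c) is sensible, and your sketches of (b) and (c) describe, at road-map level, roughly what \cite{Bok+16} actually does (a bridge/attachment analysis relative to the $V_{10}$ structure; a size bound for the $3$-connected $2$-crossing-critical graphs lacking a $V_{10}$ subdivision --- note it is a \emph{bound}, not an enumeration: the explicit list is still open, as the paper's conclusion stresses, so ``enumerating'' is not the right verb there).

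There is, however, a concrete error at the heart of your clause (a): you derive the lower bound $\crn(G)\geq 2$ from the assertion that any graph containing $V_{10}$ as a subdivision has crossing number at least $2$. This is false. The generalized Wagner graph $V_{10}$ is a $5$-rung planar ladder plus two interleaving end-edges, so $\crn(V_{10})=1$; containing a $V_{10}$ subdivision therefore yields only $\crn(G)\geq 1$. (The paper even leaves open whether every member of $\mathcal{C}$ contains a $V_{10}$ subdivision.) The actual lower-bound argument in \cite{Bok+16} rests on the tile machinery of Pinontoan and Richter: one verifies that each of the $42$ elementary tiles is a \emph{perfect} planar tile in a technical sense, and then invokes the theorem that the twisted cyclization of a sufficiently long compatible sequence of such tiles has crossing number at least $2$. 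Without an argument of this type, $\crn(G)\geq 2$ is unsupported, and with it the $2$-crossing-criticality claim in (a) collapses --- the ``delete an edge and redraw with one crossing'' half of criticality that you do describe is the easy half.
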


	\removed{In fact, all (but finitely many) 2-connected 2-crossing-critical graphs are
	obtained from C by subdividing edges and/or replacing double-edges by a path
	of double-edges [10]. Whenever it serves our investigation, we may assume any
	lower bound on the number of elementary tiles in large 2-crossing-critical graphs,
	disregarding a slightly larger but still finite set of graphs.\\}\change{Note that there may be small graphs of $\mathcal C$ that are 3-connected 2-crossing-critical,
	but do not have a $V_{10}$ subdivision. Although this is a rather technical challenge, understanding 
	it may simplify some approaches and the definition of $\mathcal C$, hence we pose it as an open problem: 
	\begin{question}
	List graphs of $\mathcal C$ with smallest number of vertices and edges. 
	List graphs in $\mathcal C$ that do not contain a $V_{10}$ subdivision or show that there are none.
	\end{question}
	}

	We denote the number of occur\change{r}ences of a given symbol~$X \in \change{\{\pA,\pV, \pD, \pB, \pH, \pI\}}$ \removedS{$\Sigma$}in the signature of a \twocc $G$ by $\#X(G)$.
	We may omit the parameter~$G$ if it is clear from the context.
	It is trivial to test in linear time whether a \removedS{word}\change{supposed signature indeed} describes a large $2$-crossing-critical graph.%

\section{Elementary Properties}
\label{sc:elementary}
Given the characterization of large $2$-crossing-critical graphs, we start our study by analyzing their elementary properties.
We will later use these results to facilitate the study of more involved measures.

\begin{obs}
    \label{obs:order}
	The number of vertices and edges of a \ltwocc $G$ is \change{obtained using the following matrix-vector multiplication}:
\end{obs}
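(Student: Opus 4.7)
My plan is to exploit the compositional structure of Definition~\ref{def:constructionLarge2ccGraphs}: the graph~$G$ is built by joining $2m+1$ elementary tiles and then cyclizing, so every vertex and every edge of $G$ can be attributed to exactly one picture occurrence, one frame occurrence, or to a wall that gets identified by a join (or by the final cyclization). I would therefore treat each of the six picture letters $X\in\{\pA,\pV,\pD,\pB,\pH,\pI\}$ as an independent ``building block'' whose vertex and edge contribution is a fixed constant, and then express $|V(G)|$ and $|E(G)|$ as linear combinations of the counts $\#X(G)$.

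Concretely, I would first go through the $21$ pictures in \cref{fig:tilePictures} and, for each picture~$P$, read off two numbers $v_P$ and $e_P$ counting, respectively, the number of non-wall vertices and the number of edges the picture places \emph{inside} the gray area of a frame. Since each picture name is by design the concatenation of its top- and bottom-path letters (plus possibly the identification marker~\pI, see \cref{fig:tilePathsTop}), these totals can be rewritten as sums of per-letter contributions $v_X, e_X$ for $X\in\{\pA,\pV,\pD,\pB,\pH,\pI\}$, plus a constant that depends only on the number of tiles. I would do the analogous bookkeeping for the two frames of \cref{fig:tileFrames}, noting that each tile contributes one frame and that the left and right walls of a frame are shared with neighbouring tiles: every join identifies exactly two wall vertices (and no edges), and the cyclization performs two further identifications. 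Letting $\noTiles=2m+1$ denote the number of tiles, the number of wall vertices contributed in total is therefore $2\cdot\noTiles - 2\cdot\noTiles = 0$ after accounting for joins and the cyclization; the per-tile frame interior contributes a constant number of vertices and edges that depend only on the frame type.

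Putting this together, if I set up a column vector $\mathbf{c}(G)\coloneqq (\#\pA,\#\pV,\#\pD,\#\pB,\#\pH,\#\pI,\noTiles)^{\!\top}$ and a $2\times 7$ matrix~$M$ whose first row contains the vertex contributions $v_\pA,\dots,v_\pI$ and the per-tile frame-vertex constant, and whose second row contains the analogous edge contributions, then $\bigl(|V(G)|,\,|E(G)|\bigr)^{\!\top} = M\,\mathbf{c}(G)$. The verification reduces to checking that the numbers $v_X, e_X$ and the frame constants are correctly read off \cref{fig:tileFrameContent}, and that the join/cyclization identifications indeed cancel out exactly the double-counted wall vertices; these are straightforward case checks on a small finite table, which can additionally serve as an audit trail for a linear-time algorithm that parses the signature and computes $|V(G)|$ and $|E(G)|$ on the fly.

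The main obstacle is purely bookkeeping: one has to be careful that the ``black'' vertices shared between a picture and its frame (as shown in \cref{fig:tileFrameContent}) are counted in exactly one of the two contributions, and that the wall-vertex sharing is handled consistently across joins and across the final cyclization. As long as one fixes a convention---for instance, attributing each black identification vertex to the frame and never to the picture, and attributing each wall vertex to the frame of its right-hand neighbour---the linear formula drops out with no hidden nonlinearity, and the observation follows.
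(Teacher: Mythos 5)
Your overall strategy coincides with the paper's: attribute a fixed vertex/edge contribution to each building block so that the totals become linear in the letter counts, and correct for the two wall-vertex identifications performed at each of the $\noTiles$ cyclic joins. The paper's proof is exactly this, compressed into three sentences: count the vertices and edges of each elementary tile, subtract $2$ vertices per tile because the join is cyclic, and verify the per-letter matrix entries tile by tile.

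There is, however, one concrete defect in your setup: the counts vector $\mathbf{c}(G)=(\#\pA,\#\pV,\#\pD,\#\pB,\#\pH,\#\pI,\noTiles)^{\top}$ cannot carry the formula, because the two frames contribute different amounts --- you say yourself that the frame contribution ``depends only on the frame type,'' but your matrix has only a single $\noTiles$ column to absorb all frame contributions. For instance, a graph all of whose tiles are $\pD\pD\fL$ and one all of whose tiles are $\pD\pD\fdL$ have identical values of your seven coordinates, yet a $\fdL$-frame has one more vertex and two more edges than an $\fL$-frame. You need an eighth coordinate counting the $\fdL$-frames; the paper achieves this by using $\#\fL$ (which equals $\noTiles$, since both frame names contain an $\fL$) together with $\#\fd$ (the number of $\fdL$-frames) as two separate entries, so that the $\fd$-row records exactly the marginal contribution of the double frame. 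Relatedly, your wall-vertex arithmetic ``$2\cdot\noTiles-2\cdot\noTiles=0$'' is not correct as a statement about the graph: the roughly $2\noTiles$ wall vertices of $G$ do not vanish. The correct bookkeeping is that the standalone tiles carry $4\noTiles$ wall vertices in total and each of the $\noTiles$ joins (counting the final cyclization) identifies two pairs, which is precisely the uniform ``$-2$ vertices per tile'' correction the paper folds into its per-tile entries. With these two repairs your argument matches the paper's.
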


\begin{equation*}
    \begin{bmatrix}
		|V\change{(G)}| \\
		|E\change{(G)}|
    \end{bmatrix}
    \coloneqq
    \begin{bmatrix}
        3 & 5\\
        1 & 2\\
        1 & 2\\
        1 & 2\\
        0 & 1\\
        2 & 3\\
        2 & 4\\
        -1 & -1
    \end{bmatrix}^\intercal\!
    \cdot\,
    \begin{bmatrix}
        \text{\#\fL} \\
        \text{\#\fd} \\
        \text{\#\pA} \\
        \text{\#\pV} \\
        \text{\#\pD} \\
        \text{\#\pH} \\
        \text{\#\pB} \\
        \text{\#\pI}
    \end{bmatrix}
\end{equation*}

\begin{proof}
Considering each elementary tile, we count its number of vertices and edges.
Since joining two tiles reduces the number of vertices by $2$, we reduce the number of vertices for each tile by $2$ (recall that this join is cyclic). It is straightforward to verify that each tile's signature generates the correct number of vertices and edges.
\end{proof}

Our example graph in \cref{fig:2cc_example} with $\sig(G)=\mbox{\textit{VIAdLAALAALDBLHdL}}$ yields the graph-dependent vector $[5,2,5,1,1,1,1,1]^\intercal$. Thus $[|V\change{(G)}|,|E\change{(G)}|]^\intercal = [3\cdot5 + 1\cdot2 + 1\cdot5 + 1\cdot1 + 0\cdot1 + 2\cdot1 + 2\cdot1 -1\cdot1, 5\cdot5 + 2\cdot2 + 2\cdot5 + 2\cdot1 + 1\cdot1 + 3\cdot1 + 4\cdot1 -1\cdot1]^\intercal=[26,48]^\intercal$.

\begin{obs}\label{obs:maxdeg}%
 The maximum degree $\Delta$ of a \ltwocc $G$ satisfies $4 \leq \Delta \leq 6$. In particular:
 \begin{itemize}
  \item $\Delta(G)=6$ if and only if there are two consecutive elementary tiles $T_1,T_2$, such that $T_1$'s frame is \fL, its top path is $\pA$ or $\pD$, as is the bottom path of~$T_2$
  (these paths are not necessarily equal).
  \item $\Delta(G)=5$ if and only if $\Delta(G)\neq6$ and $\#\pA +\#\pD > 0$.
  \item $\Delta(G)=4$ if and only if $\#\pA+\#\pD=0$.
 \end{itemize}
\end{obs}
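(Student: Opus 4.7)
The plan is to compute the degree of every vertex of $G$ as a sum of contributions from the at-most-two tiles that contain it, and then to enumerate the resulting small case split. Every vertex of $G$ is either interior to a single tile's picture or a wall vertex shared by two consecutive tiles in the cyclic join from \cref{def:constructionLarge2ccGraphs}. For picture-interior vertices, the degree is entirely determined by one picture, and direct inspection of the 21 pictures in \cref{fig:tilePictures} shows that no such vertex has degree above $4$ in $G$. Thus interior vertices may realise $\Delta=4$ but never push $\Delta(G)$ higher.

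For a wall vertex $v$ shared by consecutive tiles $T_i$ and $T_{i+1}$, its degree decomposes as $d_G(v) = d_{\mathrm{pic}}(T_i,v) + d_{\mathrm{pic}}(T_{i+1},v) + d_{\mathrm{frame}}(T_i,v) + d_{\mathrm{frame}}(T_{i+1},v)$, where each term counts the edges that the respective picture or frame attaches at $v$. A case analysis of \cref{fig:tilePictures} reveals that exactly the pictures $\pA$ and $\pD$ contribute picture-degree $2$ at the wall vertex on the corresponding path-side, while each of $\pV$, $\pB$, $\pH$, $\pI$ contributes picture-degree $1$ there. Inspecting \cref{fig:tileFrames}, the two frames contribute a fixed number of edges at each wall vertex, with the \fL{} frame contributing one more than \fdL. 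Summing the four contributions caps the wall-vertex degree at~$6$.

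The three characterisations then follow by tracing when each value is actually achieved. Degree $6$ requires both wall-adjacent path-pictures to be $\pA$ or $\pD$ (giving picture total $4$) \emph{and} the adjoining frames together to contribute~$2$; the orientation scheme $T_0^{\updownarrow}\otimes T_1^{\updownarrow}\otimes\cdots$, whose wall-reversals identify the right-top of $T_i$ with the left-bottom of $T_{i+1}$ and vice versa, translates this into exactly the stated condition that $T_1$'s top path is $\pA$ or $\pD$, $T_2$'s bottom path is $\pA$ or $\pD$, and $T_1$'s frame is $\fL$. Otherwise, whenever some $\pA$ or $\pD$ is present, an adjacent wall vertex still receives picture-degree at least $3$ together with the maximal frame contribution~$2$, yielding $\Delta(G)=5$. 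If $\#\pA + \#\pD = 0$, then every wall vertex has picture-degree~$2$ and frame-degree at most~$2$, so $\Delta(G)=4$, attained either by an interior degree-$4$ vertex or by a wall vertex flanked by two \fL-framed tiles.

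The main obstacle is not the counting but the orientation bookkeeping: one must verify carefully how the alternating $\updownarrow$ pattern pairs up the top/bottom paths of adjacent tiles, as this is exactly what makes the "$T_1$-top/$T_2$-bottom" description in the statement correct. Once this is settled, the case split is finite and elementary, and it immediately yields a linear-time algorithm that scans the signature, examines each cyclically consecutive pair of tile symbols, and reports $\Delta(G)$ by evaluating the three conditions.
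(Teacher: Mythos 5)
Your overall strategy---separating picture-interior vertices from wall vertices, writing the degree of an identified wall vertex as a sum of picture and frame contributions from the two incident tiles, and finishing by a finite case split that also yields a linear-time signature scan---is exactly the strategy of the paper's proof, and the interior-vertex part and the algorithmic conclusion are unobjectionable. The difficulty is the specific contribution table you assert ``by inspection,'' which is not the one the figures support. The paper's proof states explicitly that \emph{every} path in $\{\pA,\pB,\pD,\pV,\pH\}$ increases the degree of a vertex it attaches to by at most $1$; what singles out \pA\ and \pD\ is \emph{where} they attach (at the corner wall vertex: top-right for a top path, bottom-left for a bottom path), not that they attach two edges there. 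Likewise, the paper attributes degree $2$ at the top-right corner to the \fL-frame alone, and the relevant asymmetry between \fL\ and \fdL\ is confined to that corner---which is precisely why only $T_1$'s frame appears in the degree-$6$ condition.

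This is not a mere bookkeeping difference: with your table the arithmetic does not close. If, as you write, the \fL-frame contributed one more edge than \fdL\ at \emph{each} wall vertex, then $T_2$'s frame would enter the degree-$6$ condition symmetrically with $T_1$'s, contradicting the statement you are proving. Your degree-$5$ step silently assumes that the two frames always contribute the ``maximal'' total of $2$ at the vertex adjacent to an \pA- or \pD-path, yet your degree-$6$ step needs that total to drop below $2$ whenever $T_1$ is \fdL-framed; under the latter reading an \pA\ or \pD\ on an \fdL-framed tile would give its corner vertex degree at most $4$, so $\#\pA+\#\pD>0\Rightarrow\Delta(G)\ge 5$ would not follow. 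Taken literally, your model even admits wall vertices of degree $2$, which is impossible since these graphs are $3$-connected by Theorem~\ref{thm:3con2cc}. Since the entire content of this observation lies in reading the contributions off the figures correctly, the quantitative core of your argument has to be redone with the paper's accounting before the case split becomes a proof.
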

\begin{proof}
    All elementary tiles with a $\fdL$-frame have a vertex of degree at least $4$ where the frame's double edge connects. All elementary tiles with an $\fL$-frame have a vertex of at least degree $2$ in the \enquote{top right}, which gets identified with a vertex of the next tile with degree at least $2$. Therefore, our graphs always contain vertices of degree at least $4$.

    Any path of $\{\pA,\pB,\pD,\pV,\pH\}$ increases the degree of vertices it connects to by at most~$1$. If a tile has top path $\pA$ or $\pD$, its top right vertex has degree at least~$5$. The same applies to a tile's bottom left vertex, if it has bottom path $\pA$ or $\pD$. Only by having a tile with an $\fL$ frame and an upper path $\pA$ or $\pD$ followed by a tile with bottom path $\pA$ or $\pD$, the identified vertex's degree becomes $6$.
\end{proof}

    A \emph{clique} in a graph $G$ is a subgraph of $G$ that is complete. The \emph{clique number} of $G$ is the order of the maximum clique.
\begin{obs} \label{lemma:picture_triangle}\label{lemma:graph_triangle}
    An elementary tile contains a triangle if and only if its signature contains \pA, \pV, or \pB\ (cf. \cref{fig:tileFrameContent}).
	Moreover, each triangle in a \ltwocc $G$ \removedS{that is not contained in an elementary tile}\change{whose signature does not contain \pA, \pV, or \pB} corresponds to a \pD\pD\fL\pD\pD-subsequence of $\sig(G)$.
\end{obs}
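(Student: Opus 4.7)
The plan is to prove the two claims separately, each by examining \cref{fig:tileFrameContent}. For the first claim, I would go through all $21$ pictures and both frames. Each picture whose name contains \pA, \pV, or \pB visibly closes a $3$-cycle on its internal vertices (possibly together with a framing wall-edge), and this triangle persists after inserting the picture into either \fL\ or \fdL. Conversely, the pictures that use only the letters \pD, \pH, and identifications \pI\ form acyclic substructures on their picture-edges, and since both frames add only edges between wall vertices on the same side of the tile, no new $3$-cycle is created by the combination.

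For the second claim, let $G\in\cC$ with $\sig(G)$ containing none of \pA, \pV, \pB, and let $\Delta$ be any triangle in $G$. By the first claim $\Delta$ cannot lie inside a single tile, so it must straddle the join of two consecutive tiles $T_i, T_{i+1}$ whose intersection is exactly the pair of shared wall vertices $w_1, w_2$. Since vertices interior to distinct tiles are never adjacent in $G$ and $\Delta$ has only three vertices, $\Delta$ must use both $w_1, w_2$ together with a single additional vertex $v$ interior to one of the two tiles. The three edges of $\Delta$ are therefore $w_1 w_2$ and $w_1 v, w_2 v$, where the first is supplied by one tile and the latter two by the other.

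I would then match these three edges against the triangle-free tiles. The $w_1 w_2$-edge at the join can only be provided by an \fL-frame (an \fdL-frame contributes a double edge which, combined with any triangle-free picture, does not close $\Delta$), and a pair of edges $w_1 v, w_2 v$ through an interior vertex $v$ of a \pD/\pH/\pI-picture occurs only when the hosting tile has both its top and bottom picture equal to \pD. Applying this to the other side of the join forces the neighboring tile to also have picture \pD\pD\ and frame \fL, concatenating to the substring \pD\pD\fL\pD\pD\ of $\sig(G)$. The hard part will be this final step: one must systematically exclude every other \pD-, \pH-, \pI-based picture with either frame, as well as the sub-cases in which $\Delta$ touches three consecutive tiles or uses vertices on opposite walls of a single tile; these reduce to routine but careful inspection of \cref{fig:tileFrameContent}.
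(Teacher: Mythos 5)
The paper records no proof of this observation at all: it is stated as something to be read off \cref{fig:tileFrameContent}, so your plan --- a finite inspection of the pictures and frames for the first claim, and a localization of any remaining triangle to the two shared wall vertices of a join plus one further vertex for the second --- is precisely the intended argument, and your structural reduction (interior vertices of distinct tiles are non-adjacent, hence a triangle not inside one tile must use both wall vertices of one join) is the right skeleton. One simplification you miss and should use: every picture containing \pI\ also contains \pA\ (the only such pictures are \pV\pI\pA, \pB\pI\pA, \pA\pI\pV, \pA\pI\pB), so the \pA/\pV/\pB-free pictures are exactly \pD\pD\ and \pH, and the whole second claim reduces to inspecting the four tiles $\pD\pD\fL$, $\pD\pD\fdL$, $\pH\fL$, $\pH\fdL$ and their pairwise joins --- consistent with \cref{cor:clique}.

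Two steps of your deferred case analysis are stated in a way that would not survive that inspection. First, your reason for discarding the \fdL-frame --- that it ``contributes a double edge which \ldots does not close $\Delta$'' --- proves nothing: if the \fdL-frame really placed a double edge between $w_1$ and $w_2$, those vertices would still be adjacent, which is all the triangle needs on that side. The correct reason must be structural (the \fd-gadget introduces an extra vertex and two extra edges, so the relevant wall vertices of an \fdL-framed tile are not adjacent there); this is a fact to be read off the figure, not derived from the phrase ``double edge''. Second, your conclusion that the neighbouring tile is forced to have ``picture \pD\pD\ \emph{and frame \fL}'' over-constrains: the substring in the statement is $\pD\pD\fL\,\pD\pD$, which fixes the first tile completely but leaves the second tile's frame free, and indeed $\pD\pD\fL\,\pD\pD\fdL$ must also produce a triangle by \cref{cor:clique}. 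While an over-strong conclusion would still imply the ``corresponds to a $\pD\pD\fL\,\pD\pD$-subsequence'' claim, it signals that you have not yet pinned down which tile supplies which of the three triangle edges --- and that assignment is exactly the part of the ``routine but careful inspection'' that cannot be waved away.
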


\noindent In fact, this observation is sufficient to fully determine the clique number \change{of a \ltwocc}.
\begin{obs}
 A \ltwocc contains no $K_4$.
\end{obs}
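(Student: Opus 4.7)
The plan is to rule out a $K_4$ by reducing to triangle-extensions: a $K_4$ in $G$ is precisely a triangle $abc$ together with a vertex $v \notin \{a,b,c\}$ belonging to $N(a) \cap N(b) \cap N(c)$. Hence it suffices to go through every triangle of $G$ and verify that its three vertices have no common neighbor outside the triangle. By the preceding observation, the triangles of $G$ fall into two families, which I will treat separately: \textbf{(i)} triangles lying inside a single elementary tile whose picture is one of \pA, \pV, \pB; and \textbf{(ii)} triangles arising from a \pD\pD\fL\pD\pD-subsequence of $\sig(G)$, whose three vertices are distributed across three consecutive tiles.

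For family (i), the triangle is fully contained in a bounded subgraph comprising the picture \pA, \pV, or \pB, the surrounding frame, and at most the two adjacent elementary tiles attached along the walls. I would enumerate the (few) non-isomorphic triangles appearing inside each of \pA, \pV, \pB (cf.\ \cref{fig:tileFrameContent}); record the small, explicit neighborhood of each of the three triangle vertices, taking into account whether each vertex is an interior picture vertex (degree fully determined by the picture) or a wall vertex of the frame (picking up additional neighbors from the next tile); and then check that the intersection of these neighborhoods is empty beyond the triangle itself. Because the pictures are small and the attaching frames (\fL and \fdL) add at most two further edges per wall vertex, each sub-case closes immediately.

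For family (ii), the three triangle vertices lie on the walls joining the middle \fL-framed tile of the \pD\pD\fL\pD\pD-window to its two \pD-picture neighbors; their neighborhoods are therefore entirely contained in those three consecutive tiles. Again I would read off the explicit edges around each triangle vertex from \cref{fig:tileFrameContent} and observe that no vertex is adjacent to all three simultaneously. The main obstacle in both families is the bookkeeping across tile boundaries: one has to be careful that a wall vertex of the triangle may inherit incident edges from both adjacent tiles, and that in family (ii) the picture choices on either side of the \pD\pD\fL\pD\pD-window can themselves produce triangles of family (i) that must not be confused with the triangle under consideration. Since every case reduces to a local check inside at most three consecutive tiles, the verification is finite and mechanical, and once completed yields that no triangle extends to a $K_4$.
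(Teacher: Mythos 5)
Your proposal is correct, and it matches the paper's (implicit) reasoning: the paper states this as an unproved observation immediately after the triangle classification, with the remark that that classification suffices to determine the clique number, so the intended argument is exactly your reduction of a $K_4$ to ``a triangle plus a common neighbor'' followed by a finite local check of each triangle type. One small inaccuracy: a \pD\pD\fL\pD\pD-subsequence is the signature of a \pD\pD\fL-tile followed by a tile whose picture is \pD\pD, so the corresponding triangle straddles the wall between \emph{two} consecutive tiles rather than sitting in the middle of three; your three-tile window still contains all relevant neighborhoods, so this does not affect the validity of the check.
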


\begin{corollary}\label{cor:clique}
A \ltwocc has clique number $2$ if and only if all elementary tiles of $G$ are one of $\pD\pD\fL$, $\pD\pD\fdL$, $\pH\fL$ and $\pH\fdL$, and no subsequence $\pD\pD\fL\pD\pD$ exists in $\sig(G)$. Otherwise, its clique number is $3$.
\end{corollary}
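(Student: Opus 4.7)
The plan is to combine the two preceding observations in a very direct way. First, I would note that by the observation that no large $2$-crossing-critical graph contains a $K_4$, the clique number $\omega(G)$ is at most $3$; since $G$ is $3$-connected and therefore certainly contains an edge, $\omega(G)\ge 2$. Thus the only thing to decide is whether $\omega(G)=3$ or $\omega(G)=2$, i.e., whether $G$ contains a triangle.

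Next, I would invoke \cref{lemma:picture_triangle}, which characterizes exactly when a triangle is present: either some elementary tile has a signature containing one of the letters $\pA, \pV, \pB$, in which case the triangle already lives inside a single tile, or else every triangle must come from a $\pD\pD\fL\pD\pD$-subsequence of $\sig(G)$. Therefore $\omega(G)=2$ holds if and only if (i) no elementary tile of $G$ has $\pA, \pV$, or $\pB$ in its signature, and (ii) no $\pD\pD\fL\pD\pD$ substring occurs in $\sig(G)$.

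To translate (i) into the tile list given in the corollary, I would inspect the $21$ pictures in \cref{fig:tilePictures}. The signature of a picture is the concatenation of names of its top and bottom paths (with an extra \pI\ for identifications), where each path name is drawn from $\{\pA,\pV,\pD,\pB\}$ together with the special full-picture name \pH. Excluding occurrences of $\pA$, $\pV$, $\pB$ leaves only pictures whose top and bottom paths are both \pD, or whose full name is \pH; by inspection these are precisely \pD\pD{} and \pH. Pairing each of the two remaining pictures with each of the two frames \fL{} and \fdL{} (\cref{fig:tileFrames}) yields the four tiles $\pD\pD\fL$, $\pD\pD\fdL$, $\pH\fL$, $\pH\fdL$ listed in the statement. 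Combining this with condition (ii) gives exactly the characterization of clique number~$2$, and all other cases fall into clique number~$3$.

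The only mildly delicate point is step (i)-to-tile-list: it relies on reading the figure to verify that no other picture avoids the letters $\pA,\pV,\pB$. Everything else is immediate bookkeeping once the two prior observations are in hand; the argument also yields a linear-time algorithm, as one only needs to scan $\sig(G)$ once to detect either a non-allowed tile or a $\pD\pD\fL\pD\pD$ substring.
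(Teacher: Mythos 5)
Your proposal is correct and follows exactly the route the paper intends: the paper states this corollary without an explicit proof, presenting it as an immediate consequence of Observation~\ref{lemma:picture_triangle} (triangles live in tiles with \pA, \pV, or \pB, or arise from a \pD\pD\fL\pD\pD-subsequence) together with the observation that no \ltwocc contains a $K_4$. Your bookkeeping step identifying $\pD\pD$ and \pH\ as the only pictures avoiding \pA, \pV, \pB\ is the same figure inspection the paper relies on implicitly.
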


\begin{corollary}
	Given the signature of a \ltwocc, its clique number can be determined in linear time.
\end{corollary}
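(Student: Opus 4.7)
The plan is to invoke Corollary \ref{cor:clique} directly: it reduces the computation of the clique number of a \ltwocc to two tests on its signature, namely (i) every elementary tile of $G$ is one of $\pD\pD\fL$, $\pD\pD\fdL$, $\pH\fL$, $\pH\fdL$, and (ii) $\sig(G)$, viewed cyclically, does not contain $\pD\pD\fL\pD\pD$ as a substring. The clique number is $2$ if both tests succeed, and $3$ otherwise. Hence it suffices to show that both tests can be performed in time linear in $|\sig(G)|$, which by Observation \ref{obs:order} is $\Theta(|V(G)|)$.

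First I would parse $\sig(G)$ into its sequence of elementary-tile signatures in a single left-to-right scan. This is unambiguous: every tile signature consists of a picture, using only the letters $\pA,\pV,\pD,\pB,\pH,\pI$, followed by a frame, either $\fL$ or $\fdL$. Since the letter $L$ appears only at the very end of a frame, each occurrence of $L$ marks the end of a tile, and whether the frame is $\fL$ or $\fdL$ is decided by inspecting the preceding character. Thus the parse runs in $O(|\sig(G)|)$ time.

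During the same scan I would maintain two boolean flags: one recording whether some parsed tile violates condition (i), and one recording whether the fixed-length pattern $\pD\pD\fL\pD\pD$ has been detected so far. Each flag can be updated in $O(1)$ per character by inspecting a constant-size window. To handle the cyclic nature of the signature for condition (ii), I would append the first four symbols of $\sig(G)$ to its end before the scan, so that occurrences of the pattern that straddle the wrap-around point of the original signature are captured as ordinary occurrences in the extended string.

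There is no deep obstacle here; the only minor subtleties are the unambiguous parsing of tiles and the proper treatment of the cyclic wrap-around, both resolved above. At termination, the clique number is declared to be $2$ if both flags are clear and $3$ otherwise, with correctness guaranteed by Corollary \ref{cor:clique}. The total running time is $O(|\sig(G)|)=O(|V(G)|)$, as required.
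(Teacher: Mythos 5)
Your proposal is correct and follows exactly the route the paper intends: the corollary is an immediate consequence of Corollary~\ref{cor:clique}, reducing the question to a constant-size-window scan of the (cyclic) signature, which the paper leaves implicit and you have simply spelled out. The parsing and wrap-around details you add are valid and do not change the argument.
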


	A \emph{matching} in a graph \change{$G$} is a subset of pairwise non-adjacent edges. %
It is \emph{perfect (near-perfect)} if its cardinality is $|V\change{(G)}| / 2$ ($(|V\change{(G)}|-1)/2$, resp.).
From the fact that each \ltwocc contains a Hamiltonian cycle which can be computed in linear time \cite{BKZ21}, we obtain:
\begin{obs}\label{obs:edgecover}
	Any \ltwocc~$G$\removed{~= (V,E)} has a perfect matching if $\vert V(G) \vert$ is even, and a near-perfect matching otherwise. In both cases, the matching can be computed in linear time by choosing every second edge of a Hamiltonian cycle.
\end{obs}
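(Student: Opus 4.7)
The plan is to reduce the problem entirely to the Hamiltonian-cycle machinery already available: by \cite{BKZ21}, every \ltwocc{} $G$ admits a Hamiltonian cycle $C$ that can be produced in linear time, and once $C$ is in hand the matching is essentially read off in one pass.

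First, I would invoke \cite{BKZ21} to obtain $C = v_0 v_1 \ldots v_{n-1} v_0$ with $n = |V(G)|$. Second, I would set
\[
   M \;:=\; \{\, v_{2i}\,v_{2i+1} \;:\; 0 \le i < \lfloor n/2 \rfloor \,\},
\]
the set of every second edge along $C$. These edges are pairwise vertex-disjoint by construction, so $M$ is a matching; this is the substance of the claim, and it requires no structural information about $G$ beyond Hamiltonicity.

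A one-line parity case split then finishes the existence claim. If $n$ is even, the indices $2i, 2i+1$ for $0 \le i < n/2$ exhaust $\{0, 1, \ldots, n-1\}$, so every vertex is matched and $|M| = n/2$, giving a perfect matching. If $n$ is odd, $M$ has size $(n-1)/2$ and leaves precisely the vertex $v_{n-1}$ uncovered, which is a near-perfect matching by definition.

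For the algorithmic claim, the Hamiltonian cycle is produced in linear time by \cite{BKZ21}, and a single traversal of the cycle selects the alternating edges in time $O(|V(G)|)$. There is no genuine obstacle here; the observation is essentially a corollary of the Hamiltonicity result, and the only care needed is to quote \cite{BKZ21} in the right form, namely that its algorithm outputs the cycle as a sequence of vertices in cyclic order so that alternate edges can be selected in one sweep.
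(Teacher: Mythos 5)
Your proposal is correct and takes essentially the same approach as the paper: the paper derives the observation directly from the linear-time Hamiltonian cycle result of \cite{BKZ21}, and your explicit selection of alternating edges with the parity case split is exactly that argument written out in detail.
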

\begin{definition}
	The \emph{edge covering number} of a graph $G$ is the minimal number of edges~$F$ in $G$ such that each vertex~$v \in V(G)$ is incident to an edge in $F$.
\end{definition}
Since a perfect matching yields a minimum edge cover, and a near-perfect matching requires only a single additional edge to become an edge cover, we have:
\begin{obs}
	The edge covering number of any large 2-crossing-critical graph is $\left\lceil \vert V\change{(G)} \vert/2 \right\rceil$.
\end{obs}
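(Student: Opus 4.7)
The plan is to combine the general lower bound on edge cover number with the matching guarantee of \cref{obs:edgecover}. For the lower bound, I would note that every edge is incident to at most two vertices, so any edge cover $F$ of $G$ satisfies $2|F| \geq |V(G)|$, whence $|F| \geq \lceil |V(G)|/2 \rceil$. This lower bound holds for any graph without isolated vertices, and our \ltwocc{}s are $3$-connected, so in particular have no isolated vertex.

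For the matching upper bound, I would split into the two cases provided by \cref{obs:edgecover}. If $|V(G)|$ is even, a perfect matching is itself an edge cover of size $|V(G)|/2 = \lceil |V(G)|/2 \rceil$, matching the lower bound. If $|V(G)|$ is odd, a near-perfect matching $M$ of size $(|V(G)|-1)/2$ leaves exactly one vertex $v$ uncovered; since $G$ is $3$-connected, $v$ has at least one incident edge $e$, and $M \cup \{e\}$ is an edge cover of size $(|V(G)|-1)/2 + 1 = \lceil |V(G)|/2 \rceil$, again matching the lower bound.

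There is no real obstacle here — the argument is essentially a two-line consequence of \cref{obs:edgecover} combined with the trivial counting lower bound. The only thing worth mentioning is that everything is algorithmic: the Hamiltonian cycle (and hence the matching) is computed in linear time by \cite{BKZ21}, and appending a single incident edge in the odd case is trivially linear, so the edge cover itself is constructible in linear time. I would conclude by briefly noting this algorithmic consequence to fit the paper's stated theme that all proofs yield linear-time algorithms for the corresponding measures.
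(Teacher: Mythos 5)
Your argument is correct and matches the paper's own (one-sentence) justification: the paper likewise derives the result from Observation~\ref{obs:edgecover} by noting that a perfect matching is already a minimum edge cover and a near-perfect matching needs only one extra edge, with the counting lower bound left implicit. You merely spell out the lower bound and the algorithmic remark in more detail; there is no substantive difference.
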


Most importantly, \ltwocc are linear time recognizable.
The general idea of Algorithm \ref{alg_l2cc_recognition} is to restrict ourselves to a linear number of constantly sized graphs; in each of them, finding elementary tiles only requires constant time.
In particular, this algorithms allows us to, in linear time, deduce the signature of a given graph if it belongs to the class; as such it will be the starting point for all subsequent algorithms to compute graph properties, as they can thus assume to be given the signature as input.

\SetKwFor{Loop}{loop}{}{}
\begin{algorithm}[tb]
	\KwIn{graph $G$}
	\KwOut{$\sig(G)$ if $G$ is a \ltwocc; $\emptyset$ otherwise}
	\SetKwInput{KwDefine}{Define}
	\KwDefine{$G\!\!\downarrow^v_8$ is the subgraph of $G$ marked by a breadth-first search of bounded depth 8 starting at vertex $v$.}
	\SetKw{KwBreak}{break} %
	\DontPrintSemicolon
	\SetAlgoNoEnd %
	\BlankLine
	\If{$\Delta(G) > 6$}{ \KwRet{$\emptyset$\hfill// $G$ is not a \ltwocc} }
	choose any vertex $v\in V(G)$\;
	let $\mathcal{X} := \{X \subseteq G\!\!\downarrow^{v}_8\ : X \textit{~is \change{tile-}isomorphic to a tile in $\mathcal{S}$}\}$ \;
	\ForEach{$X \in \mathcal{X}$}{
		$\sig(G) := \sig(X)$ \;
		$G^* :=$ $G$ without the non-wall vertices of $X$\;
		$w_T, w_B :=$ the top and bottom right wall vertices of $X$\;
			\Loop{}{
				Search for a subgraph $Y\subseteq G^*\!\!\downarrow^{x_1}_8$ that is \change{tile-}isomorphic to a tile in~$\mathcal{S}$ using $w_B$ as top left and $w_T$ as bottom left wall vertex (note the reversed order); prefer one with a $\fdL$-frame over one with an $\fL$-frame\;
			\lIf(\hfill\emph{// continue with next $X$}){$\nexists~Y$}{\KwBreak}
			append $\sig(Y)$ to $\sig(G)$\;
			$w_T, w_B :=$ the top and bottom right wall vertices of $Y$\;
			remove all vertices of $V(Y)\setminus\{w_T,w_B\}$ from $G^*$\;
			\If{$V(G^*)\setminus\{w_T,w_B\} = \emptyset$}{
				\If{$w_B,w_T$ are the top and bottom left wall vertices of $X$ (note the reversed order) and $\#L(\sig(G))$ is odd}
				{\KwRet{$\sig(G)$} \hfill\emph{// $G$ is a \ltwocc}}
				\lElse{\KwBreak \hfill\emph{// continue with next $X$}}
			}
		}
	}
	\KwRet{$\emptyset$ \hfill// $G$ is not a \ltwocc}
	\caption{\Ltwocc recognition algorithm, deducing the signature in the positive case.}
	\label{alg_l2cc_recognition}
\end{algorithm}

\begin{theorem}
	\removedS{Given a graph $G$, Algorithm~\ref{alg_l2cc_recognition} is a linear-time algorithm to check whether}\change{Algorithm~\ref{alg_l2cc_recognition} tests in linear-time whether a given graph} $G$ is a \ltwocc and, in the positive case, deduces a signature of~$G$.
\end{theorem}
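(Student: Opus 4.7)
The plan is to separately establish correctness and linear running time for Algorithm~\ref{alg_l2cc_recognition}.

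For correctness, I would start from Theorem~\ref{thm:3con2cc} and Definition~\ref{def:constructionLarge2ccGraphs}: any $G\in\mathcal C$ is a cyclization of $2m+1$ elementary tiles in $\mathcal S$, so every vertex $v\in V(G)$ lies in at least one elementary tile $T_i$. Since each elementary tile has constantly many vertices, every vertex of $T_i$ lies within BFS-distance $8$ of $v$; together with the early exit via Observation~\ref{obs:maxdeg} when $\Delta(G)>6$, this makes $G\!\!\downarrow^v_8$ constant in size and guarantees that $\mathcal X$ is non-empty and contains a ``correct'' starting tile corresponding to the decomposition of $G$.

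The heart of the correctness argument is uniqueness of the inner-loop extension. Given current right-wall vertices $w_T, w_B$, the next elementary tile in the cyclic decomposition of $G$ is forced to use $\{w_B, w_T\}$ as its left wall (in reversed order, realizing the flip $T^{\updownarrow}$ from \cref{it:t4}); once this choice is fixed, its non-wall vertices are precisely those newly reachable vertices in $G^*$. The only genuine freedom is the frame type, and because deleting the double edge embeds an \fL-tile into the corresponding \fdL-tile, the algorithm's preference for \fdL\ always recovers the intended frame. When $G^*$ is finally exhausted, the condition that the remaining $w_T, w_B$ coincide with $X$'s (reversed) left wall enforces the cyclization, and the parity check $\#\fL$ odd enforces the required odd tile-count $2m+1$. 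Conversely, if $G\notin\mathcal C$, then for no starting $X$ can the extension both consume all of $V(G)$ and pass the closing test, so the algorithm returns $\emptyset$.

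Linear running time follows by combining three ingredients. First, the initial degree check runs in $\mathcal O(|V|+|E|)$. Second, once $\Delta(G)\le 6$ is ensured, every BFS-ball of depth $8$ has size bounded by a constant, so $|\mathcal X|$ is constant and each inner search for the next tile $Y$ takes constant time. Third, each iteration of the inner loop deletes at least one non-wall vertex from $G^*$, hence the inner loop performs $\mathcal O(|V(G)|)$ iterations per choice of $X$. Multiplying by the constant bound on $|\mathcal X|$ yields the overall bound $\mathcal O(|V(G)|+|E(G)|)$, which is linear.

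The main obstacle I expect is the uniqueness step inside the inner loop: one must verify, by case analysis over the $42$ elementary tiles, that the local configuration around $\{w_T,w_B\}$ in $G^*$ never admits two genuinely different completions to tiles of $\mathcal S$ beyond the \fL/\fdL\ ambiguity resolved by the greedy preference. A cleaner route would be to first show that, in any valid tile decomposition of a graph in $\mathcal C$, the multiset $\{\sig(T_i)\}$ is an invariant of $G$; this would immediately imply that the algorithm's output is a genuine signature regardless of the initial choices of $v$ and $X$, completing correctness.
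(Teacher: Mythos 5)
Your proposal follows essentially the same route as the paper's proof: reject on $\Delta(G)>6$, use constant-size depth-8 BFS balls to get a constant candidate set $\mathcal X$, argue that the greedy \fdL-preference makes the inner-loop extension unique, and charge each inner-loop iteration a constant amount of deleted material to get linearity. The uniqueness step you flag as the main obstacle is handled in the paper by exactly the two assertions you anticipate (an \fdL-tile subgraph always contains the corresponding \fL-tile but not conversely, and tiles with distinct pictures cannot both be right neighbors of the previously identified tile), stated without the exhaustive case analysis, together with the requirement that candidate subgraphs be \emph{tile-isomorphic} (non-wall vertices have no outside neighbors), which plays the role of your ``newly reachable vertices'' condition.
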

\begin{proof}
	We can reject graphs with maximum degree $\Delta(G)>6$ in linear time (line~1).
	\change{We say a subgraph $H$ of $G$ is \emph{tile-isomorphic} to a tile $T$, if $H$ is isomorphic to $T$, the non-wall vertices of $H$ have no neighbors other than those described by $T$, and wall vertices of $H$ are only adjacent if they are adjacent in~$T$.}
	We compute a subgraph $G\!\!\downarrow^v_8$ via a breadth-first search of bounded depth 8 starting at some arbitrary vertex $v$.
	This subgraph has constant size and can be found in constant time since $\Delta(G) \leq 6$.
	Furthermore, as the number $|\mathcal{S}|$ of possible tiles is constant, we can find the (constantly sized) set $\mathcal{X}$ of all subgraphs of $G\!\!\downarrow^v_8$ that are \change{tile-}isomorphic to a tile of $\mathcal{S}$ in constant time as well (line~4).
	The depth 8 is chosen so that, if $G$ is a \ltwocc, it is guaranteed that  $G\!\!\downarrow^v_8$ contains some subgraph $H$ is \change{tile-}isomorphic to a tile of $\mathcal{S}$; thus $H\in\mathcal{X}$. The set $\mathcal{X}$ thus serves as a candidate list for $T$. We run the subsequent test (lines 6--18) for each $X\in\mathcal{X}$ (for-loop starting at line 5):

	We remove $X$, retaining its wall vertices, and look for the neighboring tile~$Y$ to the right. Again, this search only requires constant effort (line 10).
	In the positive case,
	after removing all of $Y$ except its right wall vertices, we can iterate this process to identify all subsequent neighboring tiles, until we reattach -- after an overall odd number of tiles -- to the left wall of the initial tile $X$. By definition, we have to assure that the subsequent tiles use the common wall vertices in reverse order.
	If this process fails at any point, we reject the starting tile $X$ and proceed with the next iteration of the for-loop, i.e., the next candidate from~$\mathcal{X}$. If no iteration of the for-loop succeeds, we reject $G$.

	In each iteration of the inner loop (lines 9--18) we either terminate the current for-loop iteration or remove a constant number of edges.
	Thus, the inner loop runs at most a linear number of times, each of its iterations requiring only constant time.
	This establishes the overall linear running time.

	It is easy to see that if the algorithm returns a non-empty signature $s$, the \ltwocc constructed from $s$ as per Definition~\ref{def:constructionLarge2ccGraphs} is isomorphic to $G$.
	On the other hand, suppose $G$ is a \ltwocc and let $s$ be a signature of $G$, such that $v$ is in the first elementary tile~$T\in\mathcal{S}$ of~$s$.
	From the definition of $\mathcal{X}$, it follows that $T \in \mathcal{X}$.
	We only need to focus on the for-loop iteration in which $X = T$.
	If a graph has an elementary tile with a $\fdL$-frame as a subgraph, it also has an elementary tile with the same picture but an $\fL$-frame as a subgraph, but the converse is not true.
	Also given two elementary tiles with distinct pictures, at most one of them can be a subgraph that can be a right neighbor of the previously identified tile.
	Thus, in line 10, we obtain a unique potential candidate by prefering the new neighboring tile $Y$ to have a $\fdL$-frame if possible.
	Based on the structure that tiles are cleanly separated by wall vertices (see Def.~\ref{def:kTiledGraph} and \ref{def:constructionLarge2ccGraphs}), we consequently have that our algorithm will indeed find signature $s$.
\end{proof}

\section{Simple Crossing Number}
\label{sc:simple-cr}

In this section, we prove that the simple crossing number of each large $2$-crossing-critical graph equals its crossing number. To this end, we provide some definitions and briefly discuss their history.
The study of \emph{$1$-planar} graphs was initiated more than half a century ago by Ringel in the context of graph coloring~\cite{Rin65}.
Buchheim et al.\ introduced the \emph{simple crossing number}, while engineering the first general exact algorithms for computing crossing numbers~\cite{Buc+05}.

\begin{definition}%
	A \emph{$1$-planar drawing} of a graph~$G$ is a drawing of~$G$ in the plane such that each of its edges crosses at most one other edge.
	A graph that admits a $1$-planar drawing is called \emph{$1$-planar}.
	The \emph{simple crossing number} ~$\scr(G)$ of $G$ is the minimal number of crossings over all $1$-planar drawings of $G$; we define $\scr(G) = \infty$ if no such drawing exists.
\end{definition}

Albeit this crossing number variant is also known as \emph{1-planar} crossing number, we prefer the term simple crossing number. This avoids confusion with the \emph{$k$-planar} crossing number, $k\in\mathbb{N}$, as defined by Owens (there, the graph is partitioned into $k$ edge-sets and only the crossings in each set are counted)~\cite{Owe71, Sch13}.

By definition, $\crn(G) \leq \scr(G)$. We remark that in general, $\crn(G) \neq \scr(G)$ and there are graphs~$G$ with $\crn(G) = 2$
but $\scr(G) > 2$ even on as few as $16$~vertices~\cite{Buc+05}.
By definition $\scr(G) \in O(n^2)$, but for example $\crn(K_n)\in\Theta(n^4)$;
in fact, already $K_7$ is not $1$-planar (see, e.g., \cite{Sch13,Sch17}).

\begin{figure}
	\centering
	\begin{subfigure}[t]{.48\textwidth}\centering
		\winIgnore{\includegraphics[scale=.65]{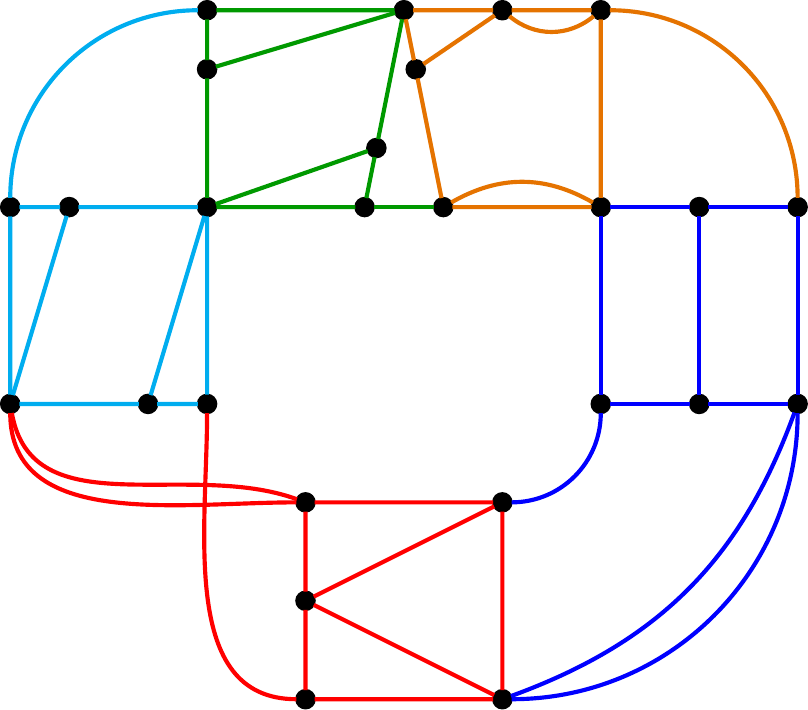}}
		\subcaption{$G$ still drawn as a Möbius strip but with the double crossing in the
			\emph{VIAdL}-tile (red) instead of the \emph{HdL}-tile (dark blue), to make it more similar to the figure to the right.}
	\end{subfigure}
	\hfill
	\begin{subfigure}[t]{.48\textwidth}\centering
		\winIgnore{\includegraphics[scale=.65]{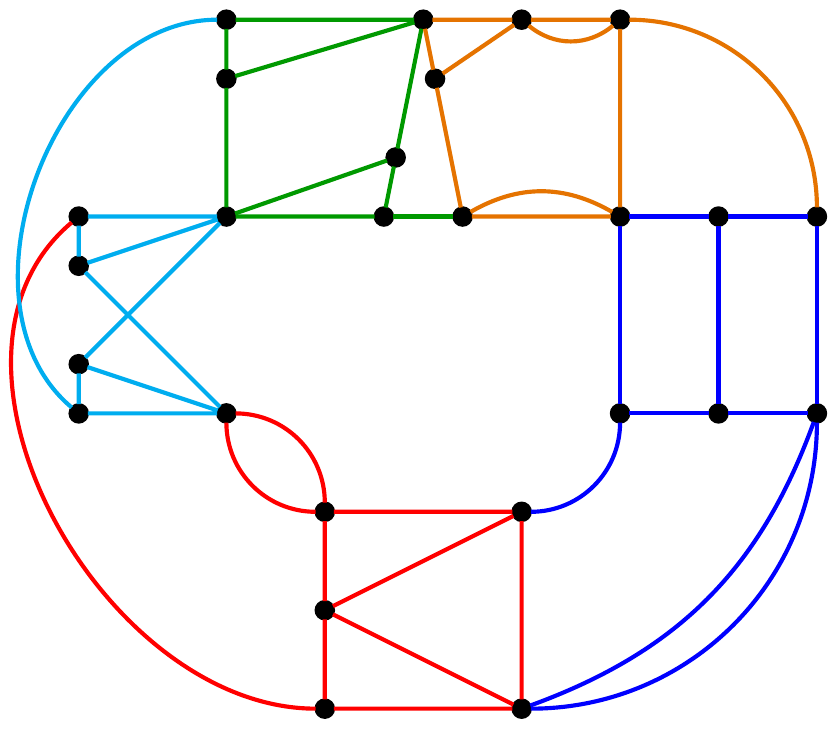}}
		\subcaption{A drawing showing $\scr(G)=2$ by twisting the left \emph{AAL}-tile (light blue), as schematized in \cref{TwistWithoutI}. \cref{TwistVIA} shows how redrawing e.g. the \emph{VIAdL}-tile would look like.}
	\end{subfigure}
	\caption{The example graph $G$ from \cref{fig:2cc_example} in the context of the simple crossing number.}
	\label{fig:simpledrawing}
\end{figure}

\begin{theorem}\label{thm:simple}
	Any large $2$-crossing-critical graph~$G$ has $\scr(G) = 2$.
\end{theorem}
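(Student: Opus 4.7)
The lower bound $\scr(G) \geq 2$ is immediate from $\crn(G) \geq \scr(G) \geq \crn(G) = 2$, since $G$ is $2$-crossing-critical and every $1$-planar drawing is in particular a drawing. The bulk of the work is therefore to exhibit a $1$-planar drawing of $G$ with exactly two crossings, and my plan is to massage the canonical two-crossing drawing that arises from the tile construction of Definition~\ref{def:constructionLarge2ccGraphs}.

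Recall the remark following Definition~\ref{def:constructionLarge2ccGraphs}: if we vertically flip every second tile in the cyclic sequence, all tile-to-tile joins become planar, and the non-planarity of $G$ is concentrated in a single tile $T^\ast$, where the two right wall vertices have to be swapped via a ``twist''. The resulting drawing realises $\crn(G) = 2$, but the two crossings may a priori share an edge, in which case the drawing is not $1$-planar. The strategy is therefore to show that we can always arrange the twist to produce two crossings on four distinct edges, possibly after choosing which tile plays the role of $T^\ast$.

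The main step is a local redrawing argument for a single tile. For each of the $21$ pictures in \cref{fig:tilePictures} combined with each of the two frames in \cref{fig:tileFrames}, I would explicitly draw the ``twisted'' variant obtained by swapping $y_1$ and $y_2$ and route the two necessarily-crossing edge pairs in such a way that they meet in two distinct crossings on four distinct edges. Tiles whose picture contains no \pI{} admit this easily, since the top and bottom paths are vertex-disjoint and each contributes one of the crossing edges; the schematic \emph{TwistWithoutI} referenced in \cref{fig:simpledrawing} captures this generic pattern. Tiles whose picture does contain an \pI{} (and hence share a vertex between top and bottom paths) require a slightly different routing, illustrated schematically by \emph{TwistVIA}; here one pushes one of the crossings onto a neighbouring non-shared edge by perturbing the picture inside its frame. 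A short case-by-case inspection of the $21$ pictures, grouped by whether they contain \pI{} and by the structure at the right wall, is intended to cover all elementary tiles.

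Once this local lemma is in hand, the global argument is immediate: pick any tile of $G$, designate it as $T^\ast$, apply the planar drawings to the remaining even number of tiles (flipping every second one), and plug in the $1$-planar twisted drawing of $T^\ast$ to obtain a $1$-planar drawing of $G$ with exactly two crossings, which yields $\scr(G) \leq 2$. Since the tile decomposition (and hence a concrete $T^\ast$ together with its $1$-planar twist) is produced by Algorithm~\ref{alg_l2cc_recognition} in linear time, the construction is also algorithmic.

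The main obstacle will be the per-tile verification for the \pI{}-containing pictures, where the twist genuinely wants to reuse an edge incident to the identified vertex; the delicate point is to check that in every such case one can reroute one of the crossing strands through the picture's interior without creating a third crossing or violating the good-drawing conditions. If some exceptional tile resists a $1$-planar twist, the backup plan is to exploit that every graph in $\mathcal{C}$ has at least $2m{+}1 \geq 3$ tiles, so one is free to choose $T^\ast$ among them; a short combinatorial argument then has to rule out the possibility that every tile of $G$ is of the bad type.
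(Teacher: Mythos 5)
Your plan matches the paper's proof essentially step for step: both establish the lower bound from $\crn(G)=2$, both realise the upper bound by twisting a single tile in the Möbius-strip drawing and drawing all other tiles planarly, and both split the per-tile verification into pictures without \pI{} (one generic schematic) and pictures with \pI{} (explicit drawings). The paper executes the \pI{} case exactly as you anticipate, observing that only four pictures contain \pI{}, two of which are mirror images of the others, and it handles the \fL-frame by contracting the double edges of the \fdL-frame drawings — so your ``backup plan'' of choosing a different $T^\ast$ is never needed.
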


\begin{proof}
	Since $2=\crn(G) \leq \scr(G)$, the claim follows if each $G$ admits a $1$-planar drawing with $2$ crossings.

	We achieve this by performing a \emph{twist operation} at a single elementary tile~$X$:
	In the natural drawing on the Möbius strip (cf.\ Figure~\ref{fig:2cc_example}) each tile is drawn planarly but we cannot identify the left-most with the right-most wall vertices in a planar fashion.
	Twisting a tile $X$ within this drawing means to invert the vertical order of its \change{left or} right wall vertices, thereby incurring some crossings within~$X$. Given this twisted tile, all subsequent tiles can be planarly drawn \removedS{by vertically inverting them}and we can now identify the left-most and right-most wall vertices planarly  (cf.\ Figure~\ref{fig:simpledrawing}). Thus we do not need any crossings except for those within~$X$; we will discuss them below.

	\begin{figure}
	\centering
	\begin{subfigure}[c]{\textwidth}
		\centering
		\winIgnore{\includegraphics[scale=.6]{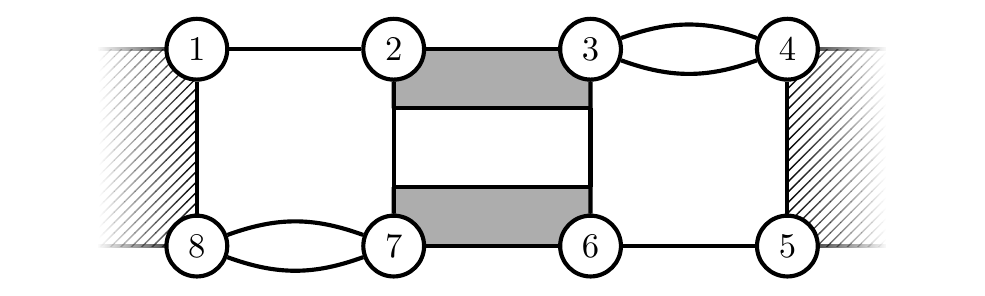}}%
		\clap{\raisebox{0.75cm}{$\Rightarrow$}}%
		\winIgnore{\includegraphics[scale=.6]{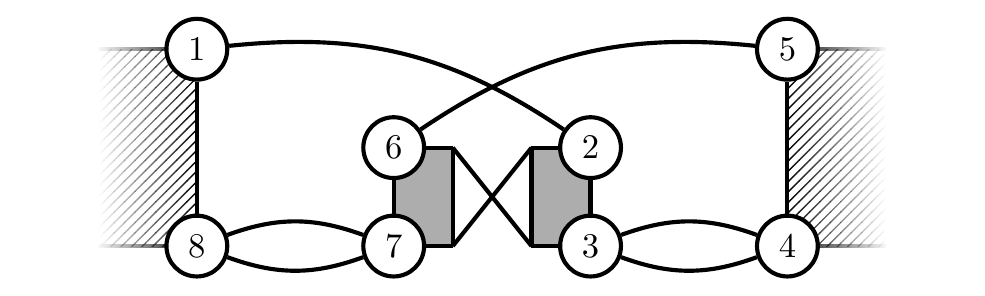}}
		\subcaption{Twisting a tile without $\pI$.}
		\label{TwistWithoutI}
	\end{subfigure}
	\begin{subfigure}[c]{\textwidth}
		\centering
		\winIgnore{\includegraphics[scale=.6]{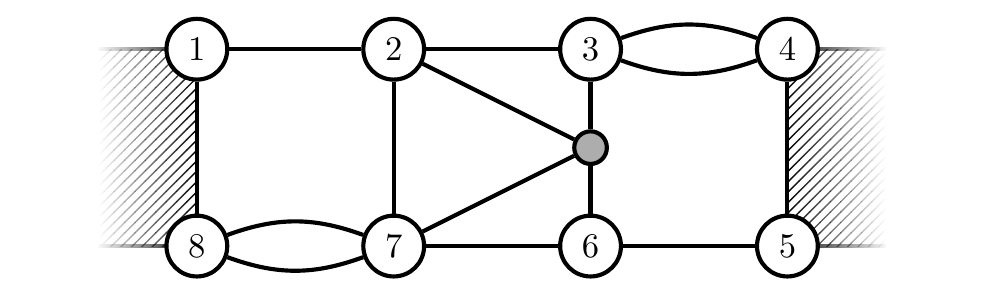}}%
		\clap{\raisebox{0.75cm}{$\Rightarrow$}}%
		\winIgnore{\includegraphics[scale=.6]{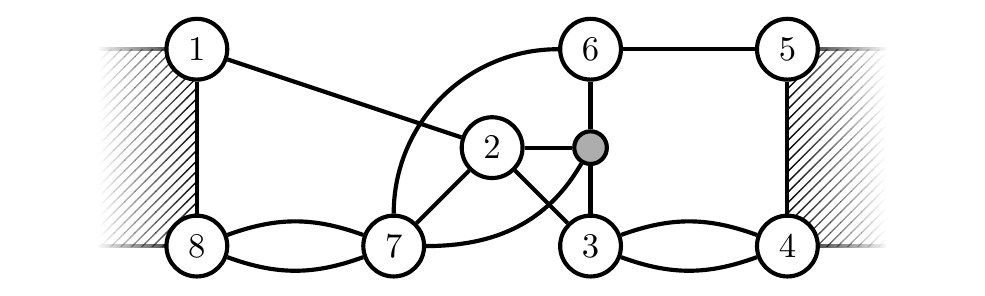}}
		\subcaption{Twisting a $\pV\pI\pA\fdL$-tile.}
		\label{TwistVIA}
	\end{subfigure}
	\begin{subfigure}[c]{\textwidth}
	\centering
	\winIgnore{\includegraphics[scale=.6]{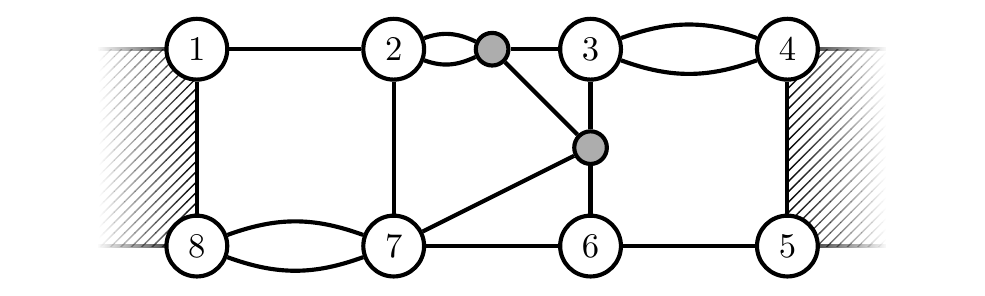}}%
	\clap{\raisebox{0.75cm}{$\Rightarrow$}}%
	\winIgnore{\includegraphics[scale=.6]{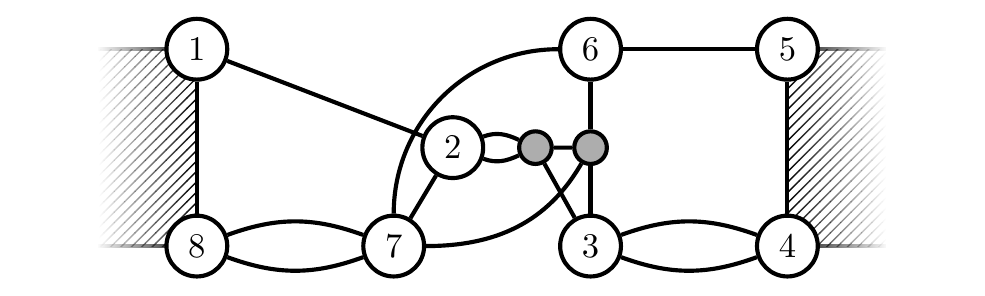}}
	\subcaption{Twisting a $\pB\pI\pA\fdL$-tile.}
	\label{TwistBIA}
	\end{subfigure}
	\caption{Twisting a tile.}
	\label{Twist}
	\end{figure}

	First, we consider a twisted tile consisting of a $\fdL$-frame and a picture without $\pI$.
	\Cref{TwistWithoutI} gives an abstract sketch (as well as its twist) of such tiles where the gray area hides crossing-free picture details.
	Hence, the twisting of these tiles can be drawn $1$-planarly with $2$-crossings.

	Next, we prove the claim for any twisted tile consisting of a $\fdL$-frame and a picture with identification.
	\removed{$1$-planar drawings of twisted tiles containing a  $\pV \pI \pA$-picture are given in Figure 5b.
	It is easy to see that frames containing a picture with a $\pB$ and an identification can be drawn in a similar way as $\pV \pI \pA \fdL$ by placing the end vertices of the double edge in $\pB$ paths close together.}\change{To this end, recall that there are only four pictures with identification: $\pV \pI \pA$, $\pB \pI \pA$, $\pA \pI \pV$, and $\pA \pI \pB$.
	We give $1$-planar drawings of a twisted $\pV \pI \pA \fdL$-tile and  a twisted $\pB \pI \pA \fdL$-tile in \Cref{TwistVIA,TwistBIA}.
		The solutions for $\pA \pI \pV \fdL$- and $\pA \pI \pB \fdL$-tiles are identical up to mirroring.}
	\removedS{Note that}\change{C}ontracting the double edges $(3,4)$ and/or $(7,8)$ in the given drawings maintains $1$-planarity and the simple crossing number.
	Hence, the given drawings can be transformed to tiles with an $\fL$-frame.
\end{proof}

\section{Chromatic Number}
\label{sc:chromaticnum}
The question whether four colors are sufficient to color a map (in the sense of a separation of the plane into contiguous regions) such that no two adjacent regions (e.g., countries in a visual representation of national territories) are colored the same, plagued mathematicians and many other researchers since the late 19th century.
It was finally, but not uncontroversially, positively answered in 1976 facilitating a computer-assisted proof~\cite{AH77}.
Consequently, any planar graph has a vertex coloring that uses at most four colors.
For general graphs however, it is NP-hard to decide whether a given number~$k \geq 3$ of colors suffices (the smallest such $k$ is the graph's \emph{chromatic number})~\cite{GJ79} and even constant-factor approximations in polynomial time are impossible (unless P$=$NP)~\cite{Zuc07}.
The chromatic number of graphs is of interest in applications like scheduling, register allocation, and pattern matching
\cite{Cha82, Mar04, Lew16}. Ringel proved that $1$-planar graphs can be colored using at most seven colors \cite{Rin65}.

In this section, we study the chromatic number of \ltwocc{}s. We start by a characterization of bipartite, i.e., $2$-colorable, such graphs in Theorem~\ref{thm:CharacterizationBipartite} and proceed to improve on Ringel's result by proving that each \ltwocc is $4$-colorable, cf.~Theorem~\ref{thm:4Colorability}.
To show that this bound is tight (at least in some cases), we present an infinite family of \ltwocc{s} that are not $3$-colorable.
Finally, this is complemented by an infinite family of \ltwocc{}s with chromatic number~$3$.

\begin{definition}
		A \emph{(vertex) coloring} of a graph~$G$ is a function~$c\colon V(G) \rightarrow \mathbb{N}^+$, such that $c(v) \neq c(w)$ for every edge $vw \in E(G)$.
		Graph~$G$ is \emph{$k$-colorable} if it admits a coloring using at most $k$ colors and we call such a coloring a \emph{$k$-coloring}.
        The \emph{chromatic number} of~$G$ is the smallest $k$ such that $G$ is $k$-colorable.
\end{definition}

\begin{figure}
	\centering
	\begin{subfigure}[c]{\textwidth}
		\centering
		\winIgnore{\includegraphics[width=\textwidth]{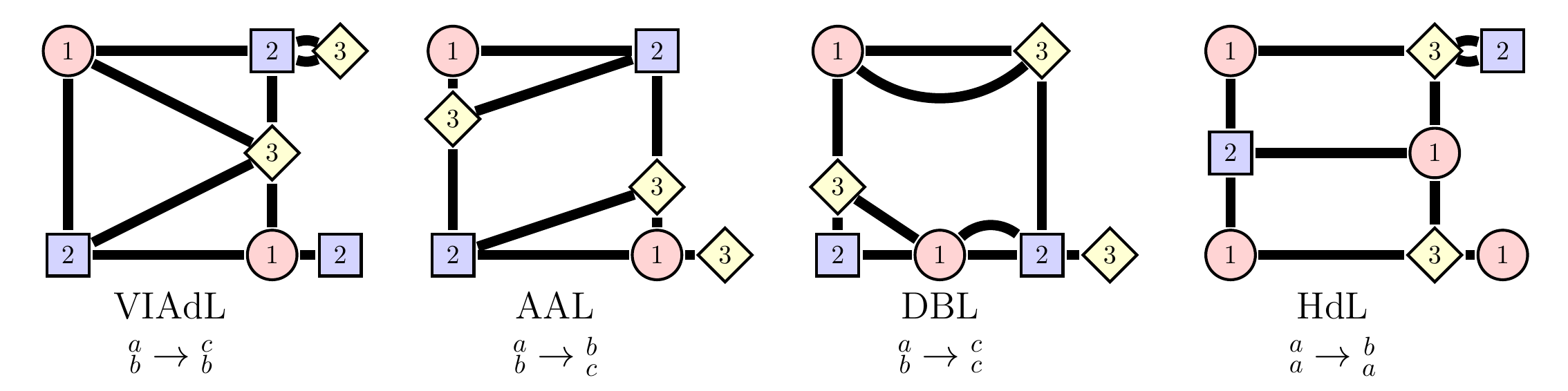}}
		\caption{Individual \change{$3$-}colorings of each elementary tile in the graph. While they are not compatible to each other as is, they can be thought of as propagations, see text.\label{fig:nodecol-example-pre}
	}
	\end{subfigure}
	\\
	\begin{subfigure}[c]{\textwidth}
		\centering
		\winIgnore{\includegraphics[scale=0.4]{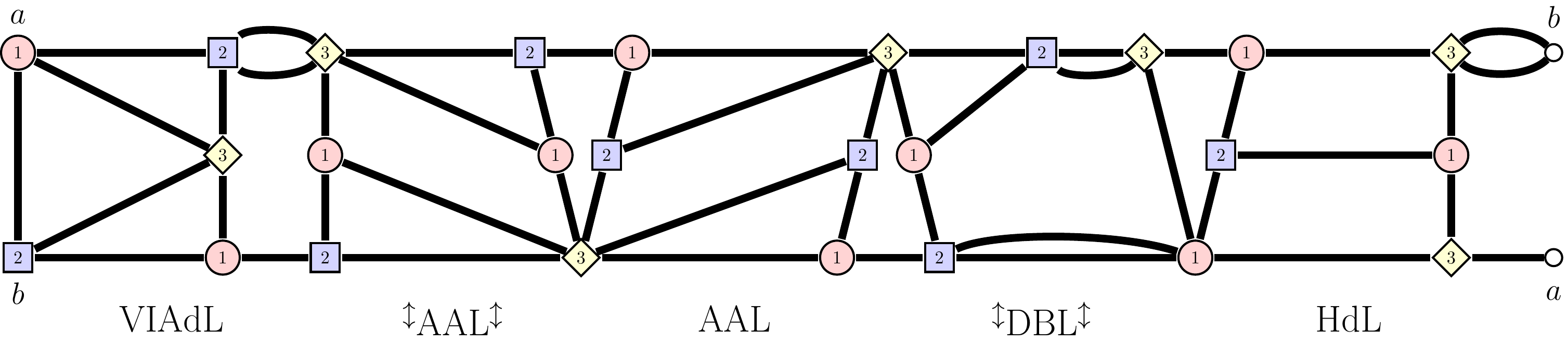}}
		\caption{A \removedS{$4$}\change{$3$}-coloring of the example graph from \cref{fig:2cc_example}, based on the above propagations.\label{fig:nodecol-example-post}}
	\end{subfigure}
	\caption{Using propagations to color a graph.}
	\label{fig:nodecol-example}
\end{figure}

In \cref{fig:nodecol-example-pre} we can see that each elementary tile of the example graph from \cref{fig:2cc_example} can be colored with at most \removedS{$4$}\change{$3$} colors on its own.
But if we were to use these exact colorings in the full graph, the colorings of the wall nodes would clash.
In the following, while we formally construct the graphs by joining tiles whose right wall is inverted (see Definition~\ref{def:constructionLarge2ccGraphs}), it will be helpful to take the viewpoint already discussed following that definition, that we may vertically invert every second tile completely. This allows us to (mentally and in the figures) visualize each tile planarly.

We view every coloring of \cref{fig:nodecol-example-pre} as a \emph{$2$-propagation} (to be formally defined below), and substitute explicit colors as needed.
Intuitively (cf.\ \cref{fig:nodecol-example-post}), start with coloring the \emph{VIAdL}-tile as proposed by its individual coloring. As its right wall vertices (with now fixed colors) are the left wall vertices of the neighboring \emph{AAL}-tile,
we cannot directly color the latter tile as desired by its individual coloring. Let $x_1,x_2$ and $y_1,y_2$ denote the two left and right wall vertices of this \emph{AAL}-tile (observe that it is drawn as ${}^{\updownarrow}\!\emph{AAL}^{\updownarrow}$ in~$G$), respectively. We are only interested in the following properties of the \emph{AAL}-coloring in \cref{fig:nodecol-example-pre}: $c(y_1)$ is distinct from $c(x_1)$ and $c(x_2)$, and $c(y_2) \removed{\neq}\change{=} c(x_2)$. This allows us to substitute the color classes within this tile accordingly and proceed with the next tile. Put chiefly, a $2$-propagation is the notion that, given a coloring of its left wall vertices, we know about the existence of a tile-coloring yielding certain coloring-properties on its right wall vertices.
This concept can be formalized as follows:
\begin{definition}
	Let $T=(G,x,y)$ be a \change{2-}tile.
Consider a vertex coloring~$c$ of $G$. The colors of $x_1,x_2$ ($y_1,y_2$) are the \emph{input colors} (\emph{output colors}, respectively; each in that order) of $T$.
Two $k$-colorings~$c,c'$ of~$T$ are \emph{equivalent},
if $(c(v) = c(w)) \!\Leftrightarrow\! (c'(v) = c'(w))$ for each pair of wall vertices~$v,w \in \{x_1,x_2,y_1,y_2\}$.
We call the induced equivalence classes \emph{(vertex-coloring-)propagations} and denote them by \[
	\substack{c(x_1)\\c(x_2)}\rightsquigarrow\substack{c(y_1)\\c(y_2)},
\]
using some representative coloring~$c$.
We may use the term \emph{$k$-propagation} to specify that $c$ is a $k$-coloring.
\end{definition}
To aid comprehensibility, whenever we state propagations, we will denote each color by a unique letter from $\{a,b,c,d\}$ instead of a number.
Observe that, while elementary tiles will yield the base cases of our propagations, the joins of tiles yield again tiles; therefore we may naturally concatenate propagations when joining two tiles, requiring only simple color substitutions in the second tile. Consider the first two tiles in the example graph \cref{fig:nodecol-example-post}: their propagations
$\substack{a\\b}\rightsquigarrow\substack{c\\d}$ and $\substack{a\\b}\rightsquigarrow\substack{c\\d}$ lead to
$\substack{a\\b}\rightsquigarrow\substack{c\\d}\rightsquigarrow\substack{a\\b}$ and thus the propagation $\substack{a\\b}\rightsquigarrow\substack{a\\b}$ over the first two tiles.

We first consider small odd cycles that arise already when joining two elementary tiles.

\begin{lemma}\label{BuildingBlocksBipartite}
	Let $\sig(G)=\sig(T_0) \ldots \sig(T_{2m})$ be a signature of a \ltwocc $G$. Then $G$ is not bipartite if $\sig(T_{2m})\sig(G)$
	contains an element of $\{\pD\pD\fdL\,\pH, \pD\pD\fL\,\pD\pD, \pH\fdL\,\pD\pD, \pH\fL\,\pH\}$ as a substring.
\end{lemma}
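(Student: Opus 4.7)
The plan is to prove non-bipartiteness by exhibiting, for each of the four listed two-tile patterns, an explicit odd cycle in $G$. Since a graph is bipartite if and only if it contains no odd cycle, finding one within the subgraph spanned by the corresponding consecutive elementary tiles is enough. The statement prepends $\sig(T_{2m})$ to $\sig(G)$ precisely to catch the wrap-around pair $(T_{2m}, T_0)$ arising from the cyclization $\bcirc$, so the substring search covers every cyclic pair of consecutive tiles.

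I would first dispatch the case $\pD\pD\fL\,\pD\pD$ by invoking Observation~\ref{lemma:graph_triangle}: the subsequence $\pD\pD\fL\pD\pD$ already produces a triangle, which is an odd cycle. For the three remaining substrings $\pD\pD\fdL\,\pH$, $\pH\fdL\,\pD\pD$, and $\pH\fL\,\pH$, no triangle arises by the same observation, so I would instead search for an odd cycle of length~$5$ in each two-tile subgraph. Each case is handled by drawing the joined subgraph explicitly from the pictures and frames in \cref{fig:tileFrameContent}, labelling all wall and internal vertices, and tracing a closed walk of odd length that crosses the tile interface and, in the $\fdL$ cases, makes use of the frame's double edge.

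The main obstacle is the bookkeeping for wall-vertex identifications: because the join operation glues inverted tiles $T_i^{\updownarrow}$, the right wall $(y_1, y_2)$ of the first tile is identified with the left wall of the second in reversed order, so adjacency across the interface must be read off from the reversed pairing. I expect each odd cycle to follow the same template: traverse a path through the first tile from one left-wall vertex to one right-wall vertex, hop across the interface (using the $\fdL$ double edge when available), and return through a path in the second tile to the vertex identified with the other left-wall vertex. The combined parities of the $\pD\pD$ and $\pH$ path lengths together with the $\fL$/$\fdL$ choice then yield odd total length precisely in the four listed patterns, while other pairings of consecutive tiles do not force an odd cycle at the interface.
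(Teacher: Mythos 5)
Your proposal takes essentially the same route as the paper: the published proof likewise observes that the subgraph for $\pD\pD\fL\,\pD\pD$ contains a triangle while those for $\pD\pD\fdL\,\pH$, $\pH\fdL\,\pD\pD$, and $\pH\fL\,\pH$ each contain a $5$-cycle, and concludes non-bipartiteness from the existence of an odd cycle. Your additional remarks on the reversed wall identification and the wrap-around pair are correct bookkeeping that the paper leaves implicit.
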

\begin{proof}
The subgraph corresponding to $\pD\pD\fL\,\pD\pD$ contains a triangle and the subgraphs corresponding to $\pH\fL\,\pH$, $\pD\pD\fdL\,\pH$, and $\pH\fdL\,\pD\pD$ each contain a $5$-cycle.
\end{proof}

\change{Recall, that by Observation~\ref{lemma:picture_triangle} every tile whose signature contains \pA, \pV, or \pB~has a triangle and is therefore not bipartite.}

Let us now consider the last ``global'' tile~$T$ whose cyclization yields a \ltwocc. We proceed to show that it is bipartite if none of the above local obstructions are present.
Note that in the following lemma, we do \emph{not} consider the final cyclization just yet.

\begin{lemma}\label{LemmaYieldingBipartite}
Let $T=T_0^{\updownarrow}\otimes T_1^{\updownarrow} \otimes \ldots\otimes T_{2m}^{\updownarrow}$ where $T_i \in \mathcal S$. The tile $T$ is bipartite if $\sig(T_i)\sig(T_{(i+1) \bmod (2m+1)})$ starts with $\pD\pD\fdL\,\pD\pD$, $\pD\pD\fL\,\pH$, $\pH\fdL\,\pH$, or $\pH\fL\,\pD\pD$ for each $i \in \mathbb N$ with $0 \leq i \leq 2m$.
\end{lemma}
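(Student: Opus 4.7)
The plan is to establish bipartiteness of $T$ by chaining $2$-propagations tile by tile along the sequence $T_0^{\updownarrow}, T_1^{\updownarrow}, \ldots, T_{2m}^{\updownarrow}$. First I would observe that under the hypothesis the only elementary tiles occurring in $T$ are $\pD\pD\fdL$, $\pD\pD\fL$, $\pH\fdL$, and $\pH\fL$: any other picture would contain one of $\pA,\pV,\pB$ and, by Observation~\ref{lemma:picture_triangle}, introduce a triangle; hence none of the allowed starts of $\sig(T_i)\sig(T_{(i+1) \bmod (2m+1)})$ are available. Each of these four elementary tiles is itself bipartite, as it consists of an $\fL$- or $\fdL$-frame together with an even-length picture ($\pD\pD$ or $\pH$), so a $2$-coloring exists and restricts to a $2$-propagation on the wall vertices.

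Next I would tabulate, for each of the four bipartite elementary tiles, the complete list of $2$-propagations $\substack{c(x_1)\\c(x_2)}\rightsquigarrow\substack{c(y_1)\\c(y_2)}$. Since $|V(G_i)|$ is small for every $T_i\in\mathcal S$, this is a finite case analysis. I would then fold in the effect of $^{\updownarrow}$: for $T_i^{\updownarrow}=(G_i,x^{(i)},\overline{y^{(i)}})$ the roles of $y_1$ and $y_2$ are swapped in the output, and the join $T_i^{\updownarrow}\otimes T_{i+1}^{\updownarrow}$ identifies $y_2^{(i)}$ with $x_1^{(i+1)}$ and $y_1^{(i)}$ with $x_2^{(i+1)}$.

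Then, for each of the four permissible transitions $\pD\pD\fdL\,\pD\pD$, $\pD\pD\fL\,\pH$, $\pH\fdL\,\pH$, and $\pH\fL\,\pD\pD$, I would verify that the (reversed) output pattern of the first tile is compatible with some input pattern of the second, so that their propagations compose. The crux is that the four allowed transitions are exactly the ones in which the parity contributed by the first tile (across $\fL$ vs. $\fdL$, and $\pD\pD$ vs. $\pH$) aligns with the parity required by the second; the four forbidden transitions listed in Lemma~\ref{BuildingBlocksBipartite} are ruled out precisely because they would force identification of two vertices in the same color class, which is manifested topologically as the $3$- or $5$-cycle found there.

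Finally, since $T$ is a linear (not cyclized) concatenation, no global parity obstruction arises: fixing any $2$-coloring on the left wall of $T_0$, the compatibility just verified lets us apply the matching propagation of $T_i^{\updownarrow}$ and pass a valid $2$-coloring to the left wall of $T_{i+1}^{\updownarrow}$, yielding by induction a $2$-coloring of all of $T$. The main obstacle I expect is purely bookkeeping: tracking the wall reversal caused by $^{\updownarrow}$ when composing propagations, and confirming that the small table of $2$-propagations for the four bipartite elementary tiles indeed has a matching entry for every permissible transition. Once that table is in hand, the induction is immediate.
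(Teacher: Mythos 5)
Your proposal is correct and follows essentially the same route as the paper: exhibit $2$-propagations for the four admissible elementary tiles ($\pD\pD\fL$, $\pD\pD\fdL$, $\pH\fL$, $\pH\fdL$), check that the allowed adjacent pairs compose (accounting for the wall reversal from $^{\updownarrow}$), and chain them along the uncyclized sequence. The paper merely organizes the composition step differently, by noting the signature matches $((\pD\pD\fdL)^*\pD\pD\fL(\pH\fdL)^*\pH\fL)^+$ and absorbing the $\fdL$-framed tiles as identity propagations, rather than verifying each of the four transitions separately.
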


\begin{figure}
	\centering
	\winIgnore{\includegraphics[width=\textwidth]{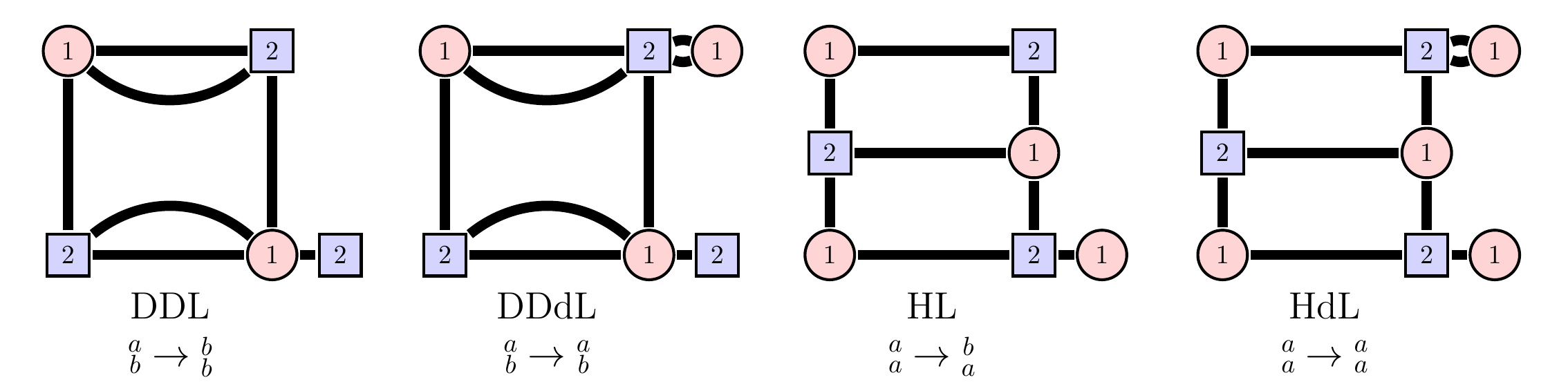}}
	\caption{2-vertex-colorings $P_{\fL}^{\pD\pD}$ for $\pD\pD\fL$, $P_{\fdL}^{\pD\pD}$ for $\pD\pD\fdL$, $P_{\fL}^{\pH}$ for $\pH\fL$, and $P_{\fdL}^{\pH}$ for $\pH\fdL$.
	}
	\label{FigureNodeColorBipartite}
\end{figure}

\begin{proof}
\cref{FigureNodeColorBipartite} shows the existence of the following $2$-propagations:
\begin{align*}
	P_{\fL}^{\pD\pD} \coloneqq \substack{a\\b}\rightsquigarrow\substack{b\\b} &\text{, for \pD\pD\fL-tiles,}\\
	P_{\fdL}^{\pD\pD} \coloneqq \substack{a\\b}\rightsquigarrow\substack{a\\b} &\text{, for \pD\pD\fdL-tiles,}\\
	P_{\fL}^{\pH} \coloneqq \substack{a\\a}\rightsquigarrow\substack{b\\a} &\text{, for \pH\fL-tiles,}\\
	P_{\fdL}^{\pH} \coloneqq \substack{a\\a}\rightsquigarrow\substack{a\\a} &\text{, for \pH\fdL-tiles.}
\end{align*}
	By assumption each elementary tile is from $\{\pD\pD\fL, \pD\pD\fdL, \pH\fL, \pH\fdL\}$ and $\sig(T)$ is a subsequence of $((\pD\pD\fdL)^*\pD\pD\fL(\pH\fdL)^*\pH\fL)^+$ (using standard notation of regular expressions).

	First we will look only at the case where each elementary tile of $T$ is either $\pD\pD\fL$ or $\pH\fL$.
	Then $\sig(T)$ is a subsequence of $(\pD\pD\fL\,\pH\fL)^+$.
	Every subsequence $\pD\pD\fL\,\pH\fL$ of the latter admits the propagation $\substack{a\\b}\rightsquigarrow\substack{b\\b}\rightsquigarrow\substack{b\\a}$ by using $P_{\fL}^{\pH}$ and $P_{\fL}^{\pD\pD}$ (observe that the vertical order reverses in the second propagation as this tile is vertically inverted w.r.t.\ the first).
	Iterating these propagations yields a $2$-coloring.

	If we now also allow  $\pH\fdL$-tiles, we see that each maximal
	$(\pH\fdL)^+$-subsequence admits the propagation $\substack{a\\a}\rightsquigarrow\substack{a\\a}$ by repeatedly using $P_{\fdL}^{\pH}$.
	Therefore each subsequence $(\pH\fdL)^+\pH\fL$ admits the same overall propagation as an individual \pH\fL-tile.
	Similarly, any $(\pD\pD\fdL)^+$-subsequence admits the propagation $\substack{a\\b}\rightsquigarrow\substack{a\\b}$ by repeatedly using $P_{\fdL}^{\pH}$,
	and thus $(\pD\pD\fdL)^+\pD\pD\fL$ admits the same overall propagation as an individual \pD\pD\fL-tile.
	Thus $T$ is $2$-colorable.

\end{proof}

Finally, we can fully characterize bipartite \ltwocc{}s.
\begin{theorem}\label{thm:CharacterizationBipartite}
	A \ltwocc~$G$ is 2-colorable if and only if its signature can be written as $\sig(G)=\sig(T_0)\ldots\sig(T_{2m})$, where $T_i$ is an elementary tile for $0\leq i \leq 2m$, such that:
	\begin{enumerate}[(i)]
		\item tile~$T_0$ contains an \pH-picture \change{(defined in  Figure \ref{fig:tilePictures})}, and
		\item each $\sig(T_i)\sig(T_{(i+1) \bmod (2m+1)})$ starts with $\pD\pD\fdL\,\pD\pD$, $\pD\pD\fL\,\pH$, $\pH\fdL\,\pH$, or $\pH\fL\,\pD\pD$ for $0 \leq i \leq 2m+1$; and
		\item the number of $\fL$ frames in $\{T_{2i}\}_{0 \leq i \leq m}$ is odd.
	\end{enumerate}
\end{theorem}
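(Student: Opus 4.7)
The plan is to prove both directions separately. For the forward direction, that $G$ bipartite implies the three conditions, I would first observe that a bipartite graph contains no triangle, so by Observation~\ref{lemma:picture_triangle} no elementary tile in $\sig(G)$ can contain an \pA, \pV, or \pB, and thus each $T_i$ is one of $\pD\pD\fL$, $\pD\pD\fdL$, $\pH\fL$, $\pH\fdL$. Applying the contrapositive of Lemma~\ref{BuildingBlocksBipartite} then excludes each of the four forbidden two-tile substrings from occurring cyclically; the only remaining compatible adjacencies between tiles in that set are precisely the four pairs listed in condition~(ii). Next I would rule out the all-$\pD\pD$ case: by (ii), a $\pD\pD\fL$-tile must be followed by an \pH-tile, so such a cyclic sequence would consist entirely of $\pD\pD\fdL$-tiles; using the propagation $P_{\fdL}^{\pD\pD}$ and observing that each of the $2m+1$ tile joins (including the cyclization) effectively swaps the two wall colors, one obtains an odd number of swaps around the cycle, yielding a parity contradiction. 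Hence $\sig(G)$ contains at least one \pH-picture, and cyclically rotating to place such a tile at position~$T_0$ establishes~(i).

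Condition~(iii) is the real crux. I would track around the full cycle a single parity bit encoding the ``orientation'' of the induced 2-coloring on each consecutive pair of wall vertices, starting from a fixed choice at the left wall of~$T_0$. Using the propagations from the proof of Lemma~\ref{LemmaYieldingBipartite} together with their $a\leftrightarrow b$-symmetric variants, a case analysis yields that, in the decomposition $\sig(G)=\sig(T_0)\cdots\sig(T_{2m})$, tiles at even positions and tiles at odd positions contribute to this parity differently: every second tile is effectively vertically inverted in the natural drawing (as highlighted by the rewriting after Definition~\ref{def:constructionLarge2ccGraphs}), so an $\fL$-frame at an even position contributes a flip whereas the remaining tiles do not. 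Making this bookkeeping precise—accounting for the swap induced by every join $T_i^{\updownarrow}\otimes T_{i+1}^{\updownarrow}$ and by the final cyclization—one concludes that cyclic consistency of any 2-coloring is equivalent to having an odd number of $\fL$-frames among $\{T_0, T_2, \ldots, T_{2m}\}$, which is~(iii).

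For the backward direction, assume (i)--(iii). Lemma~\ref{LemmaYieldingBipartite} supplies a proper 2-coloring of the non-cyclized tile $T = T_0^{\updownarrow} \otimes \cdots \otimes T_{2m}^{\updownarrow}$; however, each $\fL$-tile admits a choice between its listed propagation and the horizontally-mirrored variant, which yields several candidate colorings of~$T$. Condition~(iii) is precisely what guarantees that at least one such combination of choices yields output colors on the right wall of~$T$ that agree, after the cyclization swap, with the input colors on the left wall of~$T_0$; extending this coloring planarly inside each tile then gives a proper 2-coloring of $\bcirc T = G$. The main obstacle throughout is the precise parity bookkeeping for~(iii): distinguishing the contribution of each of the four tile types and confirming that the even-indexed positions (rather than odd or all positions) are the ones whose $\fL$-count must be odd, which emerges only after careful tracking of the alternating orientation convention and the direction of the closing twist.
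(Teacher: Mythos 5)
Your treatment of conditions (i) and (ii) matches the paper's: triangle-freeness plus Observation~\ref{lemma:picture_triangle} restricts the tiles to $\{\pD\pD\fL,\pD\pD\fdL,\pH\fL,\pH\fdL\}$, the contrapositive of Lemma~\ref{BuildingBlocksBipartite} leaves exactly the four adjacencies of (ii), and ruling out the all-$\pD\pD\fdL$ cycle forces an \pH-picture; your propagation-with-odd-twist argument for the non-bipartiteness of $\bcirc(\pD\pD\fdL)^{2m+1}$ is a reasonable expansion of the paper's one-line assertion. The backward use of Lemma~\ref{LemmaYieldingBipartite} to 2-color the open strip $T=T_0^{\updownarrow}\otimes\cdots\otimes T_{2m}^{\updownarrow}$ is also the paper's route.

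The gap is in how you handle (iii). You describe the closing-up condition as "at least one combination of choices" among each tile's propagation and a mirrored variant, and you defer the decisive computation to "a case analysis" and "careful bookkeeping." But the open strip $T$ is connected and bipartite, so its 2-coloring is \emph{unique} up to a global swap of the two colors: once the left wall of $T_0$ is colored there are no per-tile choices to combine, and your framing of (iii) as guaranteeing that \emph{some} choice works does not correspond to anything in the actual situation (and would make the "only if" direction of the (iii)-equivalence genuinely harder, since you would have to rule out \emph{all} combinations). The paper exploits this uniqueness directly: since the \pH-picture in $T_0$ and condition (ii) force both walls of $T$ to be monochromatic, cyclic consistency reduces to $c(x_1)=c(y_1)$, i.e.\ to the parity of the length of \emph{any} $x_1$--$y_1$ path in $T$. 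Choosing the path that runs along the top of the even-indexed tiles and the bottom of the odd-indexed ones, each segment has length $2$ except the top of an \fL-frame, which has length $1$; hence the parity is exactly the number of \fL-frames at even positions, giving (iii) in both directions at once. Your "orientation bit" is gesturing at this, but without the uniqueness observation and the explicit frame distances it is not yet a proof; note also that an orientation of the coloring on a wall is not even well defined when the wall is monochromatic, which happens at every \pH-tile.
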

\begin{proof}
	By Observation \ref{lemma:picture_triangle} and Lemma \ref{BuildingBlocksBipartite} each bipartite \ltwocc satisfies (ii).
	Moreover, since $(\pD\pD\fd\fL)^k$ %
	for odd $k$ is not bipartite, each bipartite \ltwocc contains an $H$-picture, implying (i).

	Hence, it only remains to prove that a \ltwocc satisfying (i) and (ii) is bipartite if and only if it also satisfies (iii). To this end,
	assume~$G$ satisfies (i) and (ii).

	By Lemma \ref{LemmaYieldingBipartite}, $(G',\change{\langle x_1, x_2\rangle,\langle y_2, y_1\rangle}) \coloneqq T_0^{\updownarrow}\otimes T_1^{\updownarrow} \otimes \ldots\otimes T_{2m}^{\updownarrow}$ admits a $2$-coloring $c$.
	Since such a coloring is unique (up to isomorphism and relabeling of colors), $G$ is bipartite if and only if $c$ induces a $2$-coloring on~$G$, i.e.\ $c(x_1) = c(y_1)$ and $c(x_2) = c(y_2)$.

	As $T_0$ contains an \pH-picture $c(x_1) = c(x_2)$ and as furthermore $G$ is a \ltwocc $c(y_1) = c(y_2)$, $c$ induces a $2$-coloring on $G$ if and only if $c(x_1) = c(y_1)$.
\removed{We consider two cases:\\[1em]
		1. If $T_{2m}$ contains an $\pH$-picture, we have $T_{2m}=HdL$ by Lemma~\ref{BuildingBlocksBipartite.}
			Thus, $c(y_1) = c(y_2)$ $c$ induces a $2$-coloring on $G$ if and only if $c(x_1) = c(y_2)$ and $c(x_2) =  c(y_1)$ which is equivalent to $c(x_1) \neq c(y_1)$.
			We observe that $x_1$ has odd distance from $y_1$ in the unclosed global tile if and only if property (iii) holds.\\[1em]
	2. If $T_{2m}$ contains a $\pD$-picture, we have $T_{2m}=DDL$ by Lemma~\ref{BuildingBlocksBipartite.}
			Thus, $c$ induces a $2$-coloring on $G$ if and only if $c(x_1) \neq c(y_2)$ and $c(x_2) \neq c(y_1)$ which is equivalent to $c(x_1) = c(y_1)$.
			We observe that opposed to the first case, $x_1$ has even distance from $y_1$ if and only if property (iii) holds.\\[1em]
	}\change{
	To this end we look at the parity of a path between $x_1$ and $y_1$.
	As $G'$ is bipartite every such path has the same parity.
	Our path consists of the direct path between the (non-inverted) top wall nodes for tiles $T_{2i}$ for $0\leq i \leq m$ and the direct path between the (non-inverted) bottom wall nodes of tiles $T_{2i+1}$ for $0\leq i < m$.
	We notice, that the (edgewise-)distance between the bottom wall nodes of each of our four elementary tiles is always $2$.
	The same is the case for the distance between the top wall nodes for \fdL-framed elementary tiles but for \fL-framed elementary tiles the distance between the top wall nodes is $1$.
	Therefore if and only if property (iii) holds, the distance between $x_1$ and $y_1$ is even and therefore $c$ induces a $2$-coloring on $G$.
}
\end{proof}

In \cref{fig:nodecol-example}, we have already seen a $4$-coloring of the example graph.
Let us now generalize this way of coloring to show that, like any planar graph, indeed every \ltwocc requires at most $4$ colors.

\begin{figure}
	\centering
	\begin{minipage}[t]{.55\textwidth}
		\centering
		\winIgnore{\includegraphics[scale=.55]{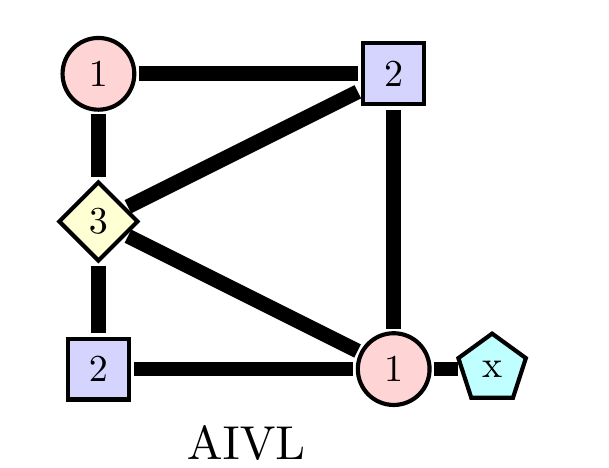}}
		\caption{The unique \mbox{$3$-coloring} of an \pA\pI\pV\fL-tile (up to color-substitution and the free choice for `x'$\neq1$).}
		\label{fig:nodecol-AIVL}
	\end{minipage}
	\hfill
	\begin{minipage}[t]{.35\textwidth}
		\centering
		\winIgnore{\includegraphics[scale=.55]{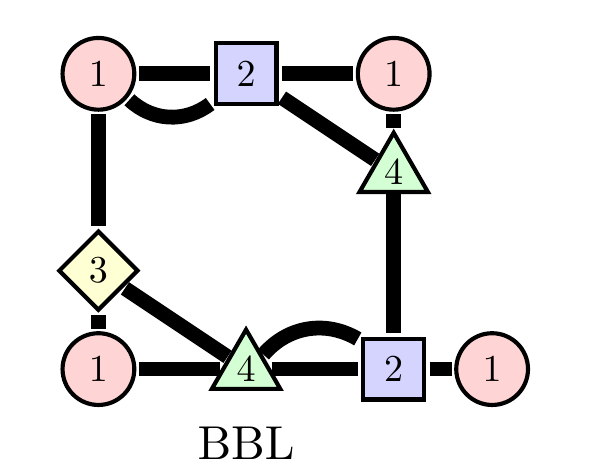}}
		\caption{A $3$-propagation $\protect\substack{a\\a}\rightsquigarrow\protect\substack{a\\a}$ of a \pB\pB\fL-tile.}
		\label{fig:nodecol-BBL}
	\end{minipage}
\end{figure}

\begin{theorem}\label{thm:4Colorability}
	Every \ltwocc is $4$-colorable.
\end{theorem}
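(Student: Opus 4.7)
My plan is to extend the propagation framework from Theorem~\ref{thm:CharacterizationBipartite} from $2$-colorings to $4$-colorings. For each of the $42$ elementary tiles $T \in \mathcal S$ I would exhibit two $4$-propagations: a \emph{preserving} propagation $\Pi_T \coloneqq \substack{a\\b}\rightsquigarrow\substack{a\\b}$ and a \emph{swap} propagation $\Sigma_T \coloneqq \substack{a\\b}\rightsquigarrow\substack{b\\a}$. The four bipartite tiles $\pD\pD\fL$, $\pD\pD\fdL$, $\pH\fL$, $\pH\fdL$ already admit $\Pi_T$ from Figure~\ref{FigureNodeColorBipartite}; for all other tiles (in particular those containing the triangle-rich pictures $\pA$, $\pV$, $\pB$, or an $\pI$-identification) I would construct the two propagations via a finite case analysis, using a third and fourth color inside the tile as needed. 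Figure~\ref{fig:nodecol-AIVL} already indicates that the rigidity of $\pA\pI\pV\fL$'s $3$-coloring will force the use of a fourth color for certain wall patterns.

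I would then thread the preserving propagations through the cyclic sequence $T_0^{\updownarrow}\otimes\cdots\otimes T_{2m}^{\updownarrow}$: since $\Pi_T$ preserves the ordered wall pair and each right-inversion $\updownarrow$ simultaneously swaps both wall slots, the two chosen colors $(a,b)$ transport consistently across all $2m+1$ joins. Finally, the cyclization $\bcirc$ identifies the right wall of $T_{2m}^{\updownarrow}$ with the left wall of $T_0^{\updownarrow}$; since the sequence has odd length and the twist introduces an additional orientation reversal, using $\Pi_T$ everywhere would force $a=b$ on the closing wall, which is impossible. To absorb this mismatch I would replace $\Pi_T$ at a single (freely chosen) tile by its companion $\Sigma_T$. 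A short check of how one swap perturbs the alternating wall pattern confirms that a single $\Sigma_T$ exactly aligns the final right wall with the initial left wall, yielding a global $4$-coloring.

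The main obstacle is the first step: systematically verifying that both $\Pi_T$ and $\Sigma_T$ exist for each elementary tile, especially for the four rigid $\pI$-identification tiles $\pV\pI\pA$, $\pB\pI\pA$, $\pA\pI\pV$, and $\pA\pI\pB$. The essentially unique $3$-coloring of such tiles (cf.~Figure~\ref{fig:nodecol-AIVL}) is precisely what forces us to exploit the fourth color in order to realize both propagations simultaneously. Once this finite verification is complete, Steps~2 and~3 reduce to the same kind of propagation bookkeeping as in the bipartite characterization, and the resulting algorithm even yields an explicit $4$-coloring in linear time from the signature.
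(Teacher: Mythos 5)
Your global bookkeeping (steps 2 and 3) is sound --- threading the identity propagation around the odd, twisted cycle would indeed force $a=b$ at the cyclization, and a single swap absorbs the mismatch --- but the proof collapses at step 1: the preserving propagation $\Pi_T=\substack{a\\b}\rightsquigarrow\substack{a\\b}$ does \emph{not} exist for any of the $21$ elementary tiles with an \fL-frame, no matter how many colors you allow. In such a tile the two top wall vertices are adjacent (the paper uses exactly this fact in the proof of Theorem~\ref{thm:CharacterizationBipartite}: for \fL-framed tiles the distance between the top wall nodes is $1$), so no proper coloring can satisfy $c(x_1)=c(y_1)$. You can already see this in Figure~\ref{FigureNodeColorBipartite}, which you cite as supplying $\Pi_T$ for the four bipartite tiles: the propagations listed there for $\pD\pD\fL$ and $\pH\fL$ are $\substack{a\\b}\rightsquigarrow\substack{b\\b}$ and $\substack{a\\a}\rightsquigarrow\substack{b\\a}$, neither of which preserves the top color; only the \fdL-framed tiles do. Since a \ltwocc may contain arbitrarily many \fL-framed tiles, ``use $\Pi_T$ everywhere except once'' is not an available strategy.

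The paper circumvents this by never asking a single tile to repeat an input color on its output wall: every elementary tile admits $\substack{a\\b}\rightsquigarrow\substack{c\\d}$ (both output colors fresh), so the identity propagation is recovered only over a \emph{pair} of consecutive tiles; and every tile also admits $\substack{a\\b}\rightsquigarrow\substack{c\\b}$, so three consecutive tiles compose to the swap $\substack{a\\b}\rightsquigarrow\substack{c\\b}\rightsquigarrow\substack{c\\a}\rightsquigarrow\substack{b\\a}$ that closes the cycle. If you regroup your argument around pairs of tiles (identity) plus one distinguished triple (swap), your outline becomes the paper's proof; as written, the per-tile $\Pi_T$ is the missing --- and unobtainable --- ingredient.
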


\begin{proof}
	It is easy to verify that every elementary tile admits a $4$-propagation $\substack{a\\b}\rightsquigarrow\substack{c\\d}$; we list all these propagations explicitly
	in \Cref{Figures4Colorable1,Figures4Colorable2} in the appendix. Thus, any tile that consists of two joined elementary tiles,
	admits the $4$-propagation $\substack{a\\b}\rightsquigarrow\substack{a\\b}$.
	We \removedS{may}use $\substack{a\\b}\rightsquigarrow\substack{a\\b}$ on all but $3$ consecutive elementary tiles and show that the join of these $3$ tiles has
	an $\substack{a\\b}\rightsquigarrow\substack{b\\a}$ propagation:
	As also shown in \Cref{Figures4Colorable1,Figures4Colorable2}, each elementary tile also admits the $4$-propagation $\substack{a\\b}\rightsquigarrow\substack{c\\b}$. Thus, three such tiles admit the required $4$-propagation (recall that the tile in the middle is drawn vertically inverted w.r.t.\ the other two):
	\[		\substack{a\\b} \rightsquigarrow \substack{c\\b} \rightsquigarrow \substack{c\\a} \rightsquigarrow \substack{b\\a}.\]
\end{proof}

Next, we present a class of large $2$-crossing-critical graphs \removedS{which}\change{that} are not $3$-colorable. This shows that the bound presented above is tight for an infinite number of cases.

\begin{obs}
	\removedS{Each}\change{Every} \ltwocc~\removedS{with}\change{where} every elementary tile \change{is an} \pA\pI\pV\fL\change{-tile} is not $3$-colorable.
\end{obs}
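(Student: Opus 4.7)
The strategy is to exploit the rigidity of any $3$-coloring of a single $\pA\pI\pV\fL$-tile, already advertised in Figure~\ref{fig:nodecol-AIVL}: up to permutation of colors and the free choice of the color of one internal vertex (the one labelled `x'), the $3$-coloring is unique. In particular, the induced $3$-propagation $\substack{c(x_1)\\c(x_2)}\rightsquigarrow\substack{c(y_1)\\c(y_2)}$ is essentially forced. First, I would inspect Figure~\ref{fig:nodecol-AIVL} and record explicitly the unique $3$-propagation, which one can read off as
\[
\substack{a\\b}\rightsquigarrow\substack{b\\a}.
\]
That is, any $3$-coloring of an $\pA\pI\pV\fL$-tile must swap the colors of its two wall columns.

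Next, I would combine this with the tile-join structure of Definition~\ref{def:constructionLarge2ccGraphs}. Recall that a large $2$-crossing-critical graph is obtained as $\bcirc(T_0^{\updownarrow}\otimes\cdots\otimes T_{2m}^{\updownarrow})$, so consecutive tiles are joined with the right wall of each $T_i$ inverted before being identified with the left wall of $T_{i+1}$. The swap propagation above, read in the opposite vertical order (due to the $\updownarrow$), becomes the propagation $\substack{a\\b}\rightsquigarrow\substack{a\\b}$ into the next tile. Therefore, after composing the $2m+1$ tile propagations, the colors of the left wall of $T_0$ agree with the colors of the (inverted) right wall of $T_{2m}$, in the order $\substack{a\\b}$.

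Finally, I would impose the cyclization step $\bcirc$: this identifies $x_i$ of $T_0$ with $y_i$ of $T_{2m}^{\updownarrow}$, i.e.\ with the \emph{reversed} right wall of $T_{2m}$. So the cyclization is colorable only if the propagated colors on the right wall of $T_{2m}$, \emph{before} the $\updownarrow$-reversal, are $\substack{b\\a}$, whereas the computation above showed they are $\substack{a\\b}$. Since $a\neq b$, the cycle cannot close up with a valid $3$-coloring, giving the required contradiction regardless of $m$.

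The main obstacle is really just bookkeeping: one must carefully track how the forced swap within each elementary tile interacts with the $\updownarrow$-inversions of joins and with the final reversal performed by $\bcirc$, and verify that the uniqueness claim of Figure~\ref{fig:nodecol-AIVL} genuinely forces the swap propagation (rather than permitting some other propagation via a clever permutation of colors). Once the uniqueness is stated cleanly and the parity argument is lined up with the cyclization, the contradiction is immediate and does not depend on the number of tiles, which is exactly what is needed to obtain an infinite family of $\ltwocc$s with chromatic number~$4$.
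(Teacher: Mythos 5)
Your overall strategy---read off the forced $3$-propagation of an $\pA\pI\pV\fL$-tile from Figure~\ref{fig:nodecol-AIVL}, compose it around the cycle, and derive a contradiction at the closing twist---is the same as the paper's. But the propagation you record is wrong, and the error is fatal rather than cosmetic. The rigidity of the $3$-coloring of an $\pA\pI\pV\fL$-tile does \emph{not} force the swap $\substack{a\\b}\rightsquigarrow\substack{b\\a}$; what the figure shows is that $c(y_1)=c(x_2)$ is forced while $c(y_2)=c(x_1)$ is impossible, so the only admissible $3$-propagations are $P_1=\substack{a\\b}\rightsquigarrow\substack{b\\c}$ (with $c$ a genuinely \emph{third} color) and $P_2=\substack{a\\b}\rightsquigarrow\substack{b\\b}$, the latter being unusable because the next tile's left wall vertices must receive distinct colors. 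The entire obstruction lives in the fact that $y_2$ is pushed onto a third color rather than back onto $c(x_1)$: composing $P_1$ over an even number of tiles gives the identity $\substack{a\\b}\rightsquigarrow\substack{a\\b}$, so the last, $(2m{+}1)$-st, tile would have to realize exactly the swap $\substack{a\\b}\rightsquigarrow\substack{b\\a}$ to survive the cyclization---and the swap is precisely the one propagation the tile cannot perform.

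Worse, if your premise were true, the statement you are proving would be false. Redo your own bookkeeping with the swap: each tile maps $(a,b)$ on its left wall to $(b,a)$ on its right wall, the $\updownarrow$-reversal turns this into $(a,b)$ on the next tile's left wall, so by induction the already-reversed right wall of the global tile carries $(a,b)$---which is exactly what the cyclization identifies with the left wall $(a,b)$ of $T_0$. The colors match and the coloring closes up; a tile admitting the swap propagation would therefore make the graph $3$-colorable. Your final step claims a mismatch between $\substack{a\\b}$ and $\substack{b\\a}$, but you compare the \emph{reversed} right wall against the closing condition for the \emph{unreversed} one; tracked consistently, there is no mismatch. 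Your two errors (wrong propagation, misapplied closing condition) happen to cancel to the correct conclusion, but neither step is sound. The repair is to use the correct pair $P_1,P_2$, discard $P_2$ as above, and observe that $P_1$ forces a third color on $y_2$, which can never equal $c(x_1)$ as the cyclization demands.
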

\begin{proof}
	With \cref{fig:nodecol-AIVL} it is straightforward to verify that each $3$-propagation of an $\pA\pI\pV\fL$-tile is either $P_{1} = \substack{a\\b}\rightsquigarrow\substack{b\\c}$ or $P_{2} = \substack{a\\b}\rightsquigarrow\substack{b\\b}$.
	Since the two vertices on the left wall of \change{an} \pA\pI\pV\fL-tile have to be colored differently,
	each elementary tile uses propagation $P_1$.
	Thus, any join of an even number of elementary tiles propagates $\substack{a\\b}\rightsquigarrow\substack{a\\b}$.
	But then the last tile would have to propagate $\substack{a\\b}\rightsquigarrow\substack{b\\a} \neq P_1$.
\end{proof}

Complementing this, there are also infinitely many \ltwocc{s} with chromatic number~$3$.

\begin{obs}
	\removedS{Each}\change{Every} \ltwocc $G$ \removedS{with}\change{where} every elementary tile \change{is a} \pB\pB\fL\change{-tile} has chromatic number $3$.
\end{obs}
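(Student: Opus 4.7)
The plan is to prove this as a two-part argument: a lower bound showing the chromatic number is at least $3$ (i.e., $G$ is not bipartite) and a matching upper bound via an explicit $3$-coloring constructed from the propagation already depicted in \cref{fig:nodecol-BBL}.

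For the lower bound, I would appeal directly to Observation~\ref{lemma:picture_triangle}: since the signature of every elementary tile of $G$ contains the letter \pB, each such tile already contains a triangle as a subgraph. Therefore $G$ contains $K_3$, so $\chi(G) \geq 3$.

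For the upper bound, I would use the $3$-propagation $P\coloneqq\substack{a\\a}\rightsquigarrow\substack{a\\a}$ of a \pB\pB\fL-tile from \cref{fig:nodecol-BBL}. The crucial feature of $P$ is that the two input colors agree and both equal the two output colors. Consequently, when I concatenate $P$ with itself over a join of elementary tiles, no color substitution is ever required at the common wall, and it is irrelevant whether the right wall of the preceding tile is vertically inverted (as dictated by $\bcirc(T_0^\updownarrow\otimes\dots\otimes T_{2m}^\updownarrow)$ in Definition~\ref{def:constructionLarge2ccGraphs}) before being identified with the left wall of the succeeding tile, because the two wall vertices carry the same color $a$. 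Iterating $P$ thus yields a valid $3$-coloring of the global tile $T_0^\updownarrow\otimes\dots\otimes T_{2m}^\updownarrow$ in which all wall vertices along every internal seam are colored $a$.

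It remains to observe that the final cyclization step causes no conflict: both the leftmost wall (of $T_0$) and the rightmost wall (of $T_{2m}$) are colored $(a,a)$, and each interior vertex of a tile has already been properly colored by $P$. Hence the constructed $3$-coloring descends to $\bcirc(T_0^\updownarrow\otimes\dots\otimes T_{2m}^\updownarrow)=G$, giving $\chi(G)\leq 3$. Combined with the triangle argument above, $\chi(G)=3$. I anticipate the only subtle point to be the careful handling of the cyclization and the wall inversions in Definition~\ref{def:constructionLarge2ccGraphs}, but the symmetry $a=a$ of $P$'s input and output makes this step essentially automatic, so I do not expect a serious obstacle.
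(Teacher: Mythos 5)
Your proposal is correct and takes essentially the same route as the paper: the upper bound is obtained exactly as there, by iterating the $\substack{a\\a}\rightsquigarrow\substack{a\\a}$ propagation of \cref{fig:nodecol-BBL} over all tiles (and, as you note, the equality of the two wall colors makes the inversions and the final cyclization automatic). The only difference is cosmetic: for the lower bound you invoke the triangle guaranteed by Observation~\ref{lemma:picture_triangle}, while the paper cites the bipartiteness characterization of Theorem~\ref{thm:CharacterizationBipartite}; both are valid one-line justifications.
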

\begin{proof}
	By Theorem \ref{thm:CharacterizationBipartite}, $G$ is not bipartite.
	On the other hand, \cref{fig:nodecol-BBL} shows that \change{a} \pB\pB\fL\change{-tile} admits a $3$-propagation $\substack{a\\a}\rightsquigarrow\substack{a\\a}$ that we may use on all tiles.
\end{proof}

The previous observations point to the following open problem:
\begin{question}\label{q:character3}
What is the full characterization of $3$-colorable \ltwocc{}s?
\end{question}

Although a graph-theoretic characterization of $3$-colorable \ltwocc{}s is an open question, we can efficiently decide $3$-colorability algorithmically:
\begin{obs}
	Using the fact that \ltwocc{s} have bo\-un\-d\-ed treewidth (see Corollary~\ref{cor:bounded_treewidth} below), Courcelle's theorem~\cite{Cou90} yields a linear-time algorithm to decide whether
	a given \ltwocc{} is 3-colorable.
\end{obs}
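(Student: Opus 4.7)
The plan is to invoke Courcelle's theorem \cite{Cou90} in its standard form: any graph property expressible in monadic second-order logic (MSO$_2$) can be decided in time $O(f(k,\varphi)\cdot n)$ on graphs given together with a tree decomposition of width at most $k$, where $n$ is the number of vertices, $\varphi$ is the formula, and $f$ is a computable function. Since by Corollary~\ref{cor:bounded_treewidth} the treewidth of any \ltwocc{} is bounded by a constant (either $4$ or $5$), the factor $f(k,\varphi)$ will collapse to a constant once $\varphi$ is fixed, yielding genuine linear time in $|V(G)|$.

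First I would recall that $3$-colorability admits a short MSO$_1$ (and hence MSO$_2$) formula: existentially quantify three vertex sets $C_1, C_2, C_3$, and assert that every vertex lies in exactly one of them and that no edge has both endpoints in the same $C_i$, i.e.
\begin{equation*}
\exists C_1, C_2, C_3 \subseteq V :\; \bigl(\forall v\in V : \bigvee_{i} v\in C_i \wedge \bigwedge_{i\neq j} \neg(v\in C_i \wedge v\in C_j)\bigr) \;\wedge\; \bigwedge_{i}\forall u,v : \neg\bigl(E(u,v) \wedge u\in C_i \wedge v\in C_i\bigr).
\end{equation*}
This formula has constant length and does not depend on the input graph.

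Next I would assemble the algorithm. Given the input graph $G$, first use Algorithm~\ref{alg_l2cc_recognition} to verify in linear time that $G$ is a \ltwocc{} and obtain its signature; if this fails, reject. Otherwise Corollary~\ref{cor:bounded_treewidth} guarantees $\tw(G)\le 5$, and a tree decomposition of this width can be computed in linear time (e.g.\ by Bodlaender's algorithm, or directly from the tile structure of $G$, whose linear layout of bounded-width separators makes a path decomposition of width at most $5$ immediate). Finally, feed $G$, the tree decomposition, and the fixed MSO formula above into the algorithm of Courcelle's theorem; the running time is linear in $|V(G)|$ with constants depending only on $\varphi$ and on the fixed treewidth bound.

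There is no real obstacle here; the only minor point to check carefully is that the constants in Courcelle's theorem indeed collapse to a constant in our setting because both the treewidth bound and the formula are fixed, independent of $G$. I would state this explicitly in the proof to make the linear-time claim rigorous.
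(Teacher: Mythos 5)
Your proposal is correct and matches the paper's intent exactly: the paper states this as an observation with no further proof, relying precisely on the combination of the bounded treewidth from Corollary~\ref{cor:bounded_treewidth}, the MSO-expressibility of $3$-colorability, and Courcelle's theorem, which is what you spell out. Your added details (the explicit MSO formula, obtaining the tree decomposition in linear time from the tile structure or via Bodlaender's algorithm, and the remark that the hidden constants depend only on the fixed formula and width bound) are the standard expansion of this argument and raise no issues.
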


\section{Chromatic Index}
\label{sc:chromaticin}

In this section, we investigate the chromatic index of \ltwocc{s}. The chromatic index is the minimum number of colors necessary to color edges of a graph, such that no two edges incident to the same vertex share a color. A trivial lower bound for the chromatic index is the maximum degree of the graph. Determining the chromatic index of a general graph is NP-hard~\cite{Hol81}. However, there are classes of graphs for which the chromatic index can be shown to be close to the trivial lower bound. Simple graphs are said to be \emph{class 1} if their chromatic index equals the maximum degree, and \emph{class 2} otherwise, see e.g.~\cite{Cao+19}.
	However, the situation is more complicated for \removed{multi}graphs \change{that are not simple}, as the graph's \emph{density} (i.e., maximum ratio between the number of edges and vertices, over all induced subgraphs) is also a natural lower bound for the \removed{multi}graph's chromatic index. This motivates the following slightly different definition~\cite{Cao+19}: a \removed{multi}graph is \emph{first class} when its chromatic index matches the lower bound given by the maximum degree or the density, and \emph{second class} otherwise.

	By construction, the density of \ltwocc{}s is low.
In fact, we show that all \ltwocc{s} are first class by showing that
they require only as many edge colors as their maximum degree.
To this end, we exhibit such edge colorings for elementary tiles and combine them to a coloring of the full graph.

\begin{definition}
	An \emph{edge coloring} of a graph~$G$\removed{~= (V,E)} is a function $c\colon E\change{(G)}\rightarrow\mathbb N^+$ such
	that $c(e) \neq c(f)$ for each pair~$e,f\in E\change{(G)}$ of adjacent edges.
	A \emph{$k$-edge-coloring} is an edge coloring that uses at most $k$ colors.
	The \emph{chromatic index} of~$G$ is the smallest $k$ such that a $k$-edge-coloring of $G$ exists.
	In particular, if $G$ admits a $\Delta(G)$-edge-coloring, $G$ is said to be \emph{first class}.
\end{definition}

Similarly to our findings on chromatic numbers, we will use color propagations to investigate edge colorings. An example can be seen in \cref{fig:edgecol-example}. \cref{fig:edgecol-example-pre} shows five edge color propagations, one for each tile. Consider two neighboring tiles $T_1,T_2$ ($T_1$ left of $T_2$).
	For edge color propagations, the edges in $T_1$ incident to $T_1$'s right wall are
	of interest, as they form restrictions for the edges in $T_2$ that are incident to $T_2$'s left wall vertices.
	Thus, when showing a propagation for tile $T_2$, we also need to show these incident $T_1$-edges (the \emph{input edges} \change{of $T_2$}), to the left of the wall vertices. The $T_2$-edges incident to $T_2$'s right wall form the \emph{output edges} \change{of $T_2$}.
	By substituting colors, we can now again assign these propagations to a list of joined tiles such that their colors match. Note that in our example graph we have two distinct color propagations for the two $\pA\pA\fL$-tiles, since in the full graph (\cref{fig:edgecol-example-post}) their left wall vertex $y_2$ becomes a vertex of degree $6$ in the first, and degree $5$ in the second case.

\begin{figure}[tb]
	\centering
	\begin{subfigure}[c]{\textwidth}
		\centering
		\winIgnore{\includegraphics[width=\textwidth]{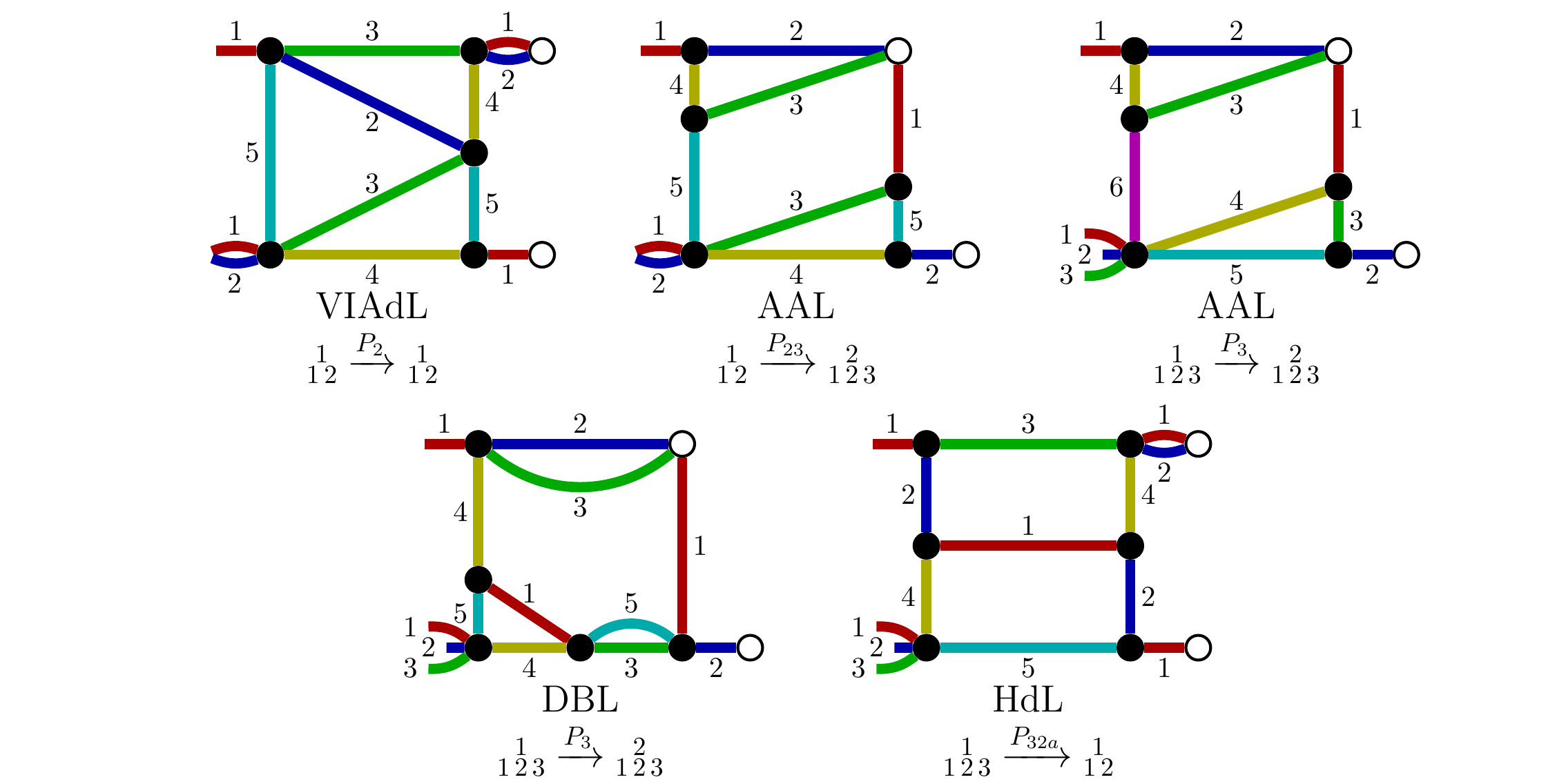}}
		\caption{The raw propagations.\label{fig:edgecol-example-pre}}
	\end{subfigure}
	\\
	\begin{subfigure}[c]{\textwidth}
		\centering
		\winIgnore{\includegraphics[width=\textwidth]{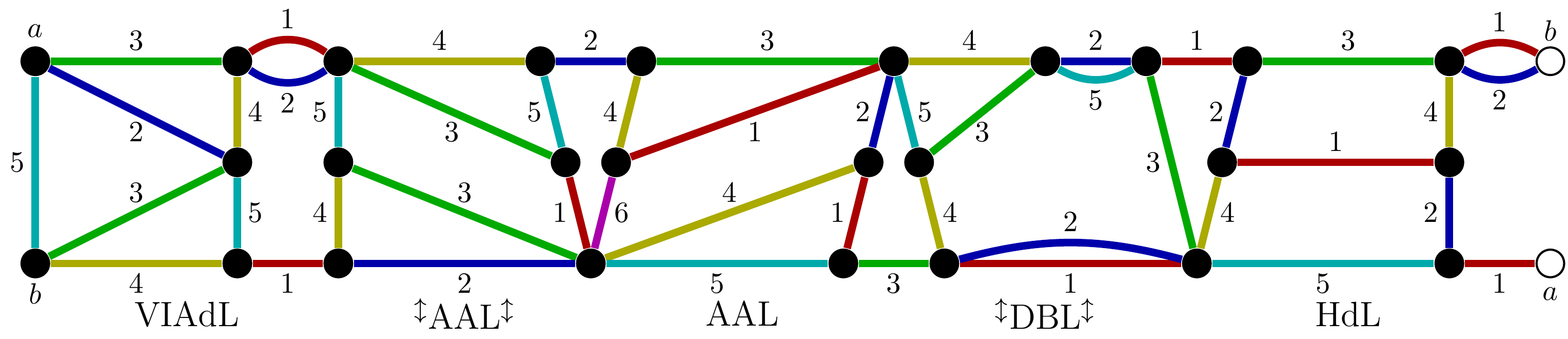}}
		\caption{A full edge coloring of maximum vertex degree of the example graph.\label{fig:edgecol-example-post}}
	\end{subfigure}
	\caption{Edge color propagations in use.}
	\label{fig:edgecol-example}
\end{figure}

Let $T=(G,x,y)$ be a tile.
The edges of $T$ that are incident to vertices on $T$'s right wall are the \emph{output edges} of $T$.
We observe that each tile~$T$ of a \ltwocc has either three or four output edges,
where all but one edge~$e$ are pairwise adjacent.
We call \change{this edge} $e$\change{, which is the unique edge incident to the degree-$1$ vertex of the frame,} the \emph{single edge} of $T$.
Consider an edge coloring of $G$,
the colors of the output edges of $T$ are its \emph{output colors}.
We denote output colors by $\substack{a\\b\,c\,d}$, where $a$ refers to the color of
the single edge and $b,c,d$ are the colors of the three adjacent edges (in no particular order).
For those tiles that have only two instead of three adjacent such edges, we instead write~$\substack{a\\b\,c}$.
For a cyclic sequence of tiles that contains~$T$, the output edges (output colors) of
$T$'s predecessor are the \emph{input edges} (\emph{input colors}, respectively) of $T$.
We employ the same notation for input and output colors.

\begin{definition}
	Two $k$-edge-colorings~$c,c'$ of a tile~$T$ and its input edges are \emph{equivalent}
	if $c(e) = c(f) \iff c'(e) = c'(f)$ for each pair~$e,f \in X$, where $X$ is the set of input and output edges of $T$.
	We call the induced equivalence classes \emph{(edge-coloring-$k$-)propagations} and denote them by
	$\mathcal A \longrightarrow \mathcal B$, where $\mathcal A$ ($\mathcal B$) are the input colors (output colors, respectively) of $T$.
\end{definition}

A tile (in a sequence) that has $k+1$ input and $\ell+1$ output colors, is a \emph{$k$-$\ell$-tile}.
We observe that any \ltwocc is a cyclic sequence of elementary $2$-$2$-, $2$-$3$-, $3$-$3$- and $3$-$2$-tiles.
Given a tile $T$ in a \ltwocc~$G$, we denote its \emph{maximum degree} in~$G$ by
$\Delta_G(T) = \max_{v \in V(T)}|\{e \in E(G) \mid v \in e\}|$.

We know from Observation~\ref{obs:maxdeg} that the maximum degree of a \ltwocc is $4\leq \Delta(G) \leq 6$. In order to show that all \ltwocc{}s are first class, we will first restrict ourselves to those with $\Delta(G) \geq 5$; thereafter, we will also consider the case $\Delta(G) = 4$.
\begin{lemma}
	\label{lem:edgeColorPropagation}
	Consider a (cyclic) sequence~$Q$ of elementary tiles corresponding to a \ltwocc~$G$ with maximum degree~$\Delta(G)\geq 5$.
	Each $2$-$2$-tile of $Q$ admits the following propagation that uses $5$ colors:
	\begin{itemize}
		\item[] $P_2 \coloneqq \substack{1\\1\,2}\longrightarrow\substack{1\\1\,2}$.
	\end{itemize}
	Each other tile~$T$ of $Q$ has the below propagations, using at most $\Delta_G(T)$ colors:
	\begin{itemize}
		\item[]
			$P_{23} \coloneqq \substack{1\\1\,2}\longrightarrow\substack{2\\1\,2\,3}$ for $2$-$3$-tiles,\quad
			$P_3 \coloneqq \substack{1\\1\,2\,3}\longrightarrow\substack{2\\1\,2\,3}$ for $3$-$3$-tiles,
		\item[]
			$P_{32b} \coloneqq \substack{1\\1\,2\,3}\longrightarrow\substack{2\\1\,2}$ for $3$-$2$-tiles,\quad
			$P_{32a} \coloneqq \substack{1\\1\,2\,3}\longrightarrow\substack{1\\1\,2}$ for $3$-$2$-tiles.
	\end{itemize}
\end{lemma}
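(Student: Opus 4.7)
The plan is to verify the lemma by an explicit case analysis over all $42$ elementary tiles, organized by tile type ($2$-$2$, $2$-$3$, $3$-$3$, or $3$-$2$). For each tile I would present a concrete edge coloring realizing the claimed input-output propagation, using at most the stated number of colors. Since each elementary tile has only a constant number of internal edges, each case reduces to a small finite check, best conveyed by a collection of figures in the spirit of those used for vertex-coloring propagations in Section~\ref{sc:chromaticnum}.

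First I would classify the elementary tiles by type. The frame (\fL{} vs.\ \fdL) determines the basic number of wall-incident edges, while the picture contributes additional incidences: tiles whose pictures are \pH{} or $\pD\pD$ give $2$-$2$-tiles, whereas pictures involving \pA, \pB, or \pV{} add extra edges at one or both walls and so yield $2$-$3$-, $3$-$2$-, or $3$-$3$-tiles. Sorting the $42$ tiles into these four categories is mechanical, and for the few $2$-$2$-tiles (only $\pD\pD\fL$, $\pD\pD\fdL$, $\pH\fL$, $\pH\fdL$) a direct coloring using the extra allowance of $5$ colors immediately yields~$P_2$.

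Next, for each remaining tile I would fix the input edges with the colors dictated by the left-hand side of the corresponding propagation and extend this to a valid edge coloring of the interior, checking that the resulting output colors conform to the right-hand side. Because each picture is small and locally tree-like outside its triangles, the extension is typically immediate; what requires care is the consistent choice of pattern, so that the three adjacent output-side edges at the wall vertex of high frame-degree reuse an input color wherever possible, keeping the palette within $\Delta_G(T)$.

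The main obstacle is respecting the bound $\Delta_G(T)$ at wall vertices. By Observation~\ref{obs:maxdeg}, a shared wall vertex may reach degree $6$ in $G$, so its six incident edges must all carry distinct colors; one must coordinate the input and output colors around that vertex so that, together with the internal edges of both adjacent tiles, they exhaust exactly $\Delta_G(T)$ colors. The fact that two propagations $P_{32a}$ and $P_{32b}$ are listed for $3$-$2$-tiles reflects precisely the flexibility needed when eventually closing the cyclic sequence: depending on the accumulated colors propagated so far, either a color-preserving or a color-swapping closure may be required, and both patterns will be needed for the subsequent global argument that assembles tile propagations into a $\Delta(G)$-edge-coloring of~$G$.
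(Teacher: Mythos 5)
Your overall strategy is the same as the paper's: the lemma is proved there by brute force, exhibiting one explicit coloring per elementary tile for each claimed propagation and collecting them in appendix figures (one figure per propagation $P_2$, $P_{23}$, $P_3$, $P_{32a}$, $P_{32b}$), exactly the ``finite check conveyed by figures'' you describe.

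Two details of your classification step are off, however, and one of them would derail the mechanical check if taken at face value. First, the $2$-$2$-tiles are not only $\pD\pD\fL$, $\pD\pD\fdL$, $\pH\fL$, $\pH\fdL$: pictures built from \pB\ and \pV\ paths do \emph{not} add an edge at the corner wall vertices (only \pA\ and \pD\ top/bottom paths do, cf.\ Observation~\ref{obs:maxdeg}), so tiles such as $\pB\pB\fL$, $\pV\pV\fL$, $\pB\pV\fL$, $\pV\pV\fdL$, $\pB\pV\fdL$, $\pV\pB\fdL$ are also $2$-edge-tiles. This matters precisely because those are the tiles of maximum degree $4$ for which $P_2$ genuinely needs the fifth color (see the caption of \cref{fig:ec_p2}); with only the four tiles you list the ``extra allowance of $5$ colors'' would look unnecessary. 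If you instead tried to realize $P_{23}$ or $P_3$ on, say, $\pB\pB\fL$, the propagation would not even type-check, since that tile has only three output edges. Second, the $k$-$\ell$ type is not an intrinsic attribute of an elementary tile: the output arity is, but the input arity is inherited from the predecessor's output edges, so the same elementary tile can occur as a $2$-$3$-tile in one position and a $3$-$3$-tile in another (the paper's example graph contains two $\pA\pA\fL$-tiles of different types for exactly this reason). The case analysis therefore has to be organized by (input arity, tile) pairs rather than by tile alone. With these corrections the remainder of your plan coincides with the paper's proof.
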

\begin{proof}
	This can (easily but tediously) be shown by demonstrating corresponding colorings for each elementary tile.
	\Cref{fig:ec_p2,fig:ec_p3,fig:ec_p23,fig:ec_p32a,fig:ec_p32b} list all cases in the appendix.
\end{proof}

\change{Note that each elementary tile admits several propagations.}
In the example graph of \cref{fig:edgecol-example}, \removedS{we differentiate the two occurrences of $\pA\pA\fL$}\change{there are two occurrences of $\pA\pA\fL$. They differ in that their left wall vertex $y_2$ has a degree of $5$ or $6$. We differentiate them} by referring to the first one as a $2$-$3$-tile, and the second as a $3$-$3$-tile.by referring to the first one as a $2$-$3$-tile, and the second as a $3$-$3$-tile. The full sequence of propagations used is $\substack{1\\1\,2}\xrightarrow{P_{2}}\substack{1\\1\,2}\xrightarrow{P_{23}}\substack{2\\1\,2\,3}\xrightarrow{P_{3}}\substack{3\\1\,2\,3}\xrightarrow{P_{3}}\substack{1\\1\,2\,3}\xrightarrow{P_{32a}}\substack{1\\1\,2}$. This coloring uses $\Delta(G)=6$ colors.

We will now use these propagations to obtain an edge coloring of arbitrary \ltwocc{}s with $\Delta(G)\geq 5$ and show that they are indeed first class.

\begin{lemma}\label{lemma:chromaticindex}
	\Ltwocc{}s $G$ with $\Delta(G)\geq 5$ are first class.
\end{lemma}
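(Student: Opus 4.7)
My plan is to construct an explicit $\Delta(G)$-edge-coloring of $G$ by composing the propagations of Lemma~\ref{lem:edgeColorPropagation} cyclically around the tile sequence of $G$. First, I would view $G$ as the cyclization of an odd-length sequence of elementary tiles and classify each tile as a $2$-$2$-, $2$-$3$-, $3$-$3$-, or $3$-$2$-tile, based on how many output colors it contributes (equivalently, on the degrees of its right-wall vertices within $G$). Since the sequence is cyclic, the number of $2$-$3$-tiles must equal the number of $3$-$2$-tiles. By Observation~\ref{obs:maxdeg}, the assumption $\Delta(G)\geq 5$ forces some tile to contain an \pA- or \pD-picture, so at least one wall vertex has degree at least $5$; hence the sequence contains at least one $2$-$3$-tile (and therefore also at least one $3$-$2$-tile).

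Next, I would fix a starting tile together with the seed input pattern $\substack{1\\1\,2}$ and walk once around the cycle, assigning to each tile the propagation prescribed by Lemma~\ref{lem:edgeColorPropagation}: $P_{2}$ on every $2$-$2$-tile, $P_{23}$ on every $2$-$3$-tile, $P_{3}$ on every $3$-$3$-tile, and (for the moment arbitrarily) one of $P_{32a}$ or $P_{32b}$ on each $3$-$2$-tile. At every join the predecessor's output colors are relabelled so that they coincide with the successor's input pattern. Since each individual propagation uses at most $\Delta_G(T)\leq\Delta(G)$ colors, the resulting global edge coloring uses at most $\Delta(G)$ colors in total.

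The hard part will be closing the cycle: after one full traversal, the output colors produced on the last tile must lie in the same equivalence class as the seed chosen for the starting tile. Any maximal excursion of the form $P_{23}\,P_3^{k}\,P_{32a}$ or $P_{23}\,P_3^{k}\,P_{32b}$ composes to a map from a $2$-state equivalence class to a $2$-state equivalence class, and the binary choice between $P_{32a}$ and $P_{32b}$ on the concluding $3$-$2$-tile toggles between the two possible $2$-state output classes of that excursion. I would therefore fix arbitrary propagation choices at every $3$-$2$-tile except one, traverse the whole cycle once, and then use the remaining $3$-$2$-tile to swap $P_{32a}\leftrightarrow P_{32b}$ if necessary, thereby ensuring that the accumulated cyclic composition is the identity on the seed's equivalence class. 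Since at least one $3$-$2$-tile is guaranteed by the first paragraph, this toggle is always available; the resulting propagation sequence then yields a valid $\Delta(G)$-edge-coloring of $G$, proving $G$ to be first class.
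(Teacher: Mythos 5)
Your overall strategy---composing the propagations of Lemma~\ref{lem:edgeColorPropagation} cyclically around the tile sequence and exploiting the freedom between $P_{32a}$ and $P_{32b}$ to close the cycle---is the same as the paper's, but the first paragraph contains a genuine gap. The inference ``$\Delta(G)\geq 5$, hence some wall vertex has degree at least $5$, hence there is a $2$-$3$-tile'' is false: a tile is a $2$-$3$-tile only when it has three input edges but four output edges, i.e., only at a \emph{transition} in the cyclic sequence from a tile with three output edges to one with four. If \emph{every} elementary tile has four output edges (e.g., when every elementary tile is an $\pA\pA\fL$-tile), then every tile is a $3$-$3$-tile, $\Delta(G)\geq 5$ still holds, and there are no $2$-$3$- or $3$-$2$-tiles whatsoever---so the toggle your closing argument relies on is simply not available. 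This is not a degenerate corner: the paper devotes the entire second half of its proof to exactly this situation, using the fact that the number of elementary tiles is odd. There, both two and three consecutive $3$-$3$-tiles admit identity propagations ($\substack{1\\1\,2\,3}\longrightarrow\substack{1\\1\,2\,3}$), and an odd number $\geq 3$ of tiles decomposes into one triple plus pairs; note that the naive iteration of $P_3$ alone does not return to the seed after an odd number of steps, so this case needs its own construction. You must add this case for the proof to be complete.

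A second, smaller point: in the mixed case the paper does not defer the correction to a single final toggle. It chooses $P_{32a}$ or $P_{32b}$ \emph{within each block} of the form $P_{23}P_3^{*}P_{32}$ according to the parity of the number of intervening $3$-$3$-tiles, so that every block individually realizes the identity propagation $\substack{1\\1\,2}\longrightarrow\substack{1\\1\,2}$ and closure of the cycle is automatic. Your variant---make arbitrary choices and repair at one remaining $3$-$2$-tile---is plausible but asserted rather than proved: you would still need to verify that the composition of arbitrarily chosen blocks always lands in one of exactly two equivalence classes and that swapping $P_{32a}\leftrightarrow P_{32b}$ at a single block realizes precisely the transposition between them. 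Adopting the paper's per-block parity choice removes the need for that verification.
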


\begin{proof}
	Throughout this proof, we only consider propagations using $5$ colors for elementary $2$-$2$ tiles and propagations using at most $\Delta(T)$ colors for each other elementary tile $T$.

	First, assume $G$ does not decompose into elementary $3$-$3$-tiles only. Then, we prove the claim by decomposing $G$ into (not necessarily elementary) tiles admitting a $\substack{1\\1\,2}\longrightarrow\substack{1\\1\,2}$ propagation.
	We decompose $G$ into elementary $2$-$2$-tiles (which allow these via $P_2$) and tiles of the form $T=\otimes(T_0,\ldots,T_k)$ where $T_0$ is a $2$-$3$-tile, $T_k$ is a $3$-$2$-tile, $T_1, \ldots,T_{k-1}$ are $3$-$3$-tiles, and each $T_i$ is elementary.
 We only have to show that such a tile $T$ admits a $\substack{1\\1\,2}\longrightarrow\substack{1\\1\,2}$-propagation.

Iteratively applying $P_3$ to $T_1 \ldots T_{k-1}$ yields a $P_o\coloneqq\substack{1\\1\,2\,3}\longrightarrow\substack{2\\1\,2\,3}$-propagation if $k$ is even and a $P_{e}\coloneqq\substack{1\\1\,2\,3}\longrightarrow\substack{1\\1\,2\,3}$-propagation otherwise. We obtain the following propagations for $T$:
\[
	\substack{1\\1\,2}\xrightarrow{P_{23}}\substack{2\\1\,2\,3}\xrightarrow{P_{o}} \substack{1\\1\,2\,3} \xrightarrow{P_{32a}}\substack{1\\1\,2}\quad \text{and} \quad
	\substack{1\\1\,2}\xrightarrow{P_{23}}\substack{2\\1\,2\,3}\xrightarrow{P_{e}} \substack{2\\1\,2\,3} \xrightarrow{P_{32b}}\substack{1\\1\,2} \text{, respectively}.
\]

	Next, assume that $G$ consists of elementary $3$-$3$-tiles only. Note that using $P_3$, two subsequent such tiles admit the propagation
	$P_{3}^2 \coloneqq \substack{1\\1\,2\,3}\longrightarrow\substack{2\\1\,2\,3}\longrightarrow\substack{1\\1\,2\,3};$
	and three subsequent such tiles admit the propagation
	$P_{3}^3 \coloneqq \substack{1\\1\,2\,3}\longrightarrow\substack{2\\1\,2\,3}\longrightarrow\substack{3\\1\,2\,3}\longrightarrow\substack{1\\1\,2\,3}.$
	Since there is an odd number of elementary tiles, we can use $P_{3}^3$ for three subsequent elementary tiles and $P_{3}^2$ for the remaining pairs,
	obtaining a $\Delta(G)$-edge-coloring of $G$.
\end{proof}

Now that we have shown that \ltwocc{}s with $\Delta(G)\geq 5$ are first class, it remains to prove that those with $\Delta(G)=4$ are also first class.
\change{There is only a constant number of \twocc{s} with $3$ elementary tiles with potentially sporadic behavior; we are interested in the remaining infinite class.}

\begin{theorem}\label{thm:chromindex}
	\Ltwocc{s} $G$ with at least $5$ elementary tiles are first class.
\end{theorem}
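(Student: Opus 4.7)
The plan is to reduce to Lemma \ref{lemma:chromaticindex} for the case $\Delta(G)\ge 5$ and to handle the complementary case $\Delta(G)=4$ separately. By Observation \ref{obs:maxdeg}, $\Delta(G)=4$ holds iff $\#\pA+\#\pD=0$. Hence the pictures of every elementary tile use only letters from $\{\pV,\pB,\pH\}$ (in particular, no identification picture occurs, since the four $\pI$-pictures all involve $\pA$), so only a small finite list of tile types over the two frames $\fL,\fdL$ can appear. Moreover, in this regime every wall vertex has degree exactly $4$, so every elementary tile is a $2$-$2$-tile in the terminology preceding Lemma \ref{lem:edgeColorPropagation}.

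First I would enumerate this finite list of admissible tiles and, for each, exhibit an explicit $4$-edge-coloring, reading off the induced propagations in the $\substack{a\\b\,c}\longrightarrow\substack{a'\\b'\,c'}$ notation. The goal is to produce, for each tile type, an identity-like propagation $P_{\mathrm{id}}\coloneqq\substack{a\\b\,c}\longrightarrow\substack{a\\b\,c}$ that preserves both the color of the single edge and the incidence pattern of the two adjacent output edges at the two wall vertices. If such a propagation exists for every admissible tile, a straightforward concatenation along the cyclic tile sequence, analogous to the first half of the proof of Lemma \ref{lemma:chromaticindex}, yields a $\Delta(G)$-edge-coloring of~$G$.

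Where an identity propagation is impossible or where the parity of some structural quantity forces a mismatch when closing the cyclization, I would complement $P_{\mathrm{id}}$ with a suitable "swap'' propagation $P_{\mathrm{sw}}\coloneqq\substack{a\\b\,c}\longrightarrow\substack{b\\a\,c}$ (or a cyclic permutation thereof) on a single tile, chosen so that the total permutation accumulated around the cycle becomes the identity. This is exactly the same device used in the final paragraph of the proof of Lemma \ref{lemma:chromaticindex}, where three consecutive $3$-$3$-tiles realize a permutation that reconciles an odd-length cycle. The "at least $5$ tiles'' hypothesis gives enough room to locate such a correcting tile together with enough identity tiles around it.

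The main obstacle is the combinatorial bookkeeping: one must verify, tile type by tile type, that the $4$-edge-propagations exist and that their compositions around an arbitrary cyclic sequence of at least $5$ elementary tiles always admit a choice of swap tile that closes the cycle. The constraint of at least $5$ tiles is essential here, as the footnote to Table \ref{tab:overview} records that finitely many $3$-tile \ltwocc{}s are in fact second class; a careful case check must show that these are the only obstructions, so that every larger configuration indeed admits a compatible cycle of propagations.
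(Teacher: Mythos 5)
Your overall architecture matches the paper's: invoke Lemma~\ref{lemma:chromaticindex} for $\Delta(G)\ge 5$, observe that $\Delta(G)=4$ forces $\#\pA+\#\pD=0$ so that only a small finite list of tiles (all $2$-$2$-tiles) can occur, and then compose explicit $4$-edge-propagations around the cycle. However, there is a genuine gap in how you plan to carry out the composition. Your plan rests on finding an identity propagation $\substack{a\\b\,c}\longrightarrow\substack{a\\b\,c}$ for every admissible tile, with at most one ``swap'' tile to fix a global parity mismatch. That first step already fails: with only $4$ colors, the tiles $\pB\pB\fL$, $\pV\pV\fL$, $\pB\pV\fdL$ and $\pV\pB\fdL$ do \emph{not} admit the identity propagation $P_2=\substack{1\\1\,2}\longrightarrow\substack{1\\1\,2}$ (the paper's appendix notes they require $5$ colors for it); with $4$ colors they only admit $P_s=\substack{1\\1\,2}\longrightarrow\substack{1\\1\,3}$, which necessarily changes an output color. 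Only $\pB\pV\fL$ and $\pV\pV\fdL$ (the paper's ``whimsical'' tiles) admit $P_2$ with $4$ colors. Consequently the composition around the cycle is not ``identity everywhere plus one correction'': every sincere tile contributes a non-identity propagation, and the argument must track the parity of the number of such tiles.

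The paper resolves this with a three-way case analysis on the counts of whimsical and sincere tiles: if the number of sincere tiles is even, alternating two $P_s$-colorings closes the cycle; if it is odd and at least $3$, three consecutive sincere tiles realize $\substack{1\\1\,2}\to\substack{1\\1\,3}\to\substack{1\\1\,4}\to\substack{1\\1\,2}$; and in the degenerate case of exactly one sincere tile among an even number of whimsical ones, three whimsical tiles must instead carry the correction via a second propagation $P_w=\substack{1\\1\,2}\longrightarrow\substack{3\\1\,3}$. Your single-swap-tile device does not cover this last case, and nothing in your plan forces the correcting tiles to be consecutive or of a uniform type, which is what makes the chained propagations verifiable. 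So while the skeleton is right, the actual content of the theorem---which propagations each degree-$4$ tile admits with $4$ colors, and the parity case analysis that reconciles them on an odd-length cycle---is exactly the part your proposal defers, and the deferral hides the step where the naive version of the plan breaks.
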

\begin{proof}
	By Lemma~\ref{lemma:chromaticindex}, it remains to consider $\Delta(G) = 4$.
Let $G$ consist of the elementary tile $T_0, \ldots,$ $T_{2m}$ in this order. Throughout this proof, we only consider propagations using $4$ colors and assume $m \geq 2$.

We can find $4$-propagations for all tiles with maximum degree~$4$ (all corresponding colorings are depicted in the appendix), in particular they can be categorized as follows:
 \begin{align*}
	P_2\coloneqq\substack{1\\1\,2}\longrightarrow\substack{1\\1\,2}, \quad &\text{for $\pB\pV\fL$ and $\pV\pV\fdL$ (see \cref{fig:ec_p2}), and}\\
	P_w\coloneqq\substack{1\\1\,2}\longrightarrow\substack{3\\1\,3}, \quad &\text{for $\pB\pV\fL$ and $\pV\pV\fdL$ (see \cref{fig:ec_fourcolors}), and }\\
	P_s\coloneqq\substack{1\\1\,2}\longrightarrow\substack{1\\1\,3}, \quad &\text{for all other tiles of maximum degree $4$ (see \cref{fig:ec_fourcolors}).}
 \end{align*}
 We call the tiles $\pB\pV\fL$ and $\pV\pV\fdL$ \emph{whimsical}, while the others are \emph{sincere}.
 Let us prove the theorem by constructing a $\substack{1\\1\,2}\longrightarrow\substack{1\\1\,2}$-propagation for \removedS{$G'=\otimes(T_0, \ldots,$ $T_{2m})$}\change{$G$}. To this end, we consider the following three cases:

 {\it Case 1:} Assume there is an odd number of whimsical tiles. Then the number of sincere tiles is even.
 We obtain the claim for \removedS{$G'$}\change{$G$} by
	using $\substack{1\\1\,2}\stackrel{\small{P_2}}\longrightarrow\substack{1\\1\,2}$ for whimsical tiles, and alternate between two colorings using $P_s$.

 {\it Case 2:} Now, assume there is an even number of whimsical tiles and only a single sincere tile.
	We use propagation $P_2$ on all but $3$ consecutive whimsical tiles.
	These remaining whimsical tiles together propagate
  $\substack{1\\1\,2}\stackrel{P_w}\longrightarrow\substack{4\\1\,4}\stackrel{P_w}\longrightarrow\substack{3\\3\,4}\stackrel{P_w}\longrightarrow\substack{1\\1\,3}$.
  Together with $P_s$ for the sincere tile, we obtain the claimed propagation.

 {\it Case 3:} Finally, assume we have an even number of whimsical tiles and at least $3$ sincere tiles.
	Using $P_2$ for each whimsical tile, we only have to prove that we can construct a $\substack{1\\1\,2}\longrightarrow\substack{1\\1\,2}$-propagation for the sincere tiles.
	This is obtained by applying $P_s$ to all but $3$ of these tiles and using the following propagation on the remaining ones: $\substack{1\\1\,2}\stackrel{P_s}\longrightarrow\substack{1\\1\,3}\stackrel{P_s}\longrightarrow\substack{1\\1\,4}\stackrel{P_s}\longrightarrow\substack{1\\1\,2}$.

Thus, any sufficiently \ltwocc $G$ can be colored with $\Delta(G)$ colors and is first class.
\end{proof}

\section{Treewidth}
\label{sc:treewidth}

Treewidth is a central measure in graph theory and parameterized complexity~\cite{Bod07}.
It was first introduced by Bertelé and Brioschi under the term \emph{dimension} but rediscovered twice in following years~\cite{BB72, Hal76, RS84}.
Robertson and Seymour coined the term \emph{treewidth} and discovered a profound theory based on it that spawned a plethora of results.
While it is known that the treewidth of $2$-crossing-critical graphs is bounded from above by $2^{15\,361}-2$ \cite{Hli03}, the known bound is far from optimal.
Lower bounds are known for $k\ge 3$ only \cite{Hli03}.

\begin{definition}%
    A \emph{tree decomposition} of a connected graph~$G$ is a tree~$T$
    and a function $f \colon V(T) \rightarrow 2^{V(G)}$ such that
    \begin{enumerate}[(1)]
        \item for each edge~$uv \in E(G)$, there exists a vertex~$\alpha \in V(T)$ with $\{u,v\} \subseteq f(\alpha)$, and
        \item for each vertex~$v \in V(G)$, the subgraph of $T$ induced by $\big\{\alpha \in V(T) : v \in f(\alpha)\big\}$ is connected.
    \end{enumerate}
	Each set $f(\alpha)$ is typically called a \emph{bag}.
    The \emph{treewidth~$\tw(G)$} is the smallest~$\gamma \in \mathbb N$,
    such that there exists a tree decomposition of $G$ with $\max_{\alpha\in V(T)} |f(\alpha)| \leq \gamma + 1.$
\end{definition}

While the above definition is the classical one by Robertson and Seymour, there are several equivalent characterizations of treewidth.
For our proofs, we use one by Seymour and Thomas that employs a game of cops and robber~\cite{ST93}:
The cops and the robber stand on vertices of the graph $G$. The robber may move---at infinite speed---to any other vertex~$w$
    unless every path from $v$ to $w$ contains a vertex with a cop located on it.
Cops move by ``helicopter''\!, i.e., they are removed from their vertex and---at a later point in time---are placed on some other vertex.
All participants know all positions and the graph at all times.
The cops win if there exists a strategy such that after a finite number of cop movements,
    one of them is placed on the same vertex as the robber, independent of the robber's strategy.
Otherwise, the robber has a strategy to avoid being caught indefinitely and wins.
The treewidth of~$G$ is equal to the maximum number of cops such that the robber still wins.
The intuitive connection between the original treewidth characterization and this game-theoretic approach is that cops would block all vertices of a bag in the decomposition tree $T$, locking the robber in some subtree of $T$; then, the cops can iteratively move over to the adjacent bag that is closer to the robber, essentially pushing the robber towards a leaf-bag, where he will eventually be catched. If there are too few cops, they will be unable to always lock the robber within a subtree, and the robber can flee ad infinitum. See~\cite{ST93} for details.

Similarly, since treewidth is minor-monotone, one may characterize graphs of treewidth at most $k$, also called \emph{partial $k$-trees},
    by a set of forbidden minors~\cite{APC90, RS04}.
Since all \twocc{}s are non-planar, it follows from Kuratowski's theorem that they contain the $K_4$ as a minor
    and their treewidth is at least $3$~\cite{Bod88}.

\begin{definition}%
    A tile~$T=(G,x,y)$ is \emph{blocked}, if there are cops on $G$ such that
        the robber---independent of his position---
        cannot move from a vertex on $T$'s left wall $x$ to a vertex on $T$'s right wall $y$ while using only edges of $T$, i.e.,
        the graph~$G[W]$ induced by the vertices~$W$ of $G$ that are not occupied by a cop,
        contains no path from a vertex in $x$ to a vertex in $y$.
\end{definition}

\begin{lemma}
    Any \ltwocc~$G$ with at least $5$ elementary tiles has $4 \leq \tw(G) \leq 5$.
\end{lemma}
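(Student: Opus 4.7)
I would prove the two bounds separately.

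For the lower bound $\tw(G) \geq 4$, I would use the cops-and-robber characterization of Seymour and Thomas and exhibit a winning strategy for the robber against three cops. The underlying reason is the cyclic belt structure of $G$: by Theorem~\ref{thm:3con2cc}, $G$ is $3$-connected and, moreover, contains a subdivision of $V_{10}$. Since treewidth is invariant under subdivision for graphs of treewidth at least $3$ and monotone under taking minors, one clean route is to reduce to $V_{10}$ itself and show $\tw(V_{10}) \geq 4$ via a short robber strategy on the symmetric Möbius ladder. Alternatively, one can argue directly on $G$: three cops can fully block at most one wall-pair together with a single additional vertex, so the robber, sitting inside some tile far from the helicopter's landing site, can always slip to an adjacent tile around the belt of at least five tiles, hence the robber evades capture indefinitely.

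For the upper bound $\tw(G) \leq 5$, I would give an explicit path-like tree decomposition. Write $G = \bcirc(T_0^{\updownarrow} \otimes \cdots \otimes T_{2m}^{\updownarrow})$ and let $X = \{x_1, x_2\}$ be the left wall of $T_0$, i.e.\ the pair of cyclization vertices. I would include $X$ in \emph{every} bag of the decomposition, which converts the cyclic topology into a linear one. For each elementary tile $T_i$ I construct a local path decomposition whose bags are subsets of $V(T_i)$ of size at most $4$, whose first bag contains the left wall of $T_i$ and whose last bag contains the right wall. Concatenating these local chains in the tile order and gluing adjacent chains through the shared wall-pair yields a global path decomposition; adding the two vertices of $X$ to every bag produces bags of size at most $6$, hence $\tw(G) \leq 5$.

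The main obstacle is the local step in the upper bound: one must verify that each of the finitely many elementary tiles admits a path decomposition of width at most $3$ with the prescribed wall-containing endpoint bags. Since each elementary tile is a small graph of bounded order, this reduces to a straightforward tile-by-tile check over the elementary tiles of $\mathcal{S}$. The global correctness, namely edge coverage and the subtree-connectivity axiom, then follows because adjacent local chains share exactly the wall-pair separating their tiles, while the fixed set $X$, appearing in every bag, guarantees that the bags containing any vertex of $X$ form a connected subtree (namely the whole path) and that all edges incident to $X$ at the two ends of the chain are covered.
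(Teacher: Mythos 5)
Your upper bound takes a genuinely different route from the paper. The paper proves $\tw(G)\leq 5$ via the Seymour--Thomas cops-and-robber game: two cops block a wall, three such pairs perform a binary search to confine the robber to a single elementary tile, and the six cops then catch him there by a finite case check. You instead build an explicit path decomposition, using the cyclization wall-pair $X$ as a universal two-vertex set added to every bag to linearize the cyclic structure, and reduce to a finite check that each elementary tile has an anchored path decomposition of width $3$ (left wall in the first bag, right wall in the last). Both arguments bottom out in a tile-by-tile verification; yours has the advantage of producing the decomposition constructively (useful, e.g., for the Courcelle-based $3$-colorability algorithm mentioned later in the paper), while the paper's game formulation is reused almost verbatim in the subsequent refinement (sweeps, neat vs.\ messy tiles) that separates treewidth $4$ from $5$. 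Note that your anchored-width-$3$ claim is exactly tight: since some of these graphs do have treewidth $5$, the messy tiles cannot admit anchored width $2$, so the finite check genuinely has to be carried out and cannot be waved away.

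For the lower bound there are two concrete problems. First, you justify the $V_{10}$ subdivision by citing Theorem~\ref{thm:3con2cc}, but that theorem only states the converse containment (every $2$-crossing-critical graph with a $V_{10}$ subdivision lies in $\mathcal C$); the paper explicitly poses as an open question whether every graph of $\mathcal C$ contains a $V_{10}$ subdivision. What saves you is the hypothesis of at least $5$ elementary tiles: each tile supplies a top path, a bottom path, and a top--bottom connection, so one reads off a $V_{2k}$ minor for $k\geq 5$ tiles directly from the tile structure. This is precisely what the paper does, except it targets $V_8$ and then simply quotes that $V_8$ is one of the four minor-minimal obstructions for treewidth $3$, avoiding any game argument for the lower bound. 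Second, your proposed robber strategy is off by one: under the correspondence as stated in the paper, a robber who escapes three cops only certifies $\tw(G)\geq 3$; to conclude $\tw(G)\geq 4$ you must exhibit a robber strategy against \emph{four} cops (on $G$ or on $V_{10}$). Your informal blocking argument (``three cops can fully block at most one wall-pair plus one vertex'') would likewise have to be upgraded to defeat four cops, which is less immediate since four cops can block two wall-pairs and confine the robber to a segment of tiles. The cleanest repair is to follow the paper: extract the $V_8$ (or $V_{10}$) minor from the five tiles and invoke the known obstruction set for treewidth $3$.
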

\begin{proof}
    Recall that the generalized Wagner graph~$V_8$ is a cubic graph that is constructed from the cycle on $8$ vertices $v_1,\ldots,v_8$ (in this order) by adding the edges $v_iv_{i+4}$, $1\leq i\leq 4$ \change{(cf. \Cref{fig:MoebiusWagner} for the analogously defined $V_{10}$)}.
    The $V_8$ constitutes one of four obstructions in the characterization of graphs with
        treewidth~$\leq 3$ \cite{APC90, ST90}.
    We obtain $\tw(G) \geq 4$ since any $G$ with at least $5$~elementary tiles contains the~$V_8$ as a .

    Let us now describe a strategy for catching a robber on any $G$ with $6$ cops:
        Using $2$ cops, we may block any tile~$T$ by placing them on its left wall.
    Applying this operation iteratively, using $3$ sets of $2$ such cops each, we can force the robber into a single elementary tile~$T'$
        (essentially using binary search), such that there is a cop on each wall vertex of~$T'$.
    Checking each possible elementary tile individually, one can see that catching the robber within $T'$ is then always possible with $6$ cops.
\end{proof}

In fact, also the graphs on $3$ elementary tiles contain $V_8$ as a
	minor unless each tile has the signature~$\sigma\tau\fL$ with $\sigma,\tau \in \{\pA,\pD\}$.
A treewidth-$3$ decomposition for the latter cases is easily obtained.\footnote{The reader may check the central case $\sig(G)=\pA\pA\fL\,\pA\pA\fL\,\pA\pA\fL$ either by hand or, e.g., using ToTo~\cite{Wer17}. The other cases follow since they are minors of this $G$.}
It remains to distinguish the large graphs with treewidth~$4$ from those with treewidth~$5$.
Surprisingly, for this we only need to recognize one specific minor~$\hg^3$:

\begin{definition}%
    The \emph{hourglass graph}~\hg is obtained from two disjoint triangles
    by identifying one vertex from the first with one vertex from the second triangle.
	The graph $\hg^3$ is obtained by \change{cyclically} joining three hourglass graphs, as \removedS{visualized}\change{given} in \Cref{fig:forbidden-minor}.
\end{definition}
\begin{figure}
	\centering
	\winIgnore{\includegraphics[scale=0.7]{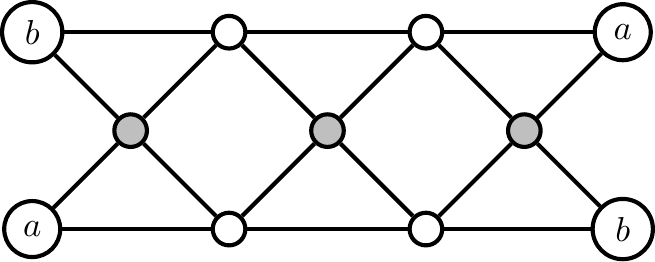}}
	\caption{
		Drawing of the minimal forbidden minor~$\protect\hg^3$ for treewidth~$4$ in the projective plane as indicated by the identification of the same-labeled vertices.
		Vertices with the same labels are identified.
		Internal vertices are gray.
	}
	\label{fig:forbidden-minor}
\end{figure}

We now consider a general refined cop strategy that will allow us to use less than $6$ cops in some cases.

\begin{definition}%
	Consider a tile~$T=(G,x,y)$ in a \ltwocc with cops placed on some of its vertices.
	Let $W \subseteq V(G)$ denote the set of vertices occupied by cops.
	A vertex $v \in V(G)$ is \emph{left-blocked} (\emph{right-blocked}) if $G[V(G) \setminus W]$ contains no
	path from $v$ to a vertex of its left wall (right wall, respectively).
	A vertex of~$G$ is \emph{tracked} if itself or all its neighbors are occupied by cops.
	A \emph{sweep} of~$T$ is a sequence of cop movements such that
		(1)~the cops initially occupy the left wall,
		(2)~after the sequence, the cops occupy the right wall; and
		(3)~during the cop movements, each vertex in $V(G)$
		    enters the three states ``left-blocked'', ``tracked'', and ``right-blocked'' in that order such that
		    each state is entered exactly once and at each point in time, at least one state applies.
\end{definition}

Observe that during a sweep, the respective tile always remains blocked
since there is no vertex that is connected to both its walls.
Further, a sweep is in fact applicable in both directions, i.e., the reverse sequence allows cops to move from the right to the left wall in the same manner.

\begin{definition}%
    An elementary tile is \emph{messy}, if it contains~\hg as a minor.
    An elementary tile~$T$ is \emph{neat} if there is a sweep of~$T$ that uses at most $3$ cops.
\end{definition}

\begin{lemma}
	\removedS{An}\change{Each} elementary tile is \removedS{either}neat or messy.
\end{lemma}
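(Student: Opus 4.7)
The plan is to proceed by exhaustive case analysis over the finitely many (2 frames $\times$ 21 pictures $=$ 42) elementary tiles listed in Figure~\ref{fig:tileFrameContent}, showing for each one that it is neat whenever it fails to be messy. Since each tile has only a constant number of vertices, both "contains $\hg$ as a minor" and "admits a sweep with at most $3$ cops" are verifiable by inspection.

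First I would identify the messy tiles. By Observation~\ref{lemma:picture_triangle}, any tile whose signature contains $\pA$, $\pV$, or $\pB$ already carries a triangle, and in most such tiles (e.g.\ $\pA\pA$, $\pV\pV$, $\pB\pB$, $\pA\pV$, $\pA\pB$, etc., under either frame) two vertex-disjoint triangles can be exhibited whose triangle-endpoints are joined by a short path that can be contracted to a single vertex, producing an $\hg$-minor. The remaining picture types to check are those built only from $\pD$, $\pH$, and $\pI$; here a double-$\pD$ tile with an $\fdL$-frame additionally yields a triangle via its double edge, and again the second triangle needed for $\hg$ must be sought carefully. I would tabulate these outcomes and cross off every tile where an $\hg$-minor is exhibited.

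Second, for each tile $T$ that survives this process (the "non-messy" tiles), I would explicitly construct a $3$-cop sweep. The generic template is: start with the two left-wall cops in place and introduce the third cop on a cut vertex (or a small cut) that separates the already left-blocked region from the rest; then iteratively walk the third cop along the tile so that each internal vertex passes through the regime ``left-blocked $\to$ tracked $\to$ right-blocked'' exactly once, finally leaving both cops on the right wall. Because the non-messy tiles have a very simple structure (essentially built from $\pD$'s and $\pH$'s, i.e., ladder-like subgraphs with no two vertex-disjoint triangles tied together), such a sweeping path can be read off directly from the drawing; at every intermediate configuration the tile remains blocked, since no vertex is simultaneously connected to both walls.

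The main obstacle will be the bookkeeping: ensuring that in each of the several surviving tiles, the order in which internal vertices are visited satisfies the three-state invariant of a sweep (every vertex enters the states left-blocked, tracked, right-blocked in this exact order, each exactly once, and at every moment at least one of the states applies). The trickiest cases are expected to be the tiles with an $\fdL$-frame whose double edge introduces an extra vertex of degree four near a wall, and tiles involving $\pI$-identifications, since these shorten distances between walls and tighten the sweep. In each such case I would either relocate the third cop twice (exploiting the helicopter moves) to maintain the invariant, or, if this proves impossible with $3$ cops, I would show that the tile actually admits two triangles connected through the identified vertex and is therefore messy after all, completing the dichotomy.
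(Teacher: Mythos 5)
Your overall plan---finitely many tiles, so exhibit either an \hg-minor or an explicit $3$-cop sweep for each---is the right shape of argument, and the paper's proof is ultimately also a finite verification (compressed via frame-preserving minor relations among pictures, so that only $\pH$ and $\pV\pI\pA$ need to be checked against \hg, and only one explicit sweep, on $\pB\pB$, needs to be given). The genuine problem is where you draw the dividing line. You propose to classify $\pA\pA$, $\pV\pV$, $\pB\pB$ (and ``most'' tiles whose signature contains $\pA$, $\pV$, or $\pB$) as messy by exhibiting two vertex-disjoint triangles and contracting a connecting path. But the paper proves that exactly these pictures are \emph{neat}, and its treewidth results force them to \emph{not} be messy: by Corollary~\ref{cor:bounded_treewidth} a \ltwocc{} built solely from $\pB\pB\fL$-tiles has treewidth~$4$, whereas if $\pB\pB$ contained an \hg-minor, three such tiles would yield an $\hg^3$-minor and hence treewidth~$5$ by the robber strategy in Theorem~\ref{thm:treewidth}. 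Since any two vertex-disjoint triangles in a connected graph do give an \hg-minor, the conclusion is that the triangles guaranteed by Observation~\ref{lemma:picture_triangle} in an $\pA\pA$-, $\pV\pV$-, or $\pB\pB$-tile are not vertex-disjoint in the way you assume; the messy pictures are precisely $\pV\pA$, $\pB\pA$, $\pV\pI\pA$, $\pB\pI\pA$ (and mirrors) together with $\pH$, where the two triangles really do sit close enough to be merged into the \hg cut vertex. Because your plan only constructs sweeps for tiles where no \hg-minor is found, the tiles you would wrongly cross off are exactly the ones---$\pB\pB$ above all---for which the explicit sweep is the substantive content of the ``neat'' half of the lemma; as written, that work is skipped on the strength of a false claim.

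Two further points. First, your remark that a $\pD\pD$-picture with a \fdL-frame ``yields a triangle via its double edge'' contradicts Observation~\ref{lemma:picture_triangle} (a double edge is a $2$-cycle, not a triangle; $\pD\pD$-tiles are triangle-free in isolation). Second, your sweep template of parking two cops on the left wall and walking only the third is more restrictive than the definition permits: in the paper's $\pB\pB$ sweep all three cops relocate, and the three-state invariant is maintained precisely by moving the wall cops inward. Your fallback---``if a $3$-cop sweep proves impossible, show the tile is messy after all''---does formally cover the dichotomy, but it means the proof would only be completed by discovering and repairing the misclassification above, so the proposal as it stands does not yet establish the lemma.
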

\newcommand{\dotprec}{\ \dot\prec\ }
\begin{proof}
    We show that elementary tiles with pictures~$\mathcal M \coloneqq \{\pV\pA,\pV\pI\pA,\pB\pA,\pB\pI\pA,$ $\pH\}$ are messy and the remaining ones are neat.
    For this proof, we denote by $X \dotprec Y$ that picture $X$ is a minor of picture $Y$ such that %
    $X$ and $Y$ have the same frame vertices.

    For the first part, we contract all but the center $4$-cycle of each tile's frame.
    Picture~$\pH$ becomes~\hg by contraction of the central edge.
    Clearly, $\pV\pI\pA \dotprec \pV\pA$ and similarly $\pB\pI\pA \dotprec \pB\pA$---each by contraction of a single edge.
    Also, $\pV\pI\pA \dotprec \pB\pI\pA$ by contracting the double edge in $\pB$.
    Hence, all elements of $\mathcal M$ that are not $\pH$ contain $\pV\pI\pA$ as a minor that by removal of a single edge becomes~\hg.

    It remains to show that tiles with pictures not in $\mathcal M$ are neat.
    These pictures are $\overline{\mathcal M} \coloneqq \{\pD\pD,\pD\pV,\pD\pB,\pD\pA,\pV\pV,\pV\pB,\pB\pB,\pA\pA\}$.
    For the sweep, we may assume to start on the picture's vertices as it is trivial to move from any wall-vertex that is not part of the picture to its adjacent vertex in the picture.
    Note that we may also omit the mirrored pictures $\pD\pA$ and $\pA\pA$ since it suffices to show a sweep of their
        mirrored counterparts ($\pD\pV$ and $\pV\pV$, respectively) that are also in $\overline{\mathcal M}$.
    The remaining pictures in $\overline{\mathcal M}$ are all minors of $\pB\pB$:
        $\pD\pD \dotprec \pD\pV \dotprec \pD\pB$ and $\pV\pV \dotprec \pD\pB \dotprec \pB\pB$ where each minor-relation,
        except for the last, is witnessed by contraction of a single edge.
    Hence, it suffices to provide a sweep on $\pB\pB$:
        we label this picture's vertices as~$v_i$ starting at the top left with~$v_0$ in counter-clockwise order, see \cref{fig:bb}.
    Assuming the cops arrive from the left side, they occupy $v_0$ and $v_2$.
    First, we move the $3$rd cop to $v_3$.
    Observe that all neighbors of $v_1$ are now occupied by cops, i.e., $v_1$ is tracked.
    The remaining sweep goes as follows: $v_2 \rightarrow v_7, v_0 \rightarrow v_4, v_3 \rightarrow v_6$.
    Once again, $v_5$ is tracked by occupying its neighbors.
\begin{figure}
    \centering
    \winIgnore{\includegraphics[width=0.3\textwidth]{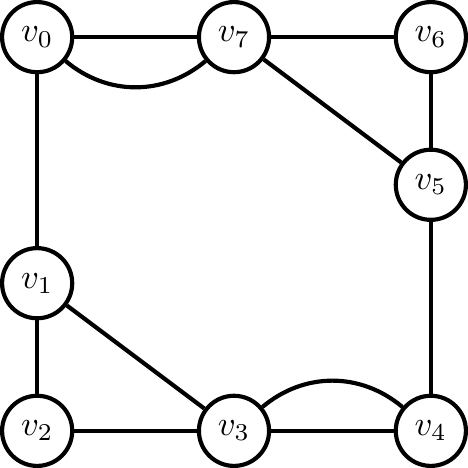}}
    \caption{
		Picture $\pB\pB$ with nodes labeled in counter-clockwise order.
	}
    \label{fig:bb}
\end{figure}
\end{proof}

\begin{theorem}\label{thm:treewidth}
    A large $2$-crossing-critical graph has treewidth~$5$ if and only if it contains at least $3$ messy tiles, i.e,
		if and only if it contains~$\hg^3$ as a minor.
\end{theorem}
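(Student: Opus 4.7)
The plan is to close the cycle of three implications (a)$\Rightarrow$(b)$\Rightarrow$(c)$\Rightarrow$(a), where (a) is $\tw(G)=5$, (b) is ``$G$ has at least $3$ messy tiles'', and (c) is ``$G$ contains $\hg^3$ as a minor''. The easiest direction is (b)$\Rightarrow$(c): given three messy tiles in the cyclic sequence, I would use the $\hg$-minor within each of them and contract each intervening neat arc down to a pair of edges joining consecutive walls. The cyclic identification together with the final twist of the cyclization then realises precisely the pattern of Figure~\ref{fig:forbidden-minor}, producing $\hg^3$ as a minor of $G$.

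For (c)$\Rightarrow$(a), the crux is $\tw(\hg^3)\ge 5$, which I would verify by exhibiting a robber-winning strategy against $4$ cops: by the threefold rotational symmetry of $\hg^3$, any placement of $4$ cops leaves one of its three hourglass sectors with at most one cop, so the robber can always flee to an unblocked two-triangle region. Minor-monotonicity of treewidth, combined with the earlier lemma bounding $\tw(G)\le 5$, then forces $\tw(G)=5$.

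The main technical work lies in (a)$\Rightarrow$(b), which I would prove via its contrapositive: if $G$ contains at most $2$ messy tiles, then $\tw(G)\le 4$. The strategy places $2$ anchor cops on the left wall of one messy tile $M_1$ (breaking the cyclic graph into a linear path of tiles) and uses the other $3$ cops to execute iterated $3$-cop sweeps along the neat tiles. When the cop-front meets the second messy tile $M_2$, I would argue that since the swept region behind the anchor is already robber-free, the anchor cops may be lifted and relocated to concentrate up to $5$ cops around $M_2$, allowing the robber to be caught within $M_2$'s constant-size interior before the sweep resumes.

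The principal obstacle is the cop-bookkeeping for the case of two messy tiles: rigorously arguing that the anchor cops can be released without offering the robber a corridor back into the already-swept part of the graph. A secondary but still nontrivial step is the case analysis for the robber's winning strategy on $\hg^3$ against $4$ cops, and the verification that the contraction in step~1 genuinely yields the full $\hg^3$ identification rather than a proper minor of it.
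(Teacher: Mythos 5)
Your overall architecture --- three messy tiles give an $\hg^3$ minor, $\tw(\hg^3)\ge 5$ gives the lower bound by minor-monotonicity, and a five-cop strategy handles the case of at most two messy tiles --- is the same as the paper's, but two of your steps would fail as stated. First, in the cops-and-robber characterization used here the treewidth equals the \emph{maximum number of cops against which the robber still wins}, so to certify $\tw(\hg^3)\ge 5$ you must exhibit a robber strategy that evades \emph{five} cops, not four; your sketch is off by one, and even with the count corrected the pigeonhole argument (``some hourglass sector has at most one cop'') does not explain how the robber actually reaches that sector through cop-free vertices. The paper instead gives a local strategy: the robber sits on a rim vertex $u$ until the last unoccupied neighbour $v$ is about to be taken, then escapes through $v$ to a rim vertex non-adjacent to $u$, using that any two non-adjacent rim vertices have exactly two common neighbours; this is the case analysis you would need to carry out.

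Second, your plan for the contrapositive of (a)$\Rightarrow$(b) --- lift the two anchor cops from $M_1$ once the sweeping front reaches $M_2$ and reconcentrate them at $M_2$ --- is unsound, not merely delicate: the robber moves at infinite speed through cop-free vertices, so the instant the anchors are released the entire already-swept region becomes an escape corridor back around the cycle (recontamination). The paper avoids this by never lifting a blocking pair: it places four cops on \emph{both} walls of one messy tile $X$ and runs two successive $3$-cop sweeps (each reusing a single fifth cop) in opposite directions until both fronts reach the two walls of the other messy tile $Y$, so the robber's territory is at all times bounded by occupied walls and only shrinks. You would need to reorganize your strategy along these lines for the upper bound to go through.
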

\begin{proof}
    If there are no messy tiles, every elementary tile is neat and can be sweeped by $3$ cops.
    Hence, we may block an arbitrary tile with $2$ cops and sweep around the remaining graph with $3$ cops.
    If during the sweep, the robber should remain on a vertex that is tracked by cops occupying its neighbors,
    we catch him using one of the first $2$ cops.

    Similarly, if there are up to two messy tiles, say $X$ and $Y$, $4$ cops are initially placed on $X$'s walls.
    Then, the cops on the left wall of $X$ start to sweep using a $5$th cop~$c$ until they reach the wall of $Y$.
    Finally, the cops on the right wall of $X$ sweep, again using~$c$, until they reach the other wall of $Y$.

    If, on the other hand, there are $3$ messy tiles in $G$, then $G$
        contains the forbidden minor~$\hg^3$, as witnessed by contracting all edges that
        do not belong to the set of $3$ messy tiles and contracting each messy tile to~\hg.
    On $\hg^3$, however, there exists a simple strategy for the robber to win against $5$ cops:
        There are two types of vertices in~$\hg^3$: $6$ \emph{rim} vertices and $3$ \emph{internal} ones,
        seen in \cref{fig:forbidden-minor} as the (white) top/bottom and (gray) middle ones, respectively.
    The robber stays on an arbitrary rim vertex~$u$ until the last of its neighbors, say $v$, is about to be occupied by a cop.
    It then moves over $v$ to a new rim vertex that is not adjacent to $u$.

    If $v$ is an internal vertex, it is adjacent to $4$ rim vertices:
        $u$, a neigbor of $u$ and two other vertices $w_1,w_2$ that are not adjacent to $u$.
    Since there are only $5$ cops, $w_1$ or $w_2$ is not occupied and the robber may move to it.
    Conversely, if $v$ is a rim vertex, the robber will, depending on the position of the remaining fifth cop,
        either move another edge along the rim or over the non-occupied internal vertex to a further rim vertex non-adjacent to $u$.
    Since any pair of non-adjacent rim vertices has exactly two common neighbors,
        not all neighbors of the new rim vertex are occupied even after the cop lands on $v$.
\end{proof}
\begin{corollary}
    \label{cor:bounded_treewidth}
    Any \ltwocc{} on at least $5$ elementary tiles has treewidth $5$ if and only if it contains at least three elementary tiles with pictures from the set $\{\pV\pA,\pV\pI\pA,$ $\pB\pA,$ $\pB\pI\pA, \pH\}$. Otherwise, it has treewidth $4$.
\end{corollary}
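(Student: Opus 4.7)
The plan is to unfold Theorem~\ref{thm:treewidth} using the explicit classification of messy elementary tiles that was already established in the preceding lemma. The corollary is essentially a bookkeeping restatement, so I do not expect any fundamentally new work — the only task is to weld the pieces together carefully and confirm that the lower bound of $4$ also applies.

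First, I would invoke the earlier lemma stating that any \ltwocc{} with at least $5$ elementary tiles has $4 \leq \tw(G) \leq 5$. This reduces the corollary to distinguishing the two cases $\tw(G)=4$ and $\tw(G)=5$.

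Next, Theorem~\ref{thm:treewidth} characterizes this distinction in terms of messy tiles: $\tw(G)=5$ exactly when $G$ contains at least three messy tiles (equivalently, contains $\hg^3$ as a minor), and $\tw(G)=4$ otherwise. The neat/messy lemma in turn gives the explicit list of messy pictures, namely $\mathcal{M} = \{\pV\pA,\pV\pI\pA,\pB\pA,\pB\pI\pA,\pH\}$; all other elementary tiles are neat (they admit a $3$-cop sweep) and so do not contribute to $\hg^3$.

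Substituting this explicit list $\mathcal{M}$ into the statement of Theorem~\ref{thm:treewidth} and combining with the $\tw(G)\geq 4$ bound produces the claimed dichotomy verbatim. The only point worth double-checking is that the ``at least $5$ elementary tiles'' hypothesis in the corollary matches the hypothesis used by the bounding lemma — which it does — so no additional case distinction on small graphs is required here. Hence the proof amounts to a clean composition of the three already-established facts and presents no genuine obstacle.
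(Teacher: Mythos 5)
Your proposal is correct and matches the paper's intent exactly: the corollary is stated without a separate proof precisely because it is the composition of the treewidth-bounding lemma, Theorem~\ref{thm:treewidth}, and the neat/messy classification of elementary tiles, which is what you assemble. No gap.
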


\section{Conclusions}

For several graph classes, we have conjectures on their crossing numbers. But there are only very
few classes for which we know their crossing numbers. Then, their structure is mostly rather simplistic.
The class of 2-crossing-critical graphs seems to be the first graph class with known crossing
numbers that still offers rich and non-trivial structure in terms of other graph measures as well.

In this paper, after some straight-forward graph properties as building blocks, we successfully discussed
both their chromatic number and index, as well as their treewidth. We propose further investigation
of general graph-theoretic properties of \removedS{infinite}crossing-number related \change{infinite} graph families, to further
the idea of interlinking the concepts of topological graph theory with other aspects of the field
and further discovery of new applications.
---
On a more specific note, we recall Question \ref{q:character3} from above, which asks whether we can fully characterize 3-colorable \ltwocc{}s.

	In all our proofs, knowing the structure of large $2$-crossing-critical graphs was instrumental to proving the values of the above invariants. For further research, it would be of interest
	to obtain these values without referring to the structure of the
	graphs, possibly by just assuming the $3$-connectivity,
	$2$-crossing-criticality and (should it be needed), presence of
	a $V_{10}$ subdivision. Such approaches to graph invariants on $2$-crossing-critical graphs may then be generalizable to $c$-crossing-critical graphs for $c>2$.
	Furthermore, there are other graph invariants and problems one could consider on these graphs, \texttt{www.graphclasses.org} sharing an extensive list. By investigating these invariants
	and specifically by obtaining proofs that require no knowledge about the structure of the underlying $2$-crossing-critical graphs, one may
	find ways to simplify the characterization theorem of \cite{Bok+16}, or to identify an approach that would allow to
	list the finitely many $2$-crossing-critical graphs that contain a $V_{8}$, but not a $V_{10}$ subdivision,
	which is the final open
	step that would render their characterization completely constructive.

\section*{Acknowledgement}
D.B.\ was funded in part by Slovenian Research Agency ARRS, grant J1--8130 and programme P1--0297.
In addition, the research was initiated during knowledge exchange visit within the project INOVUP
funded by the Republic of Slovenia and the European Union from the European Social Fund. M.C.\ and T.W.\ were partially funded by the German Research Foundation DFG, project CH 897/2-2.

\clearpage
\bibliographystyle{abbrvurl}
\bibliography{main}

\begin{thebibliography}{10}

\bibitem{AH77}
K.~Appel and W.~Haken.
\newblock Every planar map is four colorable. {P}art {I}: {D}ischarging.
\newblock {\em Illinois Journal of Mathematics}, 21(3):429--490, 09 1977.
\newblock \href {https://doi.org/10.1215/ijm/1256049011}
  {\path{doi:10.1215/ijm/1256049011}}.

\bibitem{Arc81}
D.~Archdeacon.
\newblock A {K}uratowski theorem for the projective plane.
\newblock {\em Journal of Graph Theory}, 5(3):243--246, 1981.
\newblock \href {https://doi.org/10.1002/jgt.3190050305}
  {\path{doi:10.1002/jgt.3190050305}}.

\bibitem{APC90}
S.~Arnborg, A.~Proskurowski, and D.~G. Corneil.
\newblock Forbidden minors characterization of partial 3-trees.
\newblock {\em Discrete Mathematics}, 80(1):1--19, 1990.
\newblock \href {https://doi.org/10.1016/0012-365X(90)90292-P}
  {\path{doi:10.1016/0012-365X(90)90292-P}}.

\bibitem{BB72}
U.~Bertele and F.~Brioschi.
\newblock {\em Nonserial Dynamic Programming}.
\newblock Academic Press, 1972.

\bibitem{Bod88}
H.~L. Bodlaender.
\newblock Dynamic programming on graphs with bounded treewidth.
\newblock In T.~Lepistö and A.~Salomaa, editors, {\em Automata, Languages and
  Programming}, volume 317 of {\em Lecture Notes in Computer Science}, pages
  105--118. Springer, 1988.
\newblock \href {https://doi.org/10.1007/3-540-19488-6_110}
  {\path{doi:10.1007/3-540-19488-6_110}}.

\bibitem{Bod07}
H.~L. Bodlaender.
\newblock Treewidth: {S}tructure and algorithms.
\newblock In G.~Prencipe and S.~Zaks, editors, {\em Structural Information and
  Communication Complexity}, volume 4474 of {\em Lecture Notes in Computer
  Science}, pages 11--25. Springer, 2007.
\newblock \href {https://doi.org/10.1007/978-3-540-72951-8_3}
  {\path{doi:10.1007/978-3-540-72951-8_3}}.

\bibitem{Bok+19b}
D.~Bokal, M.~Bracic, M.~Derňár, and P.~Hliněný.
\newblock On degree properties of crossing-critical families of graphs.
\newblock {\em The Electronic Journal of Combinatorics}, 26(1):P1.53, 2019.
\newblock \href {https://doi.org/10.37236/7753} {\path{doi:10.37236/7753}}.

\bibitem{Bok+19a}
D.~Bokal, Z.~Dvořák, P.~Hliněný, J.~Leaños, B.~Mohar, and T.~Wiedera.
\newblock Bounded degree conjecture holds precisely for $c$-crossing-critical
  graphs with $c \leq 12$.
\newblock In {\em 35th International Symposium on Computational Geometry},
  volume 129 of {\em LIPIcs}, pages 14:1--14:15. Schloss Dagstuhl -
  Leibniz-Zentrum für Informatik, 2019.
\newblock \href {https://doi.org/10.4230/LIPIcs.SoCG.2019.14}
  {\path{doi:10.4230/LIPIcs.SoCG.2019.14}}.

\bibitem{BKZ21}
D.~Bokal, A.~V. Kalamar, and T.~Žerak.
\newblock Counting hamiltonian cycles in 2-tiled graphs.
\newblock {\em arXiv:2102.07985}, submitted.
\newblock URL: \url{http://arxiv.org/abs/2102.07985}.

\bibitem{Bok+16}
D.~Bokal, B.~Oporowski, R.~B. Richter, and G.~Salazar.
\newblock Characterizing 2-crossing-critical graphs.
\newblock {\em Advances in Applied Mathematics}, 74:23--208, 2016.
\newblock \href {https://doi.org/10.1016/j.aam.2015.10.003}
  {\path{doi:10.1016/j.aam.2015.10.003}}.

\bibitem{Bol86}
B.~Bollobás.
\newblock {\em Combinatorics: {S}et Systems, Hypergraphs, Families of Vectors,
  and Combinatorial Probability}.
\newblock Cambridge University Press, 1986.

\bibitem{Buc+05}
C.~Buchheim, D.~Ebner, M.~Junger, G.~W. Klau, P.~Mutzel, and R.~Weiskircher.
\newblock Exact crossing minimization.
\newblock In P.~Healy and N.~S. Nikolov, editors, {\em International Symposium
  on Graph Drawing}, volume 3843 of {\em Lecture Notes in Computer Science},
  pages 37--48. Springer, 2005.
\newblock \href {https://doi.org/10.1007/11618058_4}
  {\path{doi:10.1007/11618058_4}}.

\bibitem{Cao+19}
Y.~Cao, G.~Chen, G.~Jing, M.~Stiebitz, and B.~Toft.
\newblock Graph edge coloring: A survey.
\newblock {\em Graphs and Combinatorics}, 35(1):33--66, 2019.
\newblock \href {https://doi.org/10.1007/s00373-018-1986-5}
  {\path{doi:10.1007/s00373-018-1986-5}}.

\bibitem{Cha82}
G.~J. Chaitin.
\newblock Register allocation \& spilling via graph coloring.
\newblock In J.~R. White and F.~E. Allen, editors, {\em Proceedings of the 1982
  {SIGPLAN} Symposium on Compiler Construction}, pages 98--105. {ACM}, 1982.
\newblock \href {https://doi.org/10.1145/800230.806984}
  {\path{doi:10.1145/800230.806984}}.

\bibitem{CGH71}
G.~Chartrand, D.~Geller, and S.~Hedetniemi.
\newblock Graphs with forbidden subgraphs.
\newblock {\em Journal of Combinatorial Theory, Series B}, 10(1):12--41, 1971.
\newblock \href {https://doi.org/10.1016/0095-8956(71)90065-7}
  {\path{doi:10.1016/0095-8956(71)90065-7}}.

\bibitem{Cou90}
B.~Courcelle.
\newblock The monadic second-order logic of graphs. {I}. {R}ecognizable sets of
  finite graphs.
\newblock {\em Information and Computation}, 85(1):12--75, 1990.
\newblock \href {https://doi.org/10.1016/0890-5401(90)90043-H}
  {\path{doi:10.1016/0890-5401(90)90043-H}}.

\bibitem{Die06}
R.~Diestel.
\newblock {\em Graph Theory}.
\newblock Springer, 5th edition, 2017.

\bibitem{Dvo08}
Z.~Dvorák.
\newblock On forbidden subdivision characterizations of graph classes.
\newblock {\em European Journal of Combinatorics}, 29(5):1321--1332, 2008.
\newblock \href {https://doi.org/10.1016/j.ejc.2007.05.008}
  {\path{doi:10.1016/j.ejc.2007.05.008}}.

\bibitem{DHM18}
Z.~Dvořák, P.~Hliněný, and B.~Mohar.
\newblock Structure and generation of crossing-critical graphs.
\newblock In {\em 34th International Symposium on Computational Geometry},
  volume~99 of {\em LIPIcs}, pages 33:1--33:14. Schloss Dagstuhl -
  Leibniz-Zentrum für Informatik, 2018.
\newblock \href {https://doi.org/10.4230/LIPIcs.SoCG.2018.33}
  {\path{doi:10.4230/LIPIcs.SoCG.2018.33}}.

\bibitem{Gag05}
A.~Gagarin, W.~J. Myrvold, and J.~Chambers.
\newblock Forbidden minors and subdivisions for toroidal graphs with no
  $k_{3,3}$'s.
\newblock {\em Electronic Notes in Discrete Mathematics}, 22:151--156, 2005.
\newblock \href {https://doi.org/10.1016/j.endm.2005.06.027}
  {\path{doi:10.1016/j.endm.2005.06.027}}.

\bibitem{GJ79}
M.~R. Garey and D.~S. Johnson.
\newblock {\em Computers and Intractability: A Guide to the Theory of
  {NP}-Completeness}.
\newblock W. H. Freeman, 1979.

\bibitem{Gro04}
M.~Grohe.
\newblock Computing crossing numbers in quadratic time.
\newblock {\em Journal of Computer and System Sciences}, 68(2):285--302, 2004.
\newblock \href {https://doi.org/10.1016/j.jcss.2003.07.008}
  {\path{doi:10.1016/j.jcss.2003.07.008}}.

\bibitem{Hal76}
R.~Halin.
\newblock S-functions for graphs.
\newblock {\em Journal of Geometry}, 8:171--186, 1976.
\newblock \href {https://doi.org/10.1007/BF01917434}
  {\path{doi:10.1007/BF01917434}}.

\bibitem{Hli01}
P.~Hlinený.
\newblock Crossing-critical graphs and path-width.
\newblock In P.~Mutzel, M.~Jünger, and S.~Leipert, editors, {\em Graph
  Drawing}, volume 2265 of {\em Lecture Notes in Computer Science}, pages
  102--114. Springer, 2001.
\newblock \href {https://doi.org/10.1007/3-540-45848-4_9}
  {\path{doi:10.1007/3-540-45848-4_9}}.

\bibitem{Hli03}
P.~Hliněný.
\newblock Crossing-number critical graphs have bounded path-width.
\newblock {\em Journal of Combinatorial Theory, Series B}, 88(2):347--367,
  2003.
\newblock \href {https://doi.org/10.1016/S0095-8956(03)00037-6}
  {\path{doi:10.1016/S0095-8956(03)00037-6}}.

\bibitem{HK19}
P.~Hliněný and M.~Korbela.
\newblock On the achievable average degrees in 2-crossing-critical graphs.
\newblock {\em Acta Mathematica Universitatis Comenianae}, 88(3):787--793,
  2019.
\newblock URL:
  \url{http://www.iam.fmph.uniba.sk/amuc/ojs/index.php/amuc/article/view/1178}.

\bibitem{Hol81}
I.~Holyer.
\newblock The {NP}-completeness of edge-coloring.
\newblock {\em SIAM Journal on Scientific Computing}, 10:718--720, 1981.
\newblock \href {https://doi.org/10.1137/0210055} {\path{doi:10.1137/0210055}}.

\bibitem{Kaw08}
K.~Kawarabayashi, B.~Mohar, and B.~A. Reed.
\newblock A simpler linear time algorithm for embedding graphs into an
  arbitrary surface and the genus of graphs of bounded tree-width.
\newblock In {\em 49th Annual {IEEE} Symposium on Foundations of Computer
  Science}, pages 771--780. {IEEE} Computer Society, 2008.
\newblock \href {https://doi.org/10.1109/FOCS.2008.53}
  {\path{doi:10.1109/FOCS.2008.53}}.

\bibitem{KR07}
K.~Kawarabayashi and B.~A. Reed.
\newblock Computing crossing number in linear time.
\newblock In D.~S. Johnson and U.~Feige, editors, {\em Proceedings of the 39th
  Annual {ACM} Symposium on Theory of Computing 2007}, pages 382--390. {ACM},
  2007.
\newblock \href {https://doi.org/10.1145/1250790.1250848}
  {\path{doi:10.1145/1250790.1250848}}.

\bibitem{Koc87}
M.~Kochol.
\newblock Construction of crossing-critical graphs.
\newblock {\em Discrete Mathematics}, 66(3):311--313, 1987.
\newblock \href {https://doi.org/10.1016/0012-365X(87)90108-7}
  {\path{doi:10.1016/0012-365X(87)90108-7}}.

\bibitem{Kur30}
C.~Kuratowski.
\newblock Sur le problème des courbes gauches en topologie.
\newblock {\em Fundamenta Mathematicae}, 15(1):271--283, 1930.
\newblock URL: \url{https://eudml.org/doc/212352}.

\bibitem{Lew16}
R.~M.~R. Lewis.
\newblock {\em A Guide to Graph Colouring: Algorithms and Applications}.
\newblock Springer, 2016.
\newblock \href {https://doi.org/10.1007/978-3-319-25730-3}
  {\path{doi:10.1007/978-3-319-25730-3}}.

\bibitem{Lov72}
L.~Lovász.
\newblock A characterization of perfect graphs.
\newblock {\em Journal of Combinatorial Theory, Series B}, 13(2):95--98, 1972.
\newblock \href {https://doi.org/10.1016/0095-8956(72)90045-7}
  {\path{doi:10.1016/0095-8956(72)90045-7}}.

\bibitem{Mar04}
D.~Marx.
\newblock Graph colouring problems and their applications in scheduling.
\newblock {\em Periodica Polytechnica Electrical Engineering}, 48(1--2):11--16,
  2004.
\newblock URL: \url{https://pp.bme.hu/ee/article/view/926}.

\bibitem{Moh99}
B.~Mohar.
\newblock A linear time algorithm for embedding graphs in an arbitrary surface.
\newblock {\em SIAM Journal on Discrete Mathematics}, 12(1):6--26, 1999.
\newblock \href {https://doi.org/10.1137/S089548019529248X}
  {\path{doi:10.1137/S089548019529248X}}.

\bibitem{Owe71}
A.~Owens.
\newblock On the biplanar crossing number.
\newblock {\em IEEE Transactions on Circuit Theory}, 18(2):277--280, 1971.
\newblock \href {https://doi.org/10.1109/TCT.1971.1083266}
  {\path{doi:10.1109/TCT.1971.1083266}}.

\bibitem{PR04}
B.~Pinontoan and R.~B. Richter.
\newblock Crossing numbers of sequence of graphs~{I}: general tiles.
\newblock {\em Australian Journal of Combinatorics}, 30:197--206, 2004.
\newblock URL: \url{http://ajc.maths.uq.edu.au/pdf/30/ajc_v30_p197.pdf}.

\bibitem{Ree87}
B.~A. Reed.
\newblock A semi-strong perfect graph theorem.
\newblock {\em Journal of Combinatorial Theory, Series B}, 43(2):223--240,
  1987.
\newblock \href {https://doi.org/10.1016/0095-8956(87)90022-0}
  {\path{doi:10.1016/0095-8956(87)90022-0}}.

\bibitem{Rin65}
G.~Ringel.
\newblock {Ein Sechsfarbenproblem auf der Kugel}.
\newblock {\em Abhandlungen aus dem Mathematischen Seminar der Universität
  Hamburg}, 29:107--117, 1965.
\newblock \href {https://doi.org/10.1007/BF02996313}
  {\path{doi:10.1007/BF02996313}}.

\bibitem{RS84}
N.~Robertson and P.~D. Seymour.
\newblock Graph minors.~{III.} {P}lanar tree-width.
\newblock {\em Journal of Combinatorial Theory, Series B}, 36(1):49--64, 1984.
\newblock \href {https://doi.org/10.1016/0095-8956(84)90013-3}
  {\path{doi:10.1016/0095-8956(84)90013-3}}.

\bibitem{RS90}
N.~Robertson and P.~D. Seymour.
\newblock Graph minors.~{VIII.} {A} {K}uratowski theorem for general surfaces.
\newblock {\em Journal of Combinatorial Theory, Series B}, 48(2):255--288,
  1990.
\newblock \href {https://doi.org/10.1016/0095-8956(90)90121-F}
  {\path{doi:10.1016/0095-8956(90)90121-F}}.

\bibitem{RS04}
N.~Robertson and P.~D. Seymour.
\newblock Graph minors.~{XX.} {W}agner's conjecture.
\newblock {\em Journal of Combinatorial Theory, Series B}, 92(2):325--357,
  2004.
\newblock \href {https://doi.org/10.1016/j.jctb.2004.08.001}
  {\path{doi:10.1016/j.jctb.2004.08.001}}.

\bibitem{Sal03}
G.~Salazar.
\newblock Infinite families of crossing-critical graphs with given average
  degree.
\newblock {\em Discrete Mathematics}, 271(1--3):343--350, 2003.
\newblock \href {https://doi.org/10.1016/S0012-365X(03)00136-5}
  {\path{doi:10.1016/S0012-365X(03)00136-5}}.

\bibitem{ST90}
A.~Satyanarayana and L.~Tung.
\newblock A characterization of partial 3-trees.
\newblock {\em Networks}, 20(3):299--322, 1990.
\newblock \href {https://doi.org/10.1002/net.3230200304}
  {\path{doi:10.1002/net.3230200304}}.

\bibitem{Sch13}
M.~Schaefer.
\newblock The graph crossing number and its variants: A survey.
\newblock {\em The Electronic Journal of Combinatorics}, 2013.
\newblock \href {https://doi.org/10.37236/2713} {\path{doi:10.37236/2713}}.

\bibitem{Sch17}
M.~Schaefer.
\newblock {\em Crossing Numbers of Graphs}.
\newblock CRC Press, 1st edition, 2017.

\bibitem{ST93}
P.~D. Seymour and R.~Thomas.
\newblock Graph searching and a min-max theorem for tree-width.
\newblock {\em Journal of Combinatorial Theory, Series B}, 58(1):22--33, 1993.
\newblock \href {https://doi.org/10.1006/jctb.1993.1027}
  {\path{doi:10.1006/jctb.1993.1027}}.

\bibitem{Sir84}
J.~Sirán.
\newblock Infinite families of crossing-critical graphs with a given crossing
  number.
\newblock {\em Discrete Mathematics}, 48(1):129--132, 1984.
\newblock \href {https://doi.org/10.1016/0012-365X(84)90140-7}
  {\path{doi:10.1016/0012-365X(84)90140-7}}.

\bibitem{TM76}
W.~T. Trotter and J.~I.~M. Jr.
\newblock Characterization problems for graphs, partially ordered sets,
  lattices, and families of sets.
\newblock {\em Discrete Mathematics}, 16(4):361--381, 1976.
\newblock \href {https://doi.org/10.1016/S0012-365X(76)80011-8}
  {\path{doi:10.1016/S0012-365X(76)80011-8}}.

\bibitem{Wer17}
R.~van Wersch and S.~Kelk.
\newblock {ToTo}: An open database for computation, storage and retrieval of
  tree decompositions.
\newblock {\em Discrete Applied Mathematics}, 217:389--393, 2017.
\newblock \href {https://doi.org/10.1016/j.dam.2016.09.023}
  {\path{doi:10.1016/j.dam.2016.09.023}}.

\bibitem{Wag37}
K.~Wagner.
\newblock {Über eine Eigenschaft der ebenen Komplexe}.
\newblock {\em Mathematische Annalen}, 114:570--590, 1937.
\newblock \href {https://doi.org/10.1007/BF01594196}
  {\path{doi:10.1007/BF01594196}}.

\bibitem{Zuc07}
D.~Zuckerman.
\newblock Linear degree extractors and the inapproximability of max clique and
  chromatic number.
\newblock {\em Theory of Computing}, 3(6):103--128, 2007.
\newblock \href {https://doi.org/10.4086/toc.2007.v003a006}
  {\path{doi:10.4086/toc.2007.v003a006}}.

\end{thebibliography}

\clearpage
\appendix
\section*{Appendix}

\begin{figure}[H]
	\centering
	\winIgnore{\includegraphics[width=\textwidth]{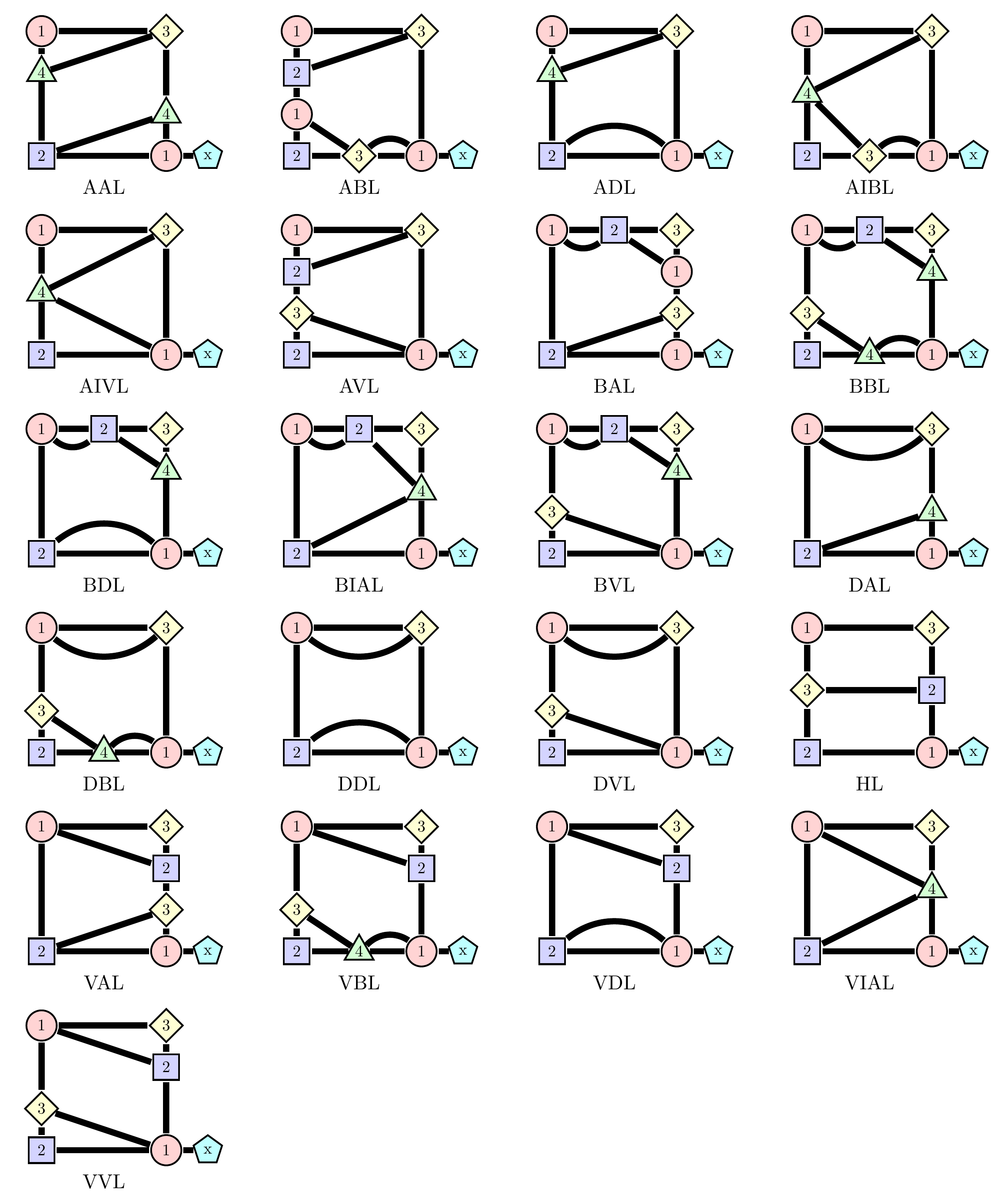}}
	\caption{4-vertex-coloring of every tile of $\cS$~with \fL-frame with
		$\protect\substack{a\\b}\rightsquigarrow\protect\substack{c\\!a}$-propagation. The vertex marked as $x$ can be colored in any color other than~$1$.}
	\label{Figures4Colorable2}
\end{figure}

\begin{figure}
	\centering
	\winIgnore{\includegraphics[width=\textwidth]{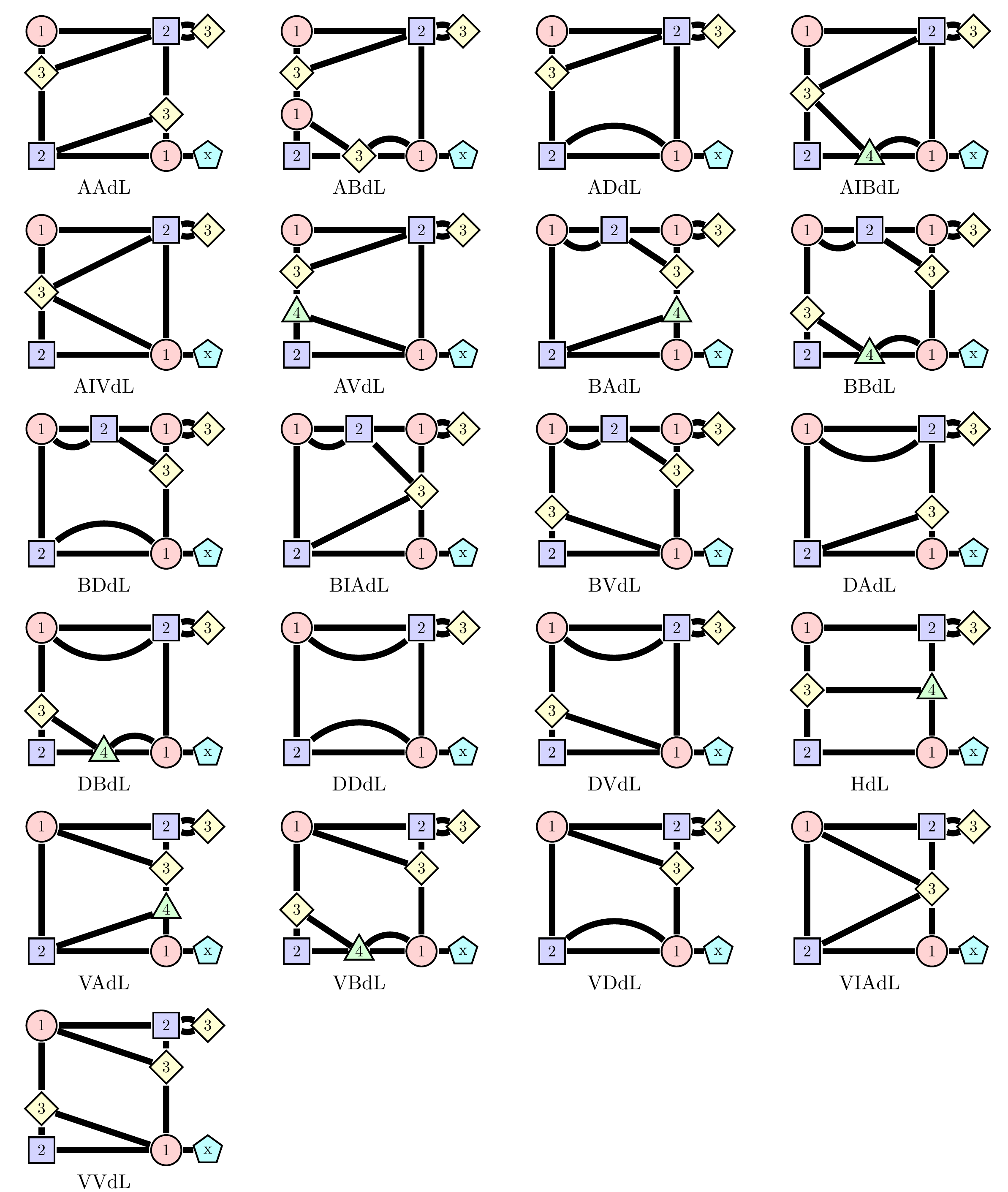}}
	\caption{4-vertex-coloring of every tile of $\cS$~with \fdL-frame with
		$\protect\substack{a\\b} \rightsquigarrow \protect\substack{c\\!a}$-propagation. The vertex marked as $x$ can be colored in any color other than~$1$.}
	\label{Figures4Colorable1}
\end{figure}

\begin{figure}[H]
 \centering
 \winIgnore{\includegraphics[width=\textwidth]{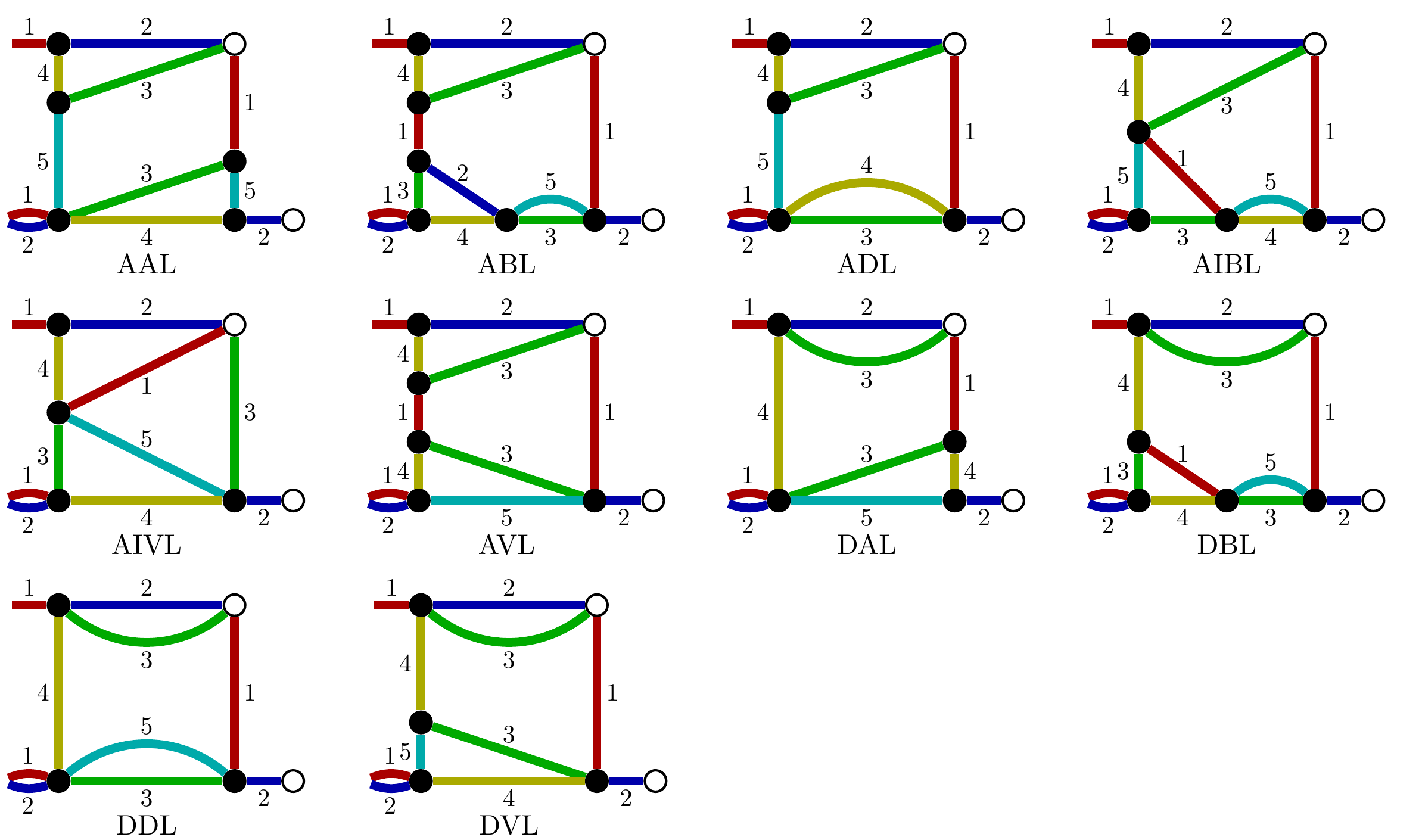}}
 \caption{Edge color propagation $P_{23}$ for all $2$-$3$-tiles.}
 \label{fig:ec_p23}
\end{figure}

\begin{figure}[h]
 \centering
 \winIgnore{\includegraphics[width=\textwidth]{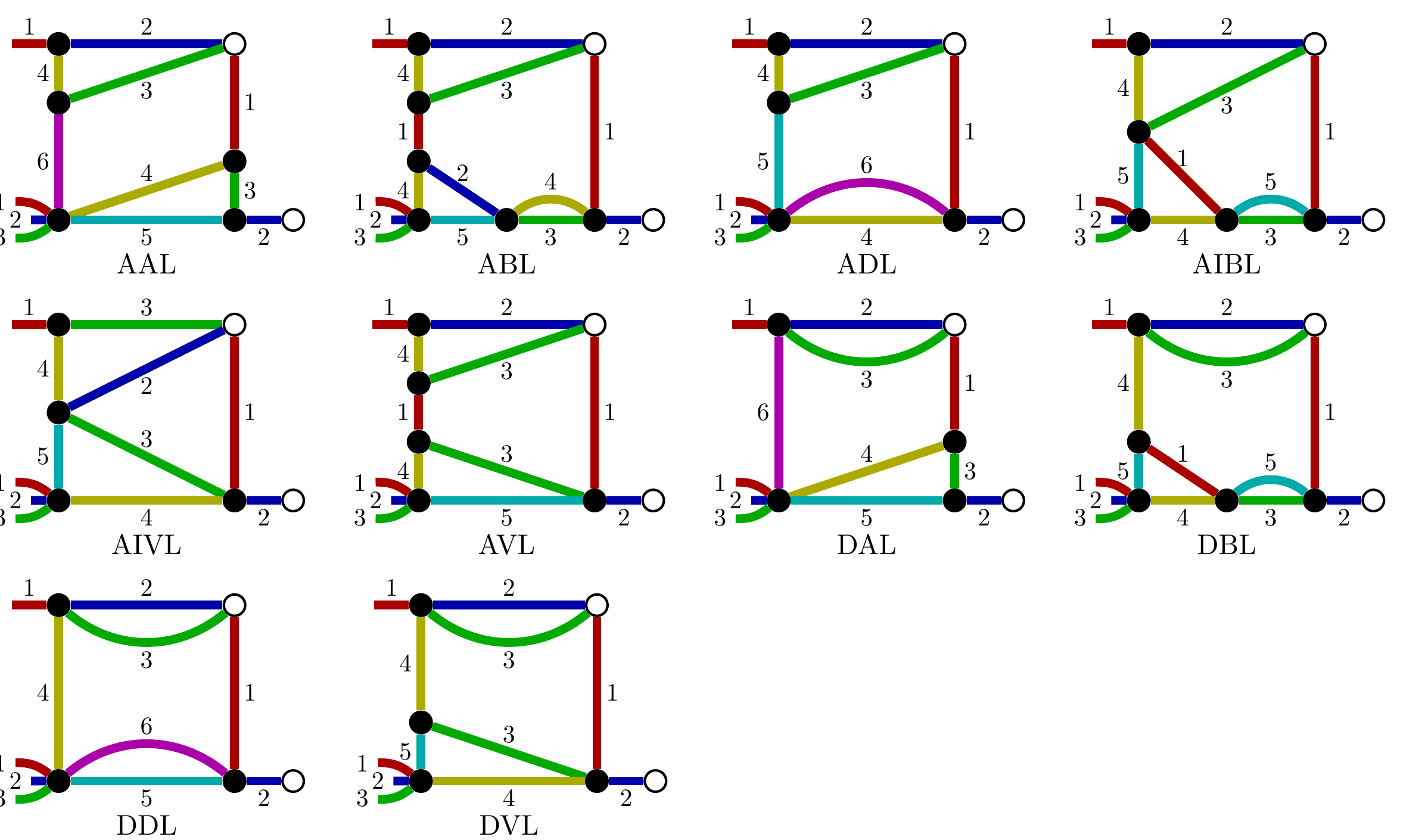}}
 \caption{Edge color propagation $P_3$ for all $3$-edge-tiles.}
 \label{fig:ec_p3}
\end{figure}

\begin{figure}
 \centering
 \winIgnore{\includegraphics[width=\textwidth]{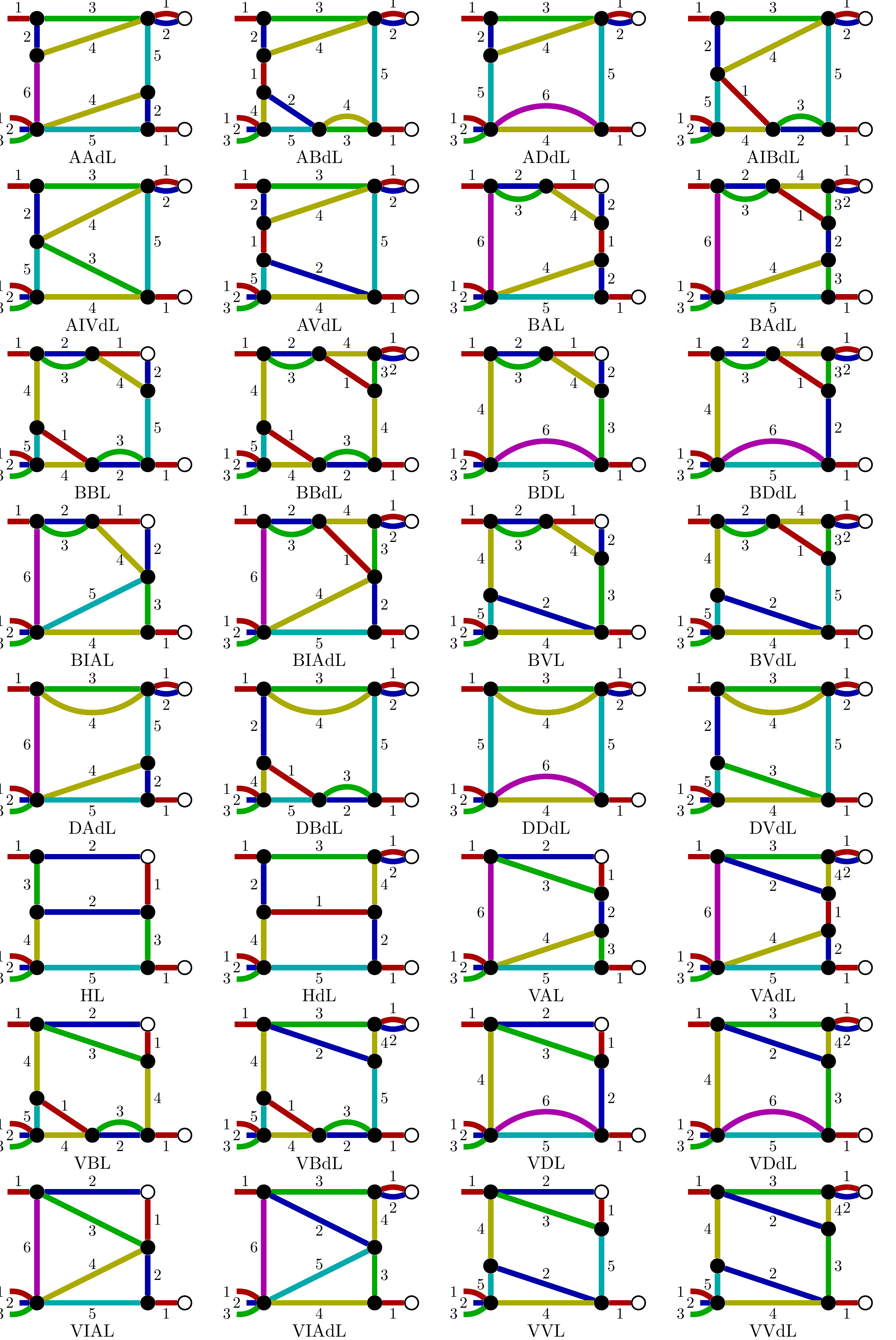}}
 \caption{Edge color propagation $P_{32a}$ for all $3$-$2$-tiles.}
 \label{fig:ec_p32a}
\end{figure}

\begin{figure}
 \centering
 \winIgnore{\includegraphics[width=\textwidth]{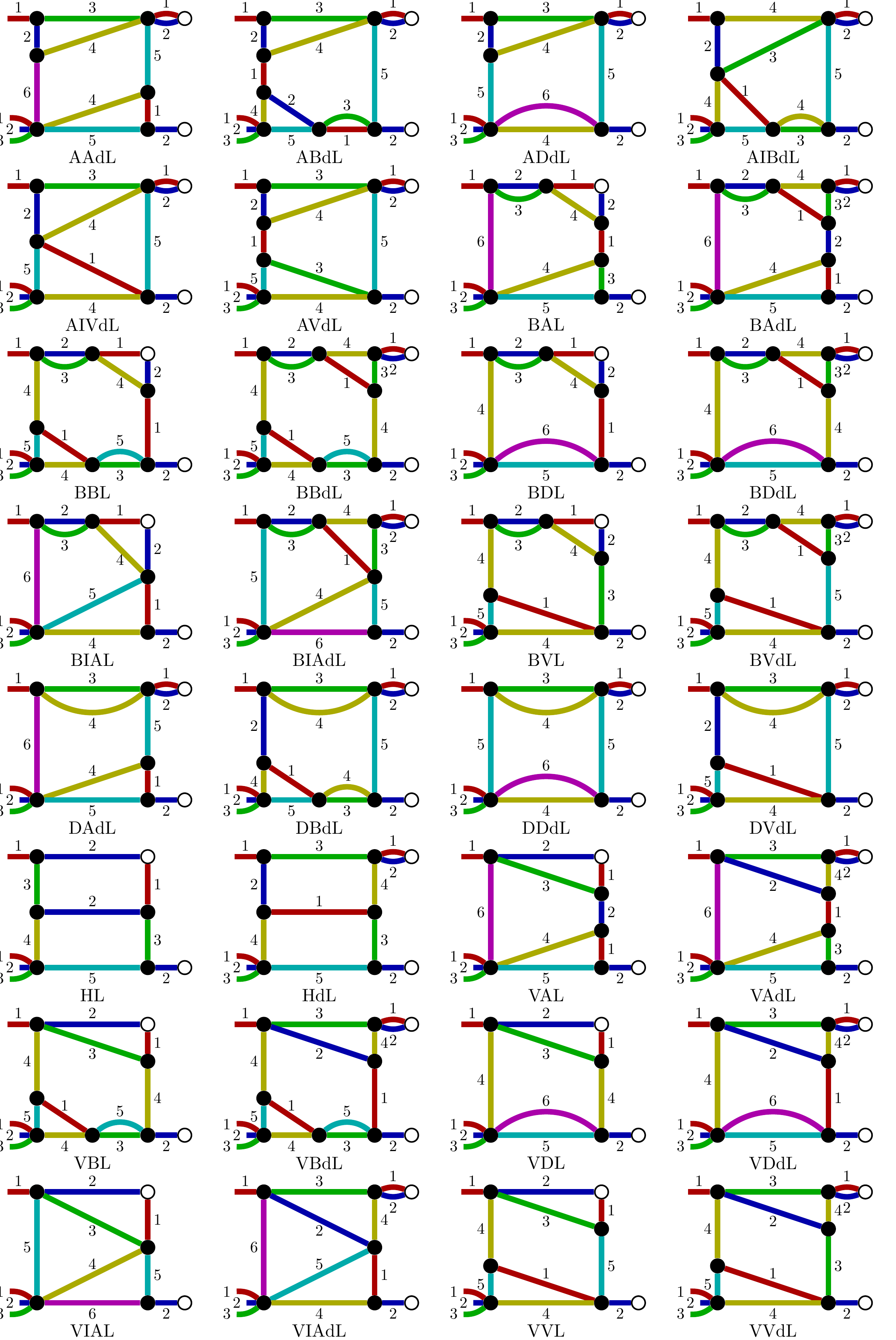}}
 \caption{Edge color propagation $P_{32b}$ for all $3$-$2$-tiles.}
 \label{fig:ec_p32b}
\end{figure}

\begin{figure}
 \centering
 \winIgnore{\includegraphics[width=\textwidth]{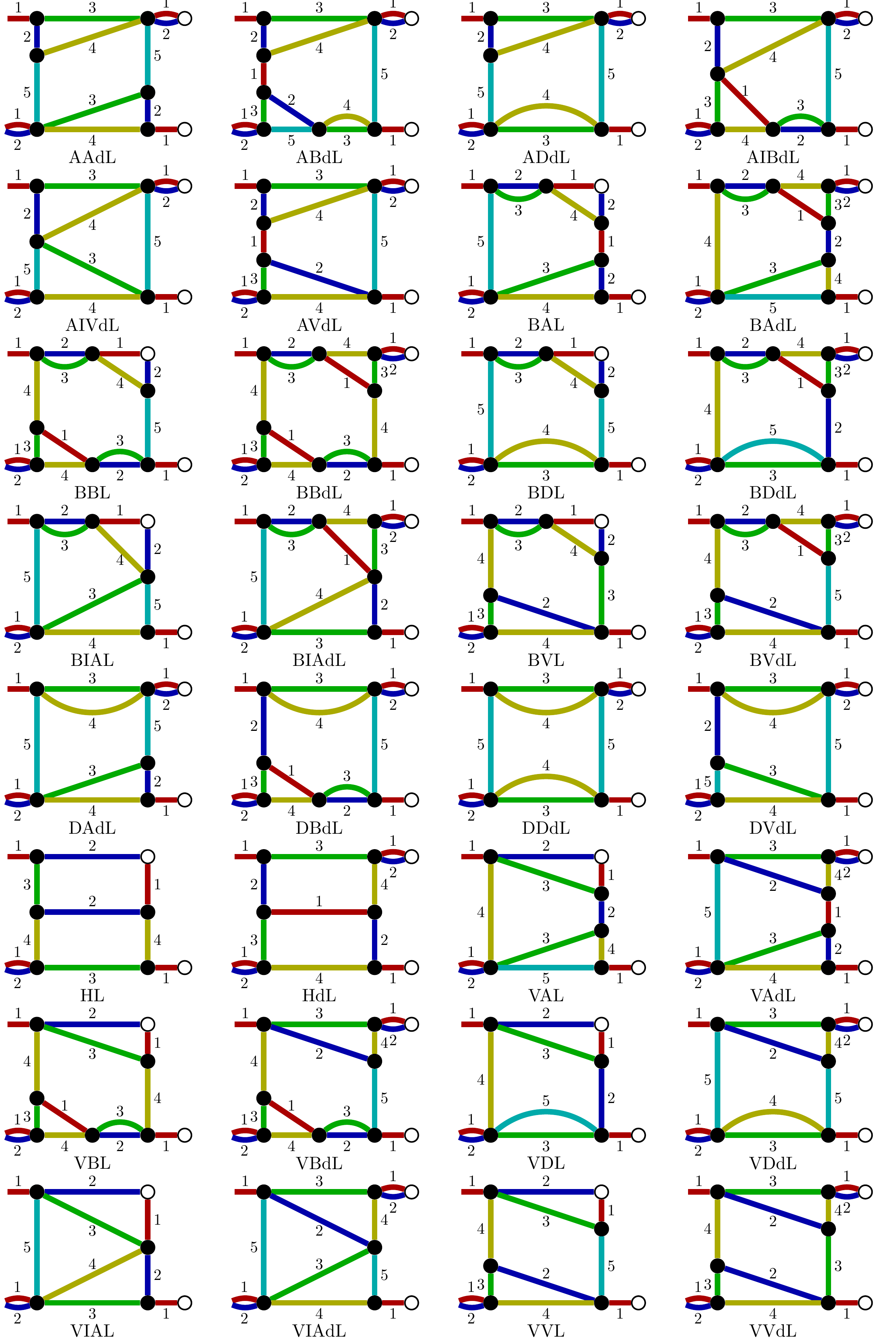}}
 \caption{Edge color propagation $P_2$ for all $2$-edge-tiles. Note that $\pB\pB\fL$, $\pV\pV\fL$, $\pB\pV\fdL$, $\pV\pB\fdL$ require $5$ colors for $P_2$ despite having maximum degree $4$.}
 \label{fig:ec_p2}
\end{figure}

\begin{figure}
 \centering
 \winIgnore{\includegraphics[width=\textwidth]{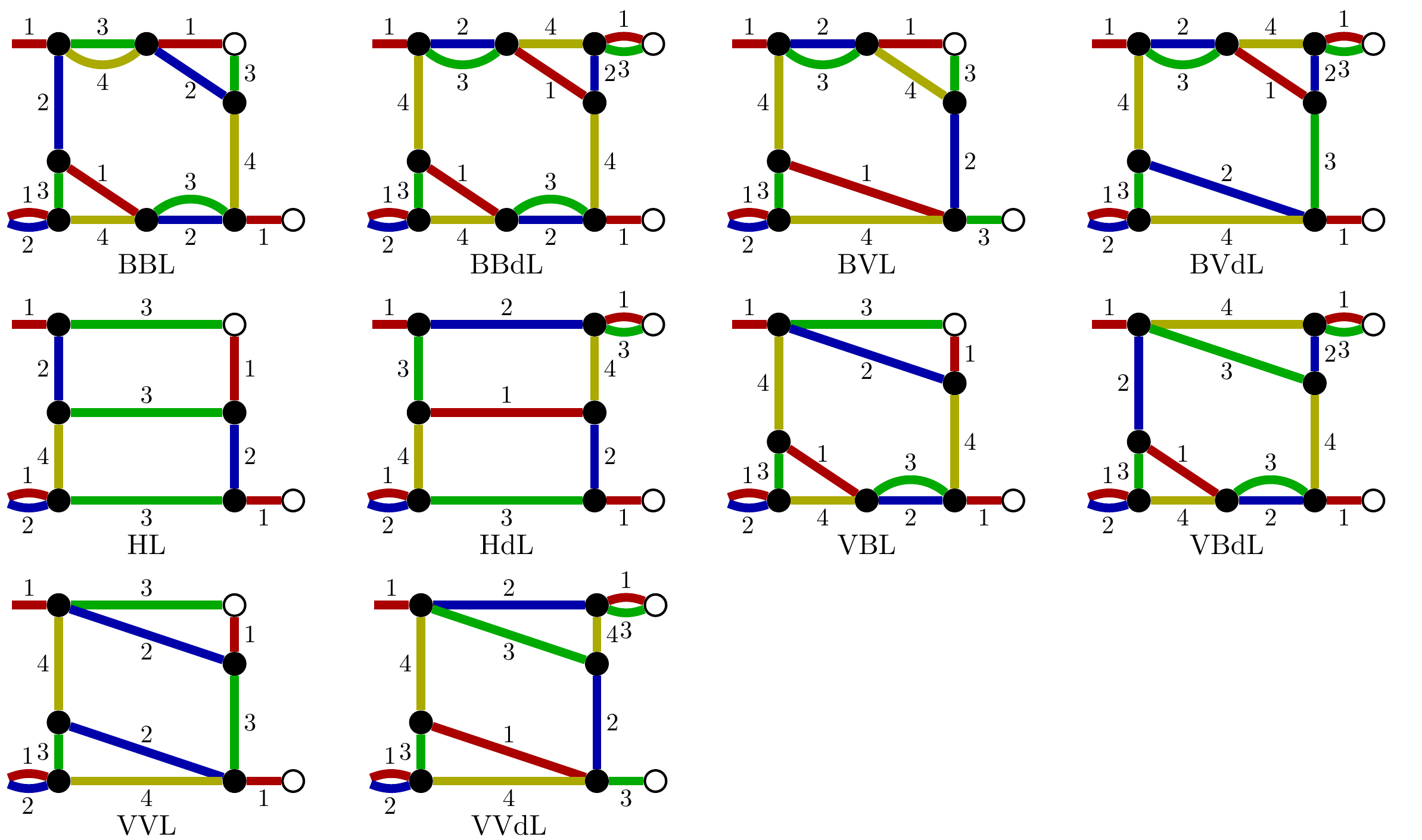}}
 \caption{Edge color propagations $P_w$ for $\pB\pV\fL$ and $\pV\pV\fdL$ and $P_s$ for all others of maximum degree $4$.}
 \label{fig:ec_fourcolors}
\end{figure}

\end{document}